\documentclass{lmcs}
\pdfoutput=1

\usepackage{silence}
\WarningFilter{caption}{Unsupported document class}
\WarningFilter{remreset}{The remreset package is obsolete}

\usepackage{lastpage}
\lmcsdoi{15}{3}{16}
\lmcsheading{}{\pageref{LastPage}}{}{}%
{May~16,~2018}{Aug.~13,~2019}{}

\keywords{weighted automata,
Markov chains,
nested weighted automata,
expected value,
distribution}

\usepackage{amsmath,amsthm}

\newcommand{\proofideas}{\smallskip\noindent{\emph{The key ideas.}}}

\usepackage{amsthm}
\usepackage{amsfonts,graphicx,url,stfloats,subfig}
\usepackage{amssymb,amsmath,color,verbatim,bm}
\usepackage{stmaryrd,multirow}
\usepackage{environ,xspace}
\usepackage{thmtools,thm-restate,wrapfig}
\usepackage{tikz}
\usetikzlibrary{arrows,automata,shapes}

\usetikzlibrary{calc}

\newcommand{\Paragraph}[1]{\noindent{\textbf{#1}}}

\newcommand{\masterA}{\mathcal{A}_{\textrm{mas}}}
\newcommand{\nestedA}{\mathbb{A}}
\newcommand{\slaveA}{{\mathfrak{B}}}
\newcommand{\nonnestedA}{\mathcal{A}}

\newcommand{\masterStates}{Q_{\textrm{mas}}}

\newcommand{\silent}[1]{\mathsf{sil}({#1})} 

\newcommand{\cost}{{C}}

\newcommand{\masterRun}{\Pi}
\newcommand{\slaveRun}{\pi}
\newcommand{\lang}{\mathcal{L}}

\newcommand{\valueL}[1]{\mathcal{L}_{{#1}}}

\newcommand{\abs}{\mathop{\mathsf{Abs}}}

\newcommand{\EXPTIME}{\textsc{ExpTime}{}}
\newcommand{\PTIME}{\textsc{PTime}{}}

\newcommand{\PSPACE}{\textsc{PSpace}{}}
\newcommand{\EXPSPACE}{\textsc{ExpSpace}{}}
\newcommand{\NLOGSPACE}{\textsc{NLogSpace}{}}

\newcommand{\N}{\mathbb{N}}
\newcommand{\Z}{\mathbb{Z}}
\newcommand{\R}{\mathbb{R}}
\newcommand{\Q}{\mathbb{Q}}

\newcommand{\MCfromNested}{\markov^{\nestedA}}
\newcommand{\tuple}[1]{\langle#1\rangle}

\newcommand{\buchi}{B\"{u}chi}

\newcommand{\fsum}{\textsc{Sum}}
\newcommand{\fBsum}[1]{\textsc{Sum}^{#1}}
\newcommand{\fmax}{\textsc{Max}}
\newcommand{\fmin}{\textsc{Min}}
\newcommand{\flimavg}{\textsc{LimAvg}}
\newcommand{\fliminf}{\textsc{LimInf}}
\newcommand{\flimsup}{\textsc{LimSup}}
\newcommand{\fsup}{\textsc{Sup}}
\newcommand{\finf}{\textsc{Inf}}

\newcommand{\aut}{\mathcal{A}}

\newcommand{\Acc}{\mathsf{Acc}}
\newcommand{\const}{\lambda}
\newcommand{\FinVal}{\mathsf{FinVal}}
\newcommand{\InfVal}{\mathsf{InfVal}}

\newcommand{\uncomp}{Uncomputable}

\newcommand{\probability}{\mathbb{P}}
\newcommand{\SAT}{\mathsf{SAT}}
\newcommand{\expected}{\mathbb{E}}
\newcommand{\distrib}{\mathbb{D}}

\newcommand{\markov}{\mathcal{M}}
\newcommand{\run}{\pi}
\newcommand{\weightedRun}{\pi^W}
\newcommand{\calU}{\mathcal{U}}

\newcommand{\M}{\mathbf{M}}

 \usepackage{hyperref}

\newcommand{\lopen}[1]{(#1]} 

\tikzstyle{state}=[draw,circle,minimum size=0.8cm]

\begin{document}

\title{Quantitative Automata under Probabilistic Semantics}

\titlecomment{This is a combined version of~\cite{conferenceVersion} and~\cite{DBLP:conf/sas/ChatterjeeHO16}.}

\author[K.~Chatterjee]{Krishnendu Chatterjee\rsuper{a}}
\author[T.A.~Henzinger]{Thomas A. Henzinger\rsuper{a}}
\address{\lsuper{a}IST Austria}
\email{krish.chat@gmail.com, tah@ist.ac.at}
\author[J.~Otop]{Jan Otop\rsuper{b}}
\address{\lsuper{b}University of Wroc\l{}aw}
\email{jan.otop@uwr.edu.pl}

\maketitle

\begin{abstract}
Automata with monitor counters, where the transitions do not
depend on counter values, and nested weighted automata are
two expressive automata-theoretic frameworks for quantitative
properties.
For a well-studied and wide class of quantitative functions,
we establish that automata with monitor counters and nested weighted
automata are equivalent.
We study for the first time such quantitative automata under
probabilistic semantics.
We show that several problems that are undecidable for the
classical questions of emptiness and universality become
decidable under the probabilistic semantics.
We present a complete picture of decidability for such automata,
and even an almost-complete picture of computational complexity,
for the probabilistic questions we consider.

 \end{abstract}

\section{Introduction}

\noindent{\em Traditional to quantitative verification.}
While traditional formal verification focused on Boolean properties
of systems, such as ``every request is eventually granted'',
recently significant attention has been shifted to quantitative aspects
such as expressing properties like ``the long-run average success rate
of an operation is at least one half'' or ``the long-run average
(or the maximal, or the accumulated) resource consumption is below a threshold.''
Quantitative properties are essential for performance related properties,
for resource-constrained systems, such as embedded systems.

\smallskip\noindent{\em Overview.}
The first natural way to express quantitative properties is to
consider automata with counters.
However, computational analysis of such models quickly leads to
undecidability, and a classical way to limit expressiveness for
decidability is to consider {\em monitor counters}, i.e.,
the counter values do not influence the control.
The second approach is to consider automata with weights
(or weighted automata).
However, weighted automata have limited expressiveness, and
they have been extended as nested weighted automata~\cite{nested}
(nesting of weighted automata) for expressiveness.
We establish that for a well-studied and wide class of quantitative
functions, automata with monitor counters and nested weighted
automata are equivalent, i.e., they represent a robust class of
quantitative specifications.
We study for the first time such quantitative automata under
probabilistic semantics.
Quite surprisingly we show that several problems that are undecidable
for the classical questions of emptiness and universality become
decidable under the probabilistic semantics.
We present a complete picture of decidability for nested weighted
automata and automata with monitor counters under probabilistic semantics.

\smallskip\noindent{\em Automata with monitor counters.}
Automata with monitor counters are natural extension of weighted automata, where
automata are equipped with integer-valued counters.
At each transition, a counter can be started, terminated, or the value
of the counter can be increased or decreased.
However, the transitions
do not depend on the counter values, and hence they are referred to as
monitor counters.
The values of the counters when they are terminated give rise to
a sequence of \emph{weights}.
A value function aggregates the sequence into a single value.
For example, for words over $\{a,\#\}$, such automata can express
the maximal length of block of $a$'s that appear infinitely often.
Automata with monitor counters are similar in spirit with the class
of \emph{cost register automata}~\cite{DBLP:conf/lics/AlurDDRY13}, and
we consider them over infinite words.

\smallskip\noindent{\em Weighted automata.}
Weighted automata extend finite automata where every transition is
assigned an integer called a weight.
Hence every run gives rise to a sequence of weights, which is
aggregated into a single value by a value function.
For non-deterministic weighted automata, the value of a word
$w$ is the infimum value of all runs over~$w$.
Weighted automata provide a natural and flexible framework for
expressing quantitative\footnote{We use the term ``quantitative'' in a
non-probabilistic sense, which assigns a quantitative value to each
infinite run of a system, representing long-run average or maximal
response time, or power consumption, or the like, rather than taking a
probabilistic average over different runs.}
properties~\cite{Chatterjee08quantitativelanguages}.
First, weighted automata were studied over finite words with weights
from a semiring, and ring multiplication as a value function~\cite{Droste:2009:HWA:1667106},
and later extended to infinite words with limit averaging or supremum as
value functions~\cite{Chatterjee08quantitativelanguages,DBLP:journals/corr/abs-1007-4018,Chatterjee:2009:AWA:1789494.1789497}.
While weighted automata over semirings can express several
quantitative properties~\cite{DBLP:journals/jalc/Mohri02}, they cannot
express long-run average properties that weighted automata with limit
averaging can~\cite{Chatterjee08quantitativelanguages}.
However, even weighted automata with limit averaging cannot express
the following basic quantitative property (the example is from~\cite{nested}).

\begin{exa}\label{ex:intro}
Consider infinite words over $\{r,g,i\}$, where $r$ represents
requests, $g$ represents grants, and $i$ represents idle. A
basic and interesting property is the average number of $i$'s and $r$'s
between a request and the corresponding grant, which represents the
long-run average response time of the system.
\end{exa}

\smallskip\noindent{\em Nested weighted automata.}
To enrich expressiveness, weighted automata were extended
to \emph{nested weighted automata (NWA)}~\cite{nested}.
A nested weighted automaton consists of a master automaton and a set
of slave automata. The master automaton runs over infinite input words.
At every transition the master automaton invokes a slave automaton that runs
over a finite subword of the infinite word, starting at the position where
the slave automaton is invoked.
Each slave automaton terminates after a finite number of steps and returns
a value to the master automaton.
Slave automata are equipped with finite-word value functions to compute the returned values, which are then
aggregated by the master automaton using an infinite-word value function.
For Boolean finite automata, nested automata are equivalent to the non-nested
counterpart, whereas nested weighted automata are strictly more expressive
than non-nested weighted automata~\cite{nested}, for example,
nested weighted automata can express the long-run average response
time property (see~\cite[Example~5]{nested}).
It has been shown in~\cite{nested} that nested weighted automata provide a
specification framework where many basic quantitative properties,
which cannot be expressed by weighted automata, can be expressed easily,
and they provide a natural framework to study quantitative run-time
verification.

\smallskip\noindent\emph{Classical questions.}
Classical questions for automata are \emph{emptiness} and \emph{universality}
that ask for the existence and respectively non-existence of words that are accepted.
Their natural extensions have been studied in the quantitative setting as
well (such as for weighted automata and NWA)~\cite{Chatterjee08quantitativelanguages,nested}.

\smallskip\noindent{\em Motivation for probabilistic questions.}
One of the key reasons for quantitative specifications is to express performance
related properties.
While the classical emptiness and universality questions express the best-case/worst-case scenarios (such as the best-case/worst-case trace of a system for average response time),
they cannot express the average case average response time, where
the average case corresponds to the expected value over all traces.
Performance related properties are of prime interest for probabilistic systems,
and quite surprisingly, quantitative automata have not been studied in a
probabilistic setting, which we consider in this work.

\smallskip\noindent{\em Probabilistic questions.}
Weighted automata and their extensions as nested weighted automata, or automata
with monitor counters represent measurable functions from infinite words to
real numbers.
We consider probability distribution over infinite words, and as a finite
representation for probability spaces we consider the classical model of
finite-state Markov chains.
A stochastic environment is often modeled as a Markov chain~\cite{probabilisticMeasuriung}.
Hence, the theoretical problems we consider correspond to measuring performance (expectation or cumulative distribution) under such stochastic environments, when the specification is a nested weighted automaton.
Moreover, Markov chains are a canonical model for probabilistic systems~\cite{PRISM,BaierBook}.
Given a measurable function (or equivalently a random variable), the classical
quantities w.r.t.\ a probability distribution are: (a)~the expected value; and
(b)~the cumulative distribution below a threshold.
We consider the computation of the above quantities when the function is given
by a nested weighted automaton or an automaton with monitor counters, and the
probability distribution is given by a finite-state Markov chain.
We also consider the approximate variants that ask to approximate the above quantities
within a tolerance term $\epsilon>0$.
Moreover, for the cumulative distribution we consider the special case of
\emph{almost-sure} distribution, which asks whether the probability in the distribution question is exactly~$1$.

\smallskip\noindent{\em Our contributions.}
In this work we consider several classical value functions,
namely, $\fsup$, $\finf$, $\flimsup$, $\fliminf$, $\flimavg$ for infinite words,
and $\fmax$, $\fmin$, $\fsum$, $\fBsum{B}$, $\fsum^+$ (where $\fBsum{B}$ is the sum bounded by $B$,
and $\fsum^+$ is the sum of absolute values)
for finite words.
First, we establish translations (in both directions) between automata
with monitor counters and a subclass of nested weighted automata, called bounded-width nested weighted automata~\cite{nwa-mfcs},
where at any point only a bounded number of slave automata can be active.
However, in general, in nested weighted automata unbounded number of slave
automata can be active.
We describe our main results for nested weighted automata.
\begin{itemize}
\item {\em $\flimsup$ and $\fliminf$ functions.}
We consider deterministic nested weighted automata with $\flimsup$ and $\fliminf$
functions for the master automaton, and show that for all value functions for
finite words that we consider, all probabilistic questions can be answered
in polynomial time.
This is in contrast with the classical questions, where the problems are
$\PSPACE$-complete or undecidable (see Remark~\ref{remark:LimInf-classical-vs-probabilistic} for further details).

\item {\em $\flimavg$ function.}
We consider deterministic nested weighted automata with $\flimavg$
function for the master automaton, and show that for all value functions
for finite words that we consider, all probabilistic questions can be answered
in polynomial time.
Again our results are in contrast to the classical questions (see Remark~\ref{remark:LimAvg-classical-vs-probabilistic}).

\item {\em $\finf$ and $\fsup$ functions.}
We consider deterministic nested weighted automata with $\fsup$ and $\finf$
functions for the master automaton, and show the following:
the approximation problems for all value functions for finite words that we
consider are $\#P$-hard and can be computed in exponential time;
other than the $\fsum$ function, the expected value, the distribution, and the
almost-sure problems are $\PSPACE$-hard and can be solved in $\EXPTIME$;
and for the $\fsum$ function, the above problems are uncomputable.
Again we establish a sharp contrast w.r.t.\ the classical questions
as follows: for the classical questions, the complexity of $\flimsup$ and $\fsup$
functions always coincide, whereas we show a substantial complexity gap for
probabilistic questions (see Remark~\ref{remark:Inf-classical-vs-probabilistic} and Remark~\ref{remark:LimInf-vs-Inf} for
further details).

\item {\em Non-deterministic automata.}
For non-deterministic automata we show two results: first we present an
example to illustrate the conceptual difficulty of evaluating a non-deterministic
(even non-nested) weighted automaton with respect to a Markov chain, and also show that
for nested weighted automata with $\flimsup$ value function for
the master automaton and $\fsum$ value function for slave automata,
all probabilistic questions are undecidable (in contrast to the
deterministic case where we present polynomial-time algorithms).
\end{itemize}

\noindent
Note that from above all decidability results we establish carry over to
automata with monitor counters, and we show that all our undecidability
(or uncomputability) results also hold for automata with monitor counters.
Decidability results for nested weighted automata are more interesting as
compared to automata with monitor counters because in NWA unbounded number of slaves can be active.
Our results are summarized in Theorem~\ref{th:compLimInf} (in Section~\ref{s:liminf}),
Table~\ref{tab:compInf} (in Section~\ref{s:inf}), and Theorem~\ref{th:compLimAvg} (in Section~\ref{s:limavg}).
In summary, we present a complete picture of decidability of the basic
probabilistic questions for nested weighted automata (and automata with
monitor counters).

\smallskip\noindent{\em Technical contributions.}
We call a nested weighted automaton $\nestedA$, an \emph{$(f;g)$-automaton} if its master-automaton
value function is $f$ and the value function of all slave automata is $g$.
We present the key details of our main technical contributions, and for sake of simplicity here explain for the case of
the uniform distribution over infinite words.
Our technical results are more general though (for distributions given by Markov chains).

\begin{itemize}
\item We show that for a deterministic $(\fliminf;\fsum)$-automaton $\nestedA$, whose master automaton is strongly connected as a graph,
almost all words have the same value which, is the infimum over values of any slave automaton from $\nestedA$ over all finite words.

\item We show that the expected value of a deterministic $(\flimavg;\fsum)$-automaton $\nestedA$
coincides with the expected value of the following deterministic (non-nested)
$\flimavg$-automaton $\nonnestedA$.
The automaton $\nonnestedA$ is obtained from $\nestedA$ by replacing in every transition
an invocation of a slave automaton $\slaveA$ by the weight equal to the expected value of $\slaveA$.

\item For a deterministic $(\finf;\fsum)$-automaton $\nestedA$ and $C>0$ we define $\nestedA^C$ as
the deterministic $(\finf;\fsum)$-automaton obtained from $\nestedA$ by stopping every slave automaton if it exceeds $C$ steps.
We show that for every deterministic $(\finf;\fsum)$-automaton $\nestedA$ and $\epsilon >0$, there exists $C$ exponential in $|\nestedA|$
and polynomial in $\epsilon$ such that the expected values of $\nestedA$ and $\nestedA^C$ differ by at most $\epsilon$.
\end{itemize}

\noindent
This paper is an extended and corrected version of~\cite{conferenceVersion,DBLP:conf/sas/ChatterjeeHO16}. We present detailed proofs, which
could not be published in~\cite{conferenceVersion} due to space constrains.
The main corrections over~\cite{conferenceVersion} are: Table~\ref{tab1} and
Theorem~10 (Theorem~\ref{th:undecidable-limsup} in this paper).
These flaws in~\cite{conferenceVersion} are consequences of a false claim about duality
between deterministic $(\finf;g)$-automata (resp., $(\fliminf;g)$-automata) and deterministic
 $(\fsup;-g)$-automata (resp. ($\flimsup;-g)$-automata).
This duality indeed holds for non-deterministic NWA or deterministic NWA that accept all words (or almost all words for
 probabilistic questions).
However, it does not extend to all deterministic NWA (see~\cite{nested} and Remark~\ref{rem:duality}).

Moreover, we discuss extensions of our main results in Section~\ref{s:discussion}, which is a new contribution.
We consider there (1)~the case of NWA that do not accept almost all words, (2)~the probabilistic variant of the quantitative inclusion problem for NWA, and
(3)~the parametric complexity of the probabilistic questions, in which we fix the NWA and ask for the complexity w.r.t.\ the Markov chain.
The parametric complexity corresponds to evaluation of a fixed specification (for example average response time from Example~\ref{ex:intro})
represented by an NWA  on a system represented by a Markov chain.
Finally, we elaborate on translations between NWA and automata with monitor counters discussed in~\cite{conferenceVersion,DBLP:conf/sas/ChatterjeeHO16}.

\smallskip\noindent{\em Related works.}
Quantitative automata and logic have been extensively and intensively
studied in recent years.
The book~\cite{Droste:2009:HWA:1667106} presents an excellent collection of results
of weighted automata on finite words.
Weighted automata on infinite words have been studied in~\cite{Chatterjee08quantitativelanguages,DBLP:journals/corr/abs-1007-4018,DrosteR06}.
The extension to weighted automata with monitor counters over finite words has been considered (under the name of
cost register automata) in~\cite{DBLP:conf/lics/AlurDDRY13}.
A version of nested weighted automata over finite words has been
studied in~\cite{bollig2010pebble}, and nested weighted automata over
infinite words have been studied in~\cite{nested}.
Several quantitative logics have also been studied, such as~\cite{BokerCHK14,BouyerMM14,AlmagorBK14}.
While a substantial work has been done for quantitative automata and logics, quite surprisingly
none of the above works consider the automata (or the logic) under probabilistic semantics that
we consider in this work.
Probabilistic models (such as Markov decision processes) with quantitative properties
(such as limit-average or discounted-sum) have also been extensively studied for
single objectives~\cite{filar,Puterman}, and for multiple objectives and their
combinations~\cite{CMH06,Cha07,CFW13,BBCFK11,CKK15,Forejt,FKN11,CD11,Baier-CSL-LICS-1,Baier-CSL-LICS-2}.
However, these works do not consider properties that are expressible by nested weighted
automata (such as average response time) or automata with monitor counters.

\section{Preliminaries}
\Paragraph{Words}.
We consider a finite \emph{alphabet} of letters $\Sigma$.
A \emph{word} over $\Sigma$ is a (finite or infinite) sequence of letters from $\Sigma$.
We denote the $i$-th letter of a word $w$ by $w[i]$.
The length of a finite word $w$ is denoted by $|w|$; and the length of an infinite word
$w$ is $|w| = \infty$.

\smallskip
\Paragraph{Labeled automata}. For a set $X$, an \emph{$X$-labeled automaton} $\aut$ is a tuple
$\tuple{\Sigma, Q, Q_0, \delta, F, \cost}$, where
(1)~$\Sigma$ is the alphabet,
(2)~$Q$ is a finite set of states,
(3)~$Q_0 \subseteq Q$ is the set of initial states,
(4)~$\delta \subseteq Q \times \Sigma \times Q$ is a transition relation,
(5)~$F$ is the set of accepting states,
and
(6)~$\cost : \delta \mapsto X$ is a labeling function.
A labeled automaton $\tuple{\Sigma, Q, q_0, \delta, F, \cost}$ is
\emph{deterministic} if and only if
$\delta$ is a function from $Q \times \Sigma$ into $Q$
and $Q_0$ is a singleton.
In definitions of deterministic labeled automata we omit curly brackets in the description of $Q_0 = \{ q_0\}$
and write $\tuple{\Sigma, Q, q_0, \delta, F, \cost}$.

\smallskip
\Paragraph{Semantics of (labeled) automata}.
A \emph{run} $\run$ of a (labeled) automaton $\aut$ on a word $w$ is a sequence of states
of $\aut$ of length $|w|+1$
such that $\run[0]$ belongs to the initial states of $\aut$
and for every $0 \leq i \leq |w|-1$ we have $(\pi[i], w[i], \pi[i+1])$  is a transition of $\aut$.
A run $\pi$ on a finite word $w$ is \emph{accepting} if and only if the last state $\pi[|w|]$ of the run
is an accepting state of $\aut$.
A run $\pi$ on an infinite word $w$ is \emph{accepting} if and only if  some accepting state of $\aut$ occurs
infinitely often in $\pi$.
For an automaton $\aut$ and a word $w$, we define $\Acc(w)$ as the set of accepting runs on $w$.
Note that for deterministic automata, every word $w$ has at most one accepting run ($|\Acc(w)| \leq 1$).

\smallskip
\Paragraph{Weighted automata and their semantics}.
A \emph{weighted automaton} is a $\Z$-labeled automaton, where $\Z$ is the set of integers.
The labels are called \emph{weights}.
We assume that weights are given in the unary notation, and, hence,
the values of weights are linearly bounded in the size of weighted automata.

We define the semantics of weighted automata in two steps. First, we define the value of a
run. Second, we define the value of a word based on the values of its runs.
To define values of runs, we will consider  \emph{value functions} $f$ that
assign real numbers to sequences of integers.
Given a non-empty word $w$, every run $\pi$ of $\aut$ on $w$ defines a sequence of weights
of successive transitions of $\aut$, i.e.,
$\cost(\pi)={(\cost(\pi[i-1], w[i], \pi[i]))}_{1\leq i \leq |w|}$;
and the value $f(\pi)$ of the run $\pi$ is defined as $f(\cost(\pi))$.
We denote by $(\cost(\pi))[i]$ the weight of the $i$-th transition,
i.e., $\cost(\pi[i-1], w[i], \pi[i])$.
The value of a non-empty word $w$ assigned by the automaton $\aut$, denoted by  $\valueL{\aut}(w)$,
is the infimum of the set of values of all \emph{accepting} runs;
i.e., $\inf_{\pi \in \Acc(w)} f(\pi)$, and we have the usual semantics that infimum of an
empty set is infinite, i.e., the value of a word that has no accepting runs is infinite.
Every run $\pi$ on an empty word has length $1$ and the sequence $\cost(\pi)$ is empty, hence
we define the value $f(\pi)$ as an external (not a real number) value $\bot$.
Thus, the value of the empty word is either $\bot$, if the empty word is accepted by $\aut$, or $\infty$
otherwise.
To indicate a particular value function $f$ that defines the semantics,
we will call a weighted automaton $\aut$ an $f$-automaton.

\smallskip
\Paragraph{Value functions}.
We will consider the classical functions and their natural variants for
value functions.
For finite runs we consider the following value functions: for runs of length $n+1$ we have
\begin{enumerate}
\item \emph{Min and max}: $\fmin(\pi) = \min_{i=1}^n (\cost(\pi))[i]$ and $\fmax(\pi) = \max_{i=1}^n (\cost(\pi))[i]$,
\item \emph{Sum}: $\fsum(\pi) = \sum_{i=1}^{n} (\cost(\pi))[i]$,
\item \emph{Absolute sum}: $\fsum^+(\pi) = \sum_{i=1}^{n} \abs((\cost(\pi))[i])$
is the sum of the absolute values of the weights ($\abs$ denotes the
absolute value of a number),
and
\item \emph{Bounded sum}: $\fBsum{B}(\pi) = \fsum(\pi)$,
if for all prefixes $\pi'$ of $\pi$ we have $\abs(\fsum(\pi')) \leq B$,
otherwise $\fBsum{B}(\pi)$ is equal to first crossed bound $-B$ or $B$, i.e.,
the bounded sum value function returns the sum if all the
partial absolute sums are below a bound $B$, otherwise it returns the first crossed bound.
Weighted automata with the bounded-sum value function can model bounded quantities such as energy with the lower and the upper bound~\cite{DBLP:journals/acta/BouyerMRLL18}.
\end{enumerate}

\noindent
We denote the above class of value functions for finite words as
\[\FinVal=\{\fmax,\fmin,\fBsum{B},\fsum\}.\]

\noindent
For infinite runs we consider:
\begin{enumerate}
\item \emph{Supremum and Infimum}: $\fsup(\pi) = \sup \{ (\cost(\pi))[i] \mid i > 0 \}$ and
 $\finf(\pi) = \inf \{ (\cost(\pi))[i] \mid i > 0 \}$,
 \item \emph{Limit supremum and Limit infimum}:
 $\flimsup(\pi) = \lim\sup \{ (\cost(\pi))[i] \mid i > 0\}$, and
 $\fliminf(\pi) = \lim\inf \{  (\cost(\pi))[i] \mid i > 0 \}$, and
\item \emph{Limit average}: $\flimavg(\pi) = \limsup\limits_{k \rightarrow \infty} \frac{1}{k} \cdot \sum_{i=1}^{k} (\cost(\pi))[i]$.
\end{enumerate}

\noindent
We denote the above class of infinite-word value functions as
\[\InfVal=\{\fsup,\finf,\flimsup,\fliminf,\flimavg\}.\]

\smallskip
\Paragraph{Silent moves}. Consider a $(\Z \cup \{ \bot\})$-labeled automaton.
We regard such an automaton as an extension
of a weighted automaton in which transitions labeled by $\bot$ are \emph{silent}, i.e., they do not contribute to
the value of a run. Formally, for every function $f \in \InfVal$ we define
$\silent{f}$ as the value function that applies $f$ on sequences after removing $\bot$ symbols.
The significance of silent moves is as follows: they allow to ignore transitions, and thus provide
robustness where properties could be specified based on desired events rather than steps.

\section{Extensions of weighted automata}
\newcommand{\autDiff}{\aut_{\textrm{diff}}}
In this section we consider two extensions of weighted automata,
namely, automata with monitor counters and nested weighted automata.

\subsection{Automata with monitor counters}
Intuitively, automata with monitor counters are an extension of weighted automata
with counters, where the transitions do not depend on values of counters.
We define them formally below.

\Paragraph{Automata with monitor counters.}
\newcommand{\counterA}{\mathcal{A}^{\textrm{m-c}}}
An \emph{automaton with $n$ monitor counters} $\counterA$ is a tuple $\tuple{ \Sigma, Q, Q_0, \delta, F}$  where
\begin{enumerate}
\item $\Sigma$ is the alphabet,
\item $Q$ is a finite set of states, and $Q_0 \subseteq Q$ is the set of initial states,
\item $\delta$ is a finite subset of  $Q \times \Sigma \times Q \times {(\Z \cup \{ s,t \})}^n$ called a transition relation,
(each component refers to one monitor counter, where letters $s,t$ refer to starting  and terminating the  counter,
respectively, and the value from $\Z$ is the value that is added to the counter), and
\item $F$ is the set of accepting states.
\end{enumerate}
Moreover, we assume that for every $(q,a,q',\vec{u}) \in \delta$, at most one component in $\vec{u}$ contains $s$, i.e.,
at most one counter is started at each position.
Intuitively, the automaton $\counterA$ is equipped with $n$ counters.
The transitions of $\counterA$ do not depend on the values of counters (hence, we call them monitor counters); and
every transition is of the form $(q,a,q',\vec{v})$, which means that
if $\counterA$ is in the state $q$ and the current letter is $a$, then
it can move to the state $q'$ and update counters according to $v$.
Each counter is initially inactive.
It is started by the instruction $s$, and
it changes its value at every step by adding the value of the corresponding component of $v$,
until termination $t$.
The value of the counter at the time it terminates is then assigned to the position where it has been started.
An automaton with monitor counters $\counterA$ is \emph{deterministic} if and only if $Q_0$ is a singleton and $\delta$ is a function
from $Q \times \Sigma$ into $Q \times {(\Z \cup \{ s,t \})}^n$.

\Paragraph{Semantics of automata with monitor counters.}
A sequence $\run = \pi[0] \pi[1] \ldots$ of elements from $Q \times {(\Z \times \{\bot\})}^n$ is a \emph{run} of $\counterA$ on a word $w = w[1] w[2] \ldots$
if
\begin{enumerate}
\item $\run[0] = \tuple{q_0, \vec{\bot}}$ and $q_0 \in Q_0$, and
\item for every $i > 0$, if $\run[i-1] = \tuple{q,\vec{u}}$
and $\run[i] = \tuple{q', \vec{u}'}$ then $\counterA$ has a transition
$(q,w[i],q',\vec{v})$ and for every $j \in [1,n]$ we have
\begin{enumerate}
\item if $v[j] = s$, then $u[j] = \bot$ and $u'[j] = 0$,
\item if $v[j] = t$, then $u[j] \in \Z$ and $u'[j] = \bot$, and
\item if $v[j] \in \Z$, then $u'[j] = u[j] + v[j]$.
\end{enumerate}
\end{enumerate}
A run $\run$ is \emph{accepting} if some state from $F$ occurs infinitely often on the first component of $\run$,
some counter is started infinitely often, and every started counter is finally terminated.
An accepting run $\run$ defines a sequence $\weightedRun$
of integers and $\bot$ as follows: let the counter started at position $i$ be $j$, and
let the value of the counter $j$ terminated at the earliest position after $i$ be $x_j$,
then $\weightedRun[i]$ is $x_j$.
Otherwise, if no counter has been started at position $i$, we define $\weightedRun[i] = \bot$,
Observe that for an accepting $\run$, the sequence $\weightedRun$ contains infinitely positions with integer values.
The semantics of automata with monitor counters is given, similarly to weighted automata,
by applying the value function to the sequence $\weightedRun$ with $\bot$ elements removed.

\begin{rem}
Automata with monitor counters are very similar in spirit to \emph{cost register automata} considered
in~\cite{DBLP:conf/lics/AlurDDRY13}. The key difference is that we consider infinite words and value functions
associated with them, whereas previous works consider finite words.
Another key difference is that in this work we will consider probabilistic semantics, and
such semantics has not be considered for cost register automata before.
\end{rem}

\begin{exa}[Blocks difference]%
\label{ex:AMC}
Consider an alphabet $\Sigma = \{a,\#\}$ and the language $\lang$ defined as ${(\#^2 a^* \# a^* \#)}^{\omega}$.
We consider a quantitative property ``the maximal block-length difference between odd and even positions'' on the words from the language $\lang$, i.e.,
the value of word $\#^2 a^{n[1]} \# a^{n[2]} \#^3 \ldots $ is $\sup_{0 \leq i} |n[2\cdot i+1] - n[2\cdot i+2]|$.
This property can be expressed by a $\fsup$-automaton $\autDiff$ with two monitor counters depicted in Figure~\ref{fig:autDiff}.

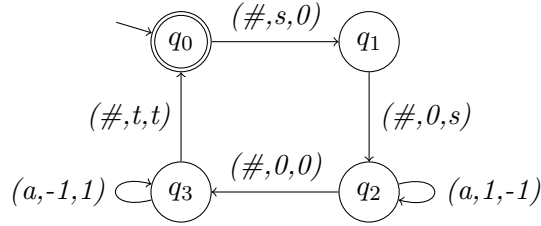
\begin{figure}
\begin{center}
\begin{tikzpicture}[initial text={}]
\newcommand{\x}{2.5}
\node[state,accepting,initial] (q1) at (0,0) {$q_0$};
\node[state] (q2) at (\x,0) {$q_1$};
\node[state] (q5) at (\x,-2) {$q_2$};
\node[state] (q6) at (0,-2) {$q_3$};

\draw[->] (q1) to node[above]  {$(\#,s,0)$} (q2);
\draw[->] (q2) to node[right]  {$(\#,0,s)$} (q5);
\draw[->, loop right] (q5) to node[right]  {$(a,1,-1)$} (q5);
\draw[->] (q5) to node[above]  {$(\#,0,0)$} (q6);
\draw[->, loop left] (q6) to node[left]  {$(a,-1,1)$} (q6);
\draw[->] (q6) to node[left]  {$(\#,t,t)$} (q1);
\end{tikzpicture}
\end{center}
\caption{The automaton $\autDiff$ computing the maximal difference between the lengths of blocks of $a$'s at odd and the following even positions.}%
\label{fig:autDiff}
\end{figure}

The automaton $\autDiff$ has a single initial state $q_0$, which is also the only accepting state.
It processes the word $w$ in subwords $\#^2 a^{k} \# a^{m} \#$ in the following way.
First, it reads $\#^2$ upon which it takes transitions from $q_0$ to $q_1$ and from $q_1$ to $q_2$, where it  starts counters $1$ and $2$.
Next, it moves to the state $q_2$ where it counts letters $a$ incrementing counter $1$ and decrementing counter $2$.
Then, upon reading $\#$, it moves to $q_3$, where it counts letters $a$, but it decrements counter $1$ and increments counter $2$.
After reading $\#^2 a^{k} \# a^{m}$ the value of counter $1$ is $k-m$ and counter $2$ is $m-k$.
In the following transition from $q_3$ to $q_0$, the automaton terminates both counters.
The aggregating function  of $\autDiff$ is $\fsup$, thus the automaton discards the lower value, i.e.,
the value of $\#^2 a^{k} \# a^{m} \#$ is $|k-m|$ and the automaton computes the supremum over values of all blocks.
It follows that the value of $\#^2 a^{n[1]} \# a^{n[2]} \#^3 \ldots $ is $\sup_{0 \leq i} |n[2\cdot i+1] - n[2\cdot i+2]|$.
\end{exa}

\subsection{Nested weighted automata}
In this section we describe nested weighted automata introduced in~\cite{nested},
and closely follow the description of~\cite{nested}.
For more details and illustrations of such automata we refer the reader
to~\cite{nested}.
We start with an informal description.

\smallskip
\noindent{\em Informal description.}
A \emph{nested weighted automaton} consists of a labeled automaton over infinite words,
called the \emph{master automaton}, a value function $f$ for infinite words,
and a set of weighted automata over finite words, called \emph{slave automata}.
A nested weighted automaton can be viewed as follows:
given a word, we consider the run of the master automaton on the word,
but the weight of each transition is determined by dynamically running
slave automata; and then the value of a run is obtained using the
value function $f$.
That is, the master automaton proceeds on an input word as an usual automaton,
except that before it takes a transition, it starts a slave automaton
corresponding to the label of the current transition.
The slave automaton starts at the current position in the word of the master automaton
and works on some finite part of the input word. Once the slave automaton finishes,
it returns its value to the master automaton, which treats the returned
value as the weight of the current transition that is being executed.
The slave automaton might immediately accept and return value $\bot$,
which corresponds to \emph{silent} transitions.
If one of slave automata rejects, the nested weighted automaton rejects.
We define this formally as follows.

\smallskip

\Paragraph{Nested weighted automata}.
A \emph{nested weighted automaton} (NWA) $\nestedA$ is a tuple
\[
    \tuple{\masterA; f; \slaveA_1, \ldots, \slaveA_k}
\]
such that

\begin{enumerate}
\item $\masterA$, called the \emph{master automaton}, is a $\{1, \ldots, k\}$-labeled automaton over infinite words
(the labels are the indexes of automata $\slaveA_1,  \ldots, \slaveA_k$),
\item $f$ is a value function on infinite words, called the \emph{master value function}, and
\item $\slaveA_1, \ldots, \slaveA_k$ are weighted automata over finite words called \emph{slave automata}.
\end{enumerate}
\smallskip

\noindent Intuitively, an NWA can be regarded as an $f$-automaton whose weights are dynamically computed at every step by the corresponding slave automaton.
We define an \emph{$(f;g)$-automaton} as an NWA where the master value function is $f$ and all slave automata are $g$-automata.

\smallskip
\Paragraph{Semantics: runs and values}.
A \emph{run} of an NWA $\nestedA$ on an infinite word $w$ is an infinite sequence
$(\masterRun, \slaveRun_1, \slaveRun_2, \ldots)$ such that
(1)~$\masterRun = \masterRun[0] \masterRun[1] \ldots$ is a run of $\masterA$ on $w = w[1] w[2] \ldots$;
(2)~for every $i>0$ we have $\slaveRun_i$ is a run of the automaton $\slaveA_{\cost(\masterRun[i-1], w[i], \masterRun[i])}$,
referenced by the label $\cost(\masterRun[i-1], w[i], \masterRun[i])$ of the master automaton, on some finite subword $w[i,j]$ of the input word $w$.
The run $(\masterRun, \slaveRun_1, \slaveRun_2, \ldots)$ is \emph{accepting} if all
runs $\masterRun, \slaveRun_1,  \slaveRun_2, \ldots$ are accepting (i.e., $\masterRun$ satisfies its acceptance
condition and each $\slaveRun_1,\slaveRun_2, \ldots$ ends in an accepting state)
and infinitely many runs of slave automata have length greater than $1$ (the master automaton takes infinitely many non-silent transitions).
The value of the run $(\masterRun, \slaveRun_1, \slaveRun_2, \ldots)$ is defined as
$\silent{f}( v(\pi_1) v(\pi_2) \ldots)$, where $v(\pi_i)$ is the value of the run $\pi_i$ in
the corresponding slave automaton.
The value of a word $w$ assigned by the automaton $\nestedA$, denoted by
$\valueL{\nestedA}(w)$, is the infimum of the set of values of all \emph{accepting} runs.
We require accepting runs to contain infinitely many non-silent transitions because
$f$ is a value function over infinite sequences, so we need
the sequence $v(\pi_1) v(\pi_2) \ldots$ with $\bot$ symbols removed to be infinite.

\smallskip
\Paragraph{Deterministic nested weighted automata}. An NWA $\nestedA$ is \emph{deterministic} if (1)~the master automaton
and all slave automata are deterministic, and (2)~in all slave automata
accepting states have no outgoing transitions.
Condition (2) implies that no accepting run of a slave automaton visits an accepting state twice.
Intuitively, slave automata have to accept the first time they encounter an accepting state as
they will not reach an accepting state again.

\smallskip
\Paragraph{Bounded width.} An NWA has \emph{width} $k$ if and only if $k$ is the minimal number such that
in every accepting run at every
position at most $k$ slave automata are active.

\begin{exa}[Average response time with bounded requests]%
\label{ex:NWA}
Consider an alphabet $\Sigma$ consisting of requests $r$, grants $g$ and idle instructions $i$.
The average response time (ART) property asks for the average number of instructions between
any request and the following grant. It has been shown in~\cite{nested} that NWA can express ART\@.
However, the automaton from~\cite{nested} does not have bounded width.
To express the ART property with NWA of bounded width we consider only words such that between any two grants there are at most $k$ requests.

Average response time over words where between any two grants there are at most $k$ requests can be expressed by a $(\flimavg;\fsum)$-automaton
$\nestedA$. Such an automaton $\nestedA = (\masterA; \flimavg; \slaveA_1, \slaveA_2)$ is depicted in Fig.~\ref{fig:ART-k}.
The master automaton of $\nestedA$ accepts only words with  infinite number of requests and grants, where every grant is followed by a request and
there are at most $k$ requests between any two grants.
On letters $i$ and $g$, the master automaton invokes a dummy automaton $\slaveA_1$, which immediately accepts;
the result of invoking such an automaton is equivalent to taking a silent transition as
the automaton $\slaveA_1$ returns $\bot$, the empty value.
On letters $r$, denoting requests, the master automaton invokes $\slaveA_2$, which
counts the number of letters to the first occurrence of letter $g$, i.e.,
the automaton $\slaveA_2$ computes the response time for the request on the position it is invoked.
The automaton $\nestedA$ computes the limit average of all returned values, which is precisely
ART (on the accepted words).
Note that the width of $\nestedA$ is $k$.

\begin{figure}
\centering
\begin{tikzpicture}
\newcommand{\x}{2.0}
\node[state,accepting] (q0) at (-1,0) {$q_0$};
\node[state] (q1) at (\x,0) {$q_1$};
\node[state, inner sep =1pt] (q2) at (2*\x,0) {$q_{k-1}$};
\node[state] (q3) at (3*\x,0) {$q_k$};

\node at (1.5*\x,0) {$\ldots$};

\draw[->, loop above] (q1) to node[above] (e2) {$(i,1)$} (q1);
\draw[->, loop above] (q2) to node[above] (e3) {$(i,1)$} (q2);
\draw[->, loop above] (q3) to node[above] (e4) {$(i,1)$} (q3);

\draw[->,bend left=10] (q1) to node[below] (e5) {$(g,1)$} (q0);
\draw[->,bend left=45] (q2) to node[below] (e6) {$(g,1)$} (q0);
\draw[->,bend left=65] (q3) to node[below] (e7) {$(g,1)$} (q0);

\draw[->] (q0) to node[above] (e8) {$(r,2)$} (q1);
\draw[->] (q2) to node[above] (e9) {$(r,2)$} (q3);

\node at (5.5,-1.5) {$\masterA$};

\begin{scope}[yshift=3cm]
\node[state,accepting] (q00) at (-1,0) {$q_0^1$};

\node (A1) at (-1,-0.8) {$\slaveA_1$};

\node[state] (q10) at (4,0) {$q_0^2$};
\node[state,accepting] (q11) at (6,0) {$q_1^2$};

\node (A2) at (5,-0.8) {$\slaveA_2$};

\draw[->] (q10) to node[above] {$(g,0)$} (q11);
\draw[->, loop above] (q10) to node[above] {$(i,1)$} (q10);
\draw[->, loop left] (q10) to node[left] {$(r,1)$} (q10);
\end{scope}

\draw[->, dashed] (e2) to (A1);
\draw[->, dashed] (e3) to (A1);
\draw[->, dashed] (e4) to (A1);
\draw[->, dashed, bend left = 15] (e5) to (A1);
\draw[->, dashed, bend left = 25] (e6) to (A1);
\draw[->, dashed, bend left = 35] (e7) to (A1);

\draw[->, dotted] (e8) to (A2);
\draw[->, dotted, bend left = 30] (e9) to (A2);

\end{tikzpicture}
\caption{The $(\flimavg;\fsum)$-automaton computing the average response time over words with infinite number of requests and grants such that between any two
grants there are at most $k$ requests.}%
\label{fig:ART-k}
\end{figure}
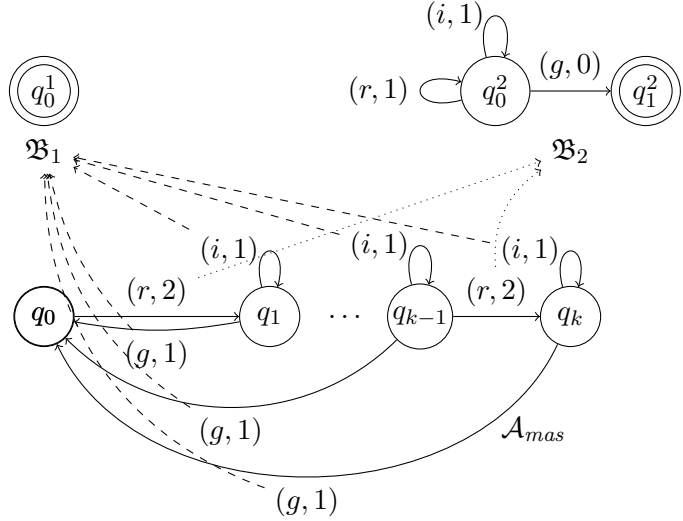
\end{exa}

\subsection{Translation}
We now present translations from NWA to automata with monitor counters and vice-versa. To state correctness of translation, we first define equivalence.

\Paragraph{Equivalence of quantitative automata}. We say that $\aut_1, \aut_2$, each being a weighted automaton, an automaton with monitor counters or an NWA over infinite words from $\Sigma$, are \emph{equivalent}
 if and only if for all words $w \in  \Sigma^{\omega}$ we have
$\aut_1(w) = \aut_2(w)$.

Now, we state the main translation lemma:

\begin{restatable}[Translation Lemma]{lem}{MCvsNested}%
\label{l:mc-vs-nested}
For every value function $f \in \InfVal$ on infinite words we have the following:
\begin{enumerate}
\item Every deterministic $f$-automaton with monitor counters $\counterA$ can be transformed in polynomial time
into an equivalent deterministic $(f;\fsum)$-automaton of bounded width.
\item Every non-deterministic (resp., deterministic) $(f;\fsum)$-automaton of bounded width can be transformed in exponential time
into an equivalent non-deterministic (resp., deterministic) $f$-automaton with monitor counters.
\end{enumerate}
\end{restatable}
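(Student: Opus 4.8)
The plan is to prove the two directions separately, each by a step-by-step simulation that preserves runs, and hence values, on every input word; since in both cases the two models generate the \emph{same} sequence of weights position-by-position, equality of the aggregated values $\silent{f}$ follows simultaneously for every $f \in \InfVal$.

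\emph{Direction (1): monitor counters to NWA.} Given a deterministic $f$-automaton with monitor counters $\counterA = \tuple{\Sigma, Q, q_0, \delta, F}$ with $n$ counters, I would let the master $\masterA$ be a copy of the control of $\counterA$ (same states, transitions, and accepting set $F$); its only extra task is to choose, on each transition, which slave to invoke. On a transition $\delta(q,a)=(q',\vec v)$ that starts counter $j$ (i.e.\ $v[j]=s$) the master invokes a $\fsum$-slave $\slaveA_{q',j}$, and on a transition starting no counter it invokes a dummy slave that accepts the empty word (a silent move). For each state $q'$ and counter $j$ the slave $\slaveA_{q',j}$ has a copy of $Q$ as its state space: on its first letter it moves to the copy of $q'$ with weight $0$ (mimicking the start, which sets the counter to $0$), and thereafter it follows $\delta$ deterministically, charging $v'[j]$ on a transition whose $j$-th component is the integer $v'[j]$, and moving to its unique accepting state with weight $0$ on the transition whose $j$-th component is $t$. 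Since $\counterA$ is deterministic, from the copy of $q'$ the slave tracks exactly the control that $\counterA$ follows after the start, so its $\fsum$ equals the accumulated increments between the start and the first termination, which is precisely the value $\counterA$ assigns to that position. Each slave accepts at the first termination, so its language is prefix-free; at any position the number of active slaves equals the number of active counters, which is at most $n$, so $\nestedA$ is deterministic and of bounded width. Matching the three acceptance conditions of $\counterA$ (accepting state infinitely often, infinitely many activations, every counter terminated) with those of $\nestedA$ (accepting master run, infinitely many non-silent transitions, every slave accepting) yields $\valueL{\nestedA}=\valueL{\counterA}$, and the construction is clearly polynomial.

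\emph{Direction (2): NWA to monitor counters.} Given an $(f;\fsum)$-automaton $\nestedA$ of width $C$, I would build an AMC $\counterA$ that simulates it. The finite control of $\counterA$ stores the current master state and, for each active slave, its index, its current state, and one pending weight; since at most $O(C)$ slaves are active at once and slave weights are bounded, this is a finite state space, exponential in $C$. Each active slave is assigned a private monitor counter holding its running $\fsum$: when the master invokes a non-dummy slave, $\counterA$ starts a fresh counter, and when the slave reaches an accepting state the counter is terminated, so the frozen terminal value is the slave's returned value and is recorded at the invocation position, exactly as in $\nestedA$. For nondeterministic $\nestedA$ the automaton $\counterA$ guesses the master and slave transitions and the slaves' stopping times; determinism is preserved because for deterministic prefix-free slaves every choice is forced (and the free counter of smallest index can be assigned). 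The acceptance conditions are matched as in direction~(1).

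The hard part will be the timing mismatch in direction~(2): the start instruction sets a counter to $0$, so it cannot simultaneously absorb the weight of the slave's first letter, which is read at the very position where the slave is invoked; and the terminal value is the value \emph{before} the terminate instruction, so the weight of the slave's last letter must already be inside the counter when it is terminated. I would resolve this by charging each slave transition's weight to its counter one step after the letter is read, carrying that single pending weight in the finite control, and by terminating a counter one step after its slave enters an accepting state; when a slave is about to accept I add the pending weight together with the current one as one bounded integer update, so that the counter holds the full sum exactly one step before termination. Because each counter then stays active for only a constant number of extra steps, at most $O(C)$ counters are ever simultaneously active, so boundedly many monitor counters suffice and (with the exponential control above) the translation runs in exponential time. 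The remaining routine check is that $\silent{f}$ applied to the weight sequence of $\counterA$ equals $\silent{f}$ applied to the slave-value sequence of $\nestedA$ on every word, which holds since the two sequences coincide position-by-position.
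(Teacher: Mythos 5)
Your proposal is correct and follows essentially the same route as the paper's proof: in direction (1) the master is a copy of the control that invokes one $\fsum$-slave per (state, counter) pair to accumulate that counter's increments (plus a dummy slave for counter-free transitions), and in direction (2) one builds a product automaton tracking the master together with up to width-many slave states, assigning a private monitor counter to each active slave. You are in fact more explicit than the paper on two points it glosses over — indexing slaves by the invocation state (needed because a counter may be started from different control states) and the one-step offset forced by the $s$/$t$ instructions, which your pending-weight trick resolves (modulo the trivial edge case of slaves accepting after a single transition, where the counter must simply stay alive one further step).
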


\noindent
Before the formal proof, we illustrate below the key ideas of the above translations of
Lemma~\ref{l:mc-vs-nested} to automata from Examples~\ref{ex:AMC}~and~\ref{ex:NWA}.

\begin{exa}[Translation of automata with monitor counters to nested weighted automata]
Consider a deterministic automaton $\aut$ with $k$ monitor counters.
We construct an NWA $\nestedA$ equivalent to $\aut$.
The automaton $\nestedA$ uses $k$ slave automata to track values of $k$ monitor counters in the following way.
The master automaton of $\nestedA$ simulates $\aut$; it invokes slave automata whenever $\aut$ starts monitor counters.
Slave automata simulate $\aut$ as well. Each slave automaton is associated with some counter $i$; it starts in the state (of $\aut$) the counter $i$ is initialized,
simulates the value of counter $i$, and terminates when counter $i$ is terminated.
Figure~\ref{fig:AMCtoNWA} presents the result of transition of  the automaton $\autDiff$ from Example~\ref{ex:AMC}
to a $(\fsup;\fsum)$-automaton of width bounded by $3$.

\begin{figure}
\centering
\begin{tikzpicture}[initial text={}]
\newcommand{\x}{1.0}

\begin{scope}[yshift=4cm,xshift=-3.2cm]
\node[state,accepting] (q11) at (0,0) {$q_0'$};
\node[state,initial] (q12) at (2*\x,0) {$q_1'$};
\node[state] (q15) at (2*\x,-2) {$q_2'$};
\node[state] (q16) at (0,-2) {$q_3'$};

\draw[->] (q12) to node[right]  {$(\#,0)$} (q15);
\draw[->, loop right] (q15) to node[right]  {$(a,1)$} (q15);
\draw[->] (q15) to node[above]  {$(\#,0)$} (q16);
\draw[->, loop left] (q16) to node[left]  {$(a,-1)$} (q16);
\draw[->] (q16) to node[left]  {$(\#,0)$} (q11);

\node (A1) at (1,-2.7) {$\slaveA_1$};

\end{scope}

\begin{scope}[yshift=4cm,xshift=3.5cm]
\node[state]  at (2*\x,0) {$q_1''$};
\node[state,accepting] (q21) at (0,0) {$q_0''$};
\node[state,initial above] (q25) at (2*\x,-2) {$q_2''$};
\node[state] (q26) at (0,-2) {$q_3''$};

\draw[->, loop right] (q25) to node[right]  {$(a,-1)$} (q25);
\draw[->] (q25) to node[above]  {$(\#,0)$} (q26);
\draw[->, loop left] (q26) to node[left]  {$(a,1)$} (q26);
\draw[->] (q26) to node[left]  {$(\#,0)$} (q21);
\node (A2) at (1,-2.7) {$\slaveA_2$};
\end{scope}

\node[state,accepting] (A3state)  at (-3,-0.5) {$q_0$};
\node (A3) at (-3,-1.3) {$\slaveA_3$};

\node[state,accepting,initial] (q1) at (0,0) {$q_0$};
\node[state] (q2) at (2*\x,0) {$q_1$};
\node[state] (q5) at (2*\x,-2) {$q_2$};
\node[state] (q6) at (0,-2) {$q_3$};

\draw[->] (q1) to node[above] (e1) {$(\#,1)$} (q2);
\draw[->] (q2) to node[right] (e2) {$(\#,2)$} (q5);
\draw[->, loop right] (q5) to node[right] (e3) {$(a,3)$} (q5);
\draw[->] (q5) to node[above] (e4) {$(\#,3)$} (q6);
\draw[->, loop left] (q6) to node[left] (e5)  {$(a,3)$} (q6);
\draw[->] (q6) to node[left] (e6) {$(\#,3)$} (q1);

\draw[->,dotted] (e1) to (A1);
\draw[->,dotted] (e2) to (A2);

\draw[->,dotted, bend right=18] (e3) to (A3state);
\draw[->,dotted, bend left=5] (e4) to (A3state);
\draw[->,dotted] (e5) to (A3state);
\draw[->,dotted] (e6) to (A3state);

\end{tikzpicture}
\caption{A nested weighted automaton resulting from translation of the automaton $\autDiff$ from Example~\ref{ex:AMC}.
The master automaton is obtained from $\autDiff$ (see Figure~\ref{fig:autDiff}) by changing the labels of transitions.
All slave automata are defined based on $\autDiff$; each slave automaton corresponds to $\autDiff$ starting in a state $q'$, in which a counter $i$ is initialized, and contains all the transitions that can be taken before
the counter $i$ is terminated.}%
\label{fig:AMCtoNWA}
\end{figure}
\end{exa}

\begin{exa}[Translation of nested weighted automata of bounded width to automata with monitor counters]
Consider an $(f;\fsum)$-automaton $\nestedA$ of width bounded by $k$.
We construct an automaton with monitor counters $\aut_{\nestedA}$, which
simulates the master automaton and up to $k$ slave automata running in parallel.
To simulate values of slave automata it uses monitor counters, each counter separately for each slave automaton.

Figure~\ref{fig:NWAtoAMC} shows the result of translation of  the automaton $\nestedA$ from Example~\ref{ex:NWA}
to the automaton with monitor counters $\aut_{\nestedA}$. The set of states of $\aut_{\nestedA}$ there is
$\{q_0, \ldots, q_k\} \times {(\{q_0^2, \bot\})}^k$, i.e.,
the states of the master automaton and all non-accepting states of slave automata (in deterministic NWA accepting states are sink states, hence storing them is redundant).
Now, observe that only reachable states of $\aut_{\nestedA}$ are $(q_0, \bot, \ldots, \bot), (q_1, q_0^2, \bot, \ldots, \bot), \ldots, (q_k, q_0^2, \ldots, q_0^2)$, i.e., the reachable part of
$\aut_{\nestedA}$ is isomorphic (in the sense of graphs) to the master automaton of $\nestedA$.

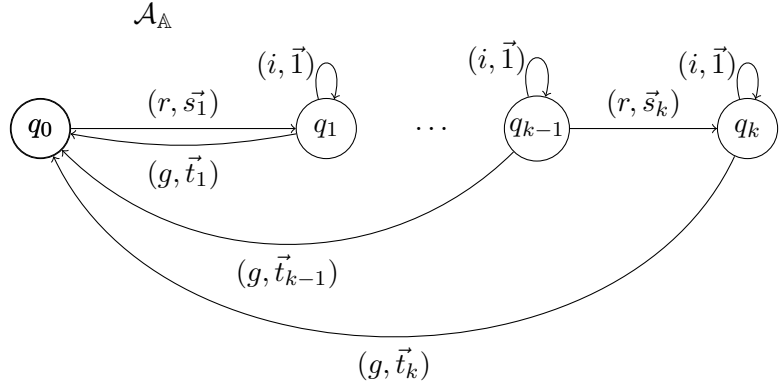
\begin{figure}
\centering
\begin{tikzpicture}
\newcommand{\x}{2.8}
\node[state] (q0) at (-1,0) {$q_0$};
\node[state] (q1) at (\x,0) {$q_1$};
\node[state, inner sep =1pt] (q2) at (2*\x,0) {$q_{k-1}$};
\node[state] (q3) at (3*\x,0) {$q_k$};

\node at (1.5*\x,0) {$\ldots$};

\draw[->, loop above] (q1) to node[above] (e2) {$(i,\vec{1})$} (q1);
\draw[->, loop above] (q2) to node[above] (e3) {$(i,\vec{1})$} (q2);
\draw[->, loop above] (q3) to node[above] (e4) {$(i,\vec{1})$} (q3);

\draw[->,bend left=10] (q1) to node[below] (e5) {$(g,\vec{t}_1)$} (q0);
\draw[->,bend left=45] (q2) to node[below] (e6) {$(g,\vec{t}_{k-1})$} (q0);
\draw[->,bend left=65] (q3) to node[below] (e7) {$(g,\vec{t}_k)$} (q0);

\draw[->,bend left=10] (q0) to node[above] (e8) {$(r,\vec{s}_1)$} (q1);
\draw[->] (q2) to node[above] (e9) {$(r,\vec{s}_k)$} (q3);
\node at (0.5,1.5) {$\aut_{\nestedA}$};
\end{tikzpicture}
\caption{The (reduced) result of translation of the automaton $\nestedA$ from  Example~\ref{ex:NWA} to an automaton with monitor counters. All vectors have dimension $k$.
Vector $\vec{1}$ denotes the vector with all components equal $0$.
Vector $\vec{s}_i$ denotes the whose $i$-th component is $s$ and other components are $1$.
Vector $\vec{t}_i$ denotes the vector whose components $1, \ldots, i$ are $t$ and the remaining components are $0$.
}%
\label{fig:NWAtoAMC}
\end{figure}
\end{exa}

\begin{proof}[Proof of Lemma~\ref{l:mc-vs-nested}]
\Paragraph{(Translation of automata with monitor counters to NWA)}:
Consider a deterministic $f$-automaton $\counterA$ with $k$ monitor counters and the set of states $Q^{m-c}$.
We define an $(f;\fsum)$-automaton $\nestedA$, which consists of a master automaton $\masterA$ and
slave automata $\{ \slaveA_{i,q} \mid i \in \{1,\ldots, k\}, q \in Q^{m-c} \} \cup \{\slaveA_{\bot} \}$.
The slave automaton $\slaveA_{\bot}$ is a dummy automaton, i.e., it has only a single state which is both
the initial and the accepting state. Invoking such an automaton is equivalent to taking a silent transition (with no weight).
Next, the master automaton $\masterA$ and slave automata $\{ \slaveA_{i,q} \mid i \in \{1,\ldots, k\}, q \in Q^{m-c} \}$
are variants of $\counterA$, i.e., they share the underlying transition structure.
The automaton $\masterA$ simulates $\counterA$, i.e., it has the same states and the transitions among these states as $\counterA$.
However, whenever $\counterA$ activates counter $i$, the master automaton invokes the slave automaton $\slaveA_{i,q}$, where $q$ is their current state (both $\masterA$ and the simulated $\counterA$).
The accepting condition of $\masterA$ is the same as of $\counterA$.
We can construct $\masterA$ in polynomial time in $|\counterA|$.
For every $i \in \{1,\ldots, k\}$, the slave automaton $\slaveA_{i,q}$
keeps track of counter $i$, i.e., it
simulates $\counterA$ and applies instructions of $\counterA$ for counter $i$
to its value. That is, whenever $\counterA$ changes the value of counter $i$ by $m$, the automaton $\slaveA_{i,q}$
takes a transition of the weight $m$. Finally, $\slaveA_{i,q}$ terminates precisely when $\counterA$ terminates counter $i$.
The automaton $\slaveA_{i,q}$ can be constructed in polynomial time in $|\counterA|$.
There are at most $k \cdot |\counterA|$ such slave automata and each of them has the size bounded by $|\counterA|$.
Therefore, $|\nestedA|$ is polynomial in $|\counterA|$ and can be constructed in polynomial time in $|\counterA|$.

The semantics of automata with monitor counters implies that $\nestedA$ accepts if and only if $\counterA$ accepts and, for every word,
the sequences of weights produced by the runs of $\nestedA$ and $\counterA$ on that word coincide. Therefore,
the values of $\nestedA$ and $\counterA$ coincide on every word.

\Paragraph{(Translation of NWA of bounded width to automata with monitor counters)}: We show that non-deterministic (resp., deterministic)
$f$-automata with monitor counters  subsume
non-deterministic (resp., deterministic) $(f;\fsum)$-automata of bounded width.
Consider a non-deterministic $(f;\fsum)$-automaton $\nestedA$ with width bounded by $k$.
We define an $f$-automaton $\counterA$ with $k$ monitor counters that works as follows.
Let $Q_{mas}$ be the set of states of the master automaton of $\nestedA$ and $Q_s$ be the union of the sets of states of the slave automata of $\nestedA$.
The set of states of $\counterA$ is $Q_{mas} \times (Q_{s} \cup \{ \bot \}) \times \cdots \times (Q_{s} \cup \{ \bot \}) =
Q_{mas} \times {(Q_{s} \cup \{ \bot \})}^k$.
The automaton $\counterA$ simulates runs of the master automaton and slave automata by keeping track of the state of the master automaton and
states of up to $k$ active slave automata. If there are less than $k$ active slave automata, $\counterA$ uses $\bot$ to mark slots that can be used in the future to simulate slave automata.
Moreover, it uses counters to simulate the values of slave automata, i.e.,
whenever a slave automaton is activated, $\counterA$ simulates the execution of this automaton and
assigns some counter $i$ to that automaton.
Next, when the simulated slave automaton takes a transition of the weight $m$ the automaton $\counterA$
changes the value of counter $i$ by $m$.
Finally, $\counterA$ terminates counter $i$ when the corresponding slave automaton terminates.
The size of $|\counterA|$ is bounded by $|\nestedA|^k$ and it can be constructed in time $O(|\nestedA|^k)$.

Since $\nestedA$ has width bounded by $k$, the simulating automaton $\counterA$ never runs out of counters to simulate slave automata.
Moreover, as it simulates runs of the master automaton and slave automata of $\nestedA$, there is a one-to-one
correspondence between runs of $\counterA$ and runs of $\nestedA$ and accepting runs of $\nestedA$ correspond to accepting runs of $\counterA$.
Finally, the sequence of weights for the master automaton determined by a given run of $\nestedA$ coincides with the sequence of
weights of $\counterA$ on the corresponding run. Therefore, the values of $\nestedA$ and $\counterA$ coincide on every word.
Thus, non-deterministic $f$-automata with monitor counters  subsume
non-deterministic $(f;\fsum)$-automata of bounded width.

Now, assume that $\nestedA$ is deterministic. Therefore, the master automaton and all slave automata are deterministic and accepting states of slave automata have no outgoing transitions.
We claim that $\counterA$ is deterministic as well.
Consider a state $(s_1, q_1, \ldots, q_k)$ from $Q_{mas} \times {(Q_{s} \cup \{ \bot \})}^k$ and every letter $a \in \Sigma$.
The successor over $a$ of $s_1$ is uniquely determined as the master automaton is deterministic.
For all $q_i$, which are not accepting, the successor states of deterministic slave automata are uniquely determined.
If some state $q_i$ is accepting, then the slave automaton has no outgoing transition and the successor state is $\bot$.
Finally, for $q_i$ equal $\bot$, the one with the least index becomes the initial state of the newly invoked slave automaton and
the other states remain $\bot$.
Therefore, the automaton $\counterA$ is deterministic.
\end{proof}

A direct consequence of Lemma~\ref{l:mc-vs-nested}  is the following theorem:

\begin{thm}%
\label{th:equivalence}
For every  $f \in \InfVal$, deterministic
bounded-width $(f,\fsum)$-automata and deterministic $f$-automata with monitor counters are expressively equivalent.
\end{thm}

\begin{rem}[Discussion]
Theorem~\ref{th:equivalence}
 states that deterministic automata with monitor counters have
the same expressive power as deterministic NWA of bounded width. However, the latter may be exponentially more succinct.
In consequence, lower bounds on deterministic automata with monitor counters imply lower bounds on NWA of bounded width.
Conversely, deterministic NWA can be considered as automata with infinite number of monitor counters, therefore
upper bounds on deterministic NWA imply upper bounds on deterministic automata with monitor counters
\end{rem}

\section{Problems}
\subsection{Classical questions}

The classical questions in automata theory are \emph{emptiness} and \emph{universality} (of a language).
These problems have their counterparts in the quantitative setting of weighted automata and their extensions.
The (quantitative) emptiness and universality problems are defined in the same way for weighted automata, NWA and automata with monitor counters, i.e.,
in the following definition the automaton $\aut$ can be a weighted automaton, an NWA or an automaton with monitor counters.
\begin{itemize}
\item \textbf{Emptiness}: Given an automaton $\aut$ and $\lambda \in \Q$, decide whether there is a word $w$ with
$\valueL{\aut}(w) \leq \lambda$?
\item \textbf{Universality}: Given an automaton $\aut$ and  $\lambda \in \Q$, decide whether for all words $w$ we have
$\valueL{\aut}(w) \leq \lambda$?
\end{itemize}
The universality question asks for \emph{non-existence} of a word $w$ such that $\valueL{\aut}(w) > \lambda$.

\subsection{Probabilistic questions}
The classical questions ask for the (non-)existence of words for input 
automata, whereas in the probabilistic setting, input automata are analyzed w.r.t.
a probability distribution.
We consider probability distributions over infinite words $\Sigma^{\omega}$, and as a finite representation
consider the classical model of Markov chains.

\smallskip
\Paragraph{Labeled Markov chains}.
A \emph{(labeled) Markov chain} is a tuple $\tuple{\Sigma,S,s_0,E}$,
where $\Sigma$ is the alphabet of letters,
$S$ is a finite set of states, $s_0$ is an initial state,
$E \colon S \times \Sigma \times S \mapsto [0,1]$ is the edge probability function, which
for every $s \in S$ satisfies that $\sum_{a \in \Sigma, s' \in S} E(s,a,s') = 1$.

\smallskip
\Paragraph{Distributions given by Markov chains}.
Consider a Markov chain $\markov$. For every finite word $u$, the probability of $u$, denoted $\probability_{\markov}(u)$,
w.r.t.\ the Markov chain $\markov$ is the sum of probabilities of paths labeled by $u$,
where the probability of a path is the product of probabilities of its edges.
For basic open sets $u\cdot \Sigma^\omega = \{ uw : w \in \Sigma^{\omega} \}$,
we have $\probability_{\markov}(u\cdot \Sigma^\omega)=\probability_{\markov}(u)$, and then the
probability measure over infinite words defined by $\markov$ is the unique
extension of the above measure (by Carath\'{e}odory's extension theorem~\cite{feller}).
We will denote the unique probability measure defined by $\markov$ as $\probability_{\markov}$, and
the associated expectation measure as $\expected_{\markov}$.

We define \emph{the uniform probability measure} $\calU$ such that for every $u \in \Sigma^*$ we have
$\probability_{\calU}(u \cdot \Sigma^{\omega}) = |\Sigma|^{-|u|}$. It can be defined by a single-state Markov chain, in which
all transitions are self-loops labled with the same probability $\frac{1}{|\Sigma|}$.

\smallskip
\Paragraph{Automata as random variables}.
Note that deterministic weighted automata,  NWA or automata with monitor counters
all define functions $h \colon \Sigma^\omega \mapsto \R$, which are measurable with respect to probability measures given by Markov chains, and hence
these functions can be interpreted as random variables.
Therefore, given an automaton $\aut$ and a Markov chain $\markov$, we consider the following fundamental quantities:
\begin{enumerate}
\item \textbf{Expected value}: $\expected_{\markov}(\aut)$ is the expected value of
the random variable defined by the automaton $\aut$ w.r.t.\ the probability measure defined by the Markov chain $\markov$.
\item \textbf{(Cumulative) distribution}: $\distrib_{\markov, \aut}(\const) = \probability_{\markov}(\{w \mid \valueL{\aut}(w) \leq \const \})$ is the
cumulative distribution function of
the random variable defined by the automaton $\aut$ w.r.t.\ the probability measure defined by the Markov chain $\markov$.
\end{enumerate}

\smallskip
\Paragraph{Computational questions.}
Given an automaton $\aut$ and a Markov chain $\markov$, we consider the following basic
computational questions:
\begin{enumerate}[label={(Q\arabic*)}]
\item The \emph{expected question} asks to compute  $\expected_{\markov}(\aut)$.
\item The \emph{distribution question} asks, given a threshold $\const \in \Q$, to compute $\distrib_{\markov, \aut}(\const)$.
\end{enumerate}
\smallskip
Questions (Q1) and (Q2) have their approximate variants, which, given an additional input $\epsilon > 0$, ask to compute values that are $\epsilon$-close to
$\expected_{\markov}(\aut)$ or $\distrib_{\markov, \aut}(\const)$, i.e., given $\epsilon > 0$:
\begin{enumerate}[resume,label={(Q\arabic*)}]
\item The \emph{approximate expected question} asks to compute a value $\eta$ such that $|\eta - \expected_{\markov}(\aut)| \leq \epsilon$, and
\item The \emph{approximate distribution question} asks to compute a value $\eta$ such that $|\eta - \distrib_{\markov, \aut}(\const)| \leq \epsilon$.
\end{enumerate}
\smallskip
Additionally, a special important case for the distribution question is
\begin{enumerate}[resume,label={(Q\arabic*)}]
\item The \emph{almost-sure distribution question} asks whether for a given $\const \in \Q$ the probability $\distrib_{\markov, \aut}(\const)$ is exactly $1$.
\end{enumerate}

\noindent
We refer to questions (Q1)--(Q5) as \emph{probabilistic questions}. Note that an upper bound on the complexity of the expected and distribution questions imply 
the same upper bound on all probabilistic questions as approximate and almost-sure variants are special cases.

\begin{exa}[Expected average response time]
Consider an NWA $\nestedA$ from Example~\ref{ex:NWA}. Recall that it computes ART on words it accepts (bounded number of requests between any two grants).
Next, consider a Markov chain $\markov$ which gives a distribution on words over $\{r,g,i\}$.
In such a case, the value $\expected_{\markov}(\nestedA)$ is the expected ART\@.
\end{exa}

\section{Results on classical questions}
	\smallskip\noindent{\bf Existing results.}
The complexity of the classical decision problems for NWA
has been established in~\cite{nested} which is presented in Table~\ref{tab1}.
\begin{table}[t]
\centering
\def\tabcolsep{5pt}
\begin{tabular}{|c|c|c|c|c|} 
\hline 
\multicolumn{2}{|c|}{}& $\finf$& $\fsup$ & \multirow{2}{*}{$\flimavg$} \\
\multicolumn{2}{|c|}{}& $\fliminf$ & $\flimsup$  & \\
\hline 
$\fmin, \fmax$ &Empt.&
\multicolumn{3}{|c|}{ {$\PSPACE$-comp.}}  \\
\cline{2-2}
\cline{4-4}
$\fBsum{B}$ &Univ.&  &\PTIME&  \\
\hline 
\multirow{2}{*}{$\fsum^+$} & Empt. &
\multicolumn{2}{|c|}{ {$\PSPACE$-comp.}} &
\multirow{2}{*}{$\EXPSPACE$ } \\
\cline{2-2}
\cline{4-4}
&Univ. & &\PTIME&    \\
\hline 
\multirow{2}{*}{$\fsum$} & Empt. & \multirow{2}{*}{$\PSPACE$-comp.}  & Undecidable & Open \\
\cline{2-2}
\cline{4-4}
& Univ. &  & $\PTIME$  &  \\
\hline 
\end{tabular}
\caption{Decidability and complexity of emptiness and universality for deterministic $(f;g)$-automata.
Functions $f$ are listed in the first row and functions $g$ are in the first column.
}%
\label{tab1}
\end{table}

\smallskip\noindent{\bf New results}.
Due to Lemma~\ref{l:mc-vs-nested},
decidability of deterministic $(f;\fsum)$-automata implies decidability of
deterministic automata with monitor counters with the value function $f$.
However, the undecidability result of NWA does not imply undecidability for
automata with monitor counters as the NWA in the reduction may have unbounded width.
We present the undecidability result for NWA of bounded width, which implies undecidability of the emptiness problem for automata with monitor counters.
Thus, our following result completes the decidability picture also for automata with monitor counters
(i.e., the decidability results coincide with the $\fsum$ row of Table~\ref{tab1}).

\begin{thm}%
\label{th:undecidable-limsup}
The emptiness problem is undecidable for deterministic $\fsup$-automata (resp., $\flimsup$-automata) with $8$ monitor counters.
\end{thm}
\begin{proof}
We show undecidability of the emptiness problem for  deterministic $(\flimsup,\fsum)$-automata of width $8$.
The proof for deterministic $(\fsup,\fsum)$-automata is virtually the same.
Then, the theorem follows from the translation lemma (Lemma~\ref{l:mc-vs-nested}).

We show a reduction from the halting problem for deterministic two-counter machines, which is undecidable~\cite{minsky1961recursive}.
Let $\M$ be a deterministic two-counter machine and let $Q$ be the set of states of $\M$.
We define a deterministic $(\flimsup,\fsum)$-automaton $\nestedA$ of width $8$ such that
$\nestedA$ has a run of the value not exceeding $0$ if and only if $\M$ has an accepting computation.

Consider the alphabet $\Sigma = Q \cup \{ 1,2,\#,\$\}$.
We encode computations of $\M$ as a sequence of configurations separated by $\#$.
A single configuration of $\M$, where the machine is in the state $q$, the first counter has the value
$x$ and the second $y$ is encoded by the word $q 1^{x} 2^{y}$. Finally,
computations of $\M$ are separated by $\$$. We define the automaton $\aut$ that for a word 
$w \in \Sigma^*$ returns the value $0$ if (some infinite suffix of) $w$ encodes a sequence valid accepting computations of $\M$.
Otherwise, $\nestedA$ returns the value at least $1$.

The automaton $\nestedA$ works as follows. On a single computation, i.e., between symbols $\$$,  $\nestedA$ 
checks consistency of the transitions by checking two conditions: (C1)~Boolean consistency, and (C2)~counter consistency.
The condition (C1) states that encoded subsequence configurations, which are represented by subwords  $q 1^x 2^y \# q' 1^{x'} 2^{y'}$,
 are consistent with the transition function of $\M$ modulo counter values, i.e., under counter abstraction to values $0$ and ``strictly positive''.
Observe that a finite automaton can check that. The conditions that need to be checked are as follows:
(C1-1)~Boolean parts of transitions are consistent; the automaton checks only emptiness/nonemptiness of counters and 
based on that verifies whether a subword $q 1^x 2^y \# q'$ is valid w.r.t.\ transitions of $\M$.
For example, consider transition $(q,\bot, +, q', +1, -1)$ of $\M$ stating that ``if $\M$ is in state $q$, the first counter
is $0$ and the second counter is positive, then change the state to $q'$ increment the first counter and
decrement the second one''. This transition corresponds to the regular expression
$q 2^+ \# q'$.
(C1-2)~The initial and final configurations in each computation (between $\$$ symbols) 
are respectively $q_I 1^0 2^0$ and $q_f 1^0 2^0$.
(C1-3)~The word encodes infinitely many computations, i.e., the word contains infinitely many $\$$ symbols. 
The last conditions rejects words encoding non-terminating computations.

To check the condition (C2), $\nestedA$ uses slave automata.
It uses $4$ slave automata to check transitions between even and odd positions and the other $4$ slave automata to check validity of the remaining
transitions. Then, between even and odd positions it uses $2$ slave automata for each counter of $\M$.
These slave automata encode the absolute values between the intended values of counters (i.e., assuming that counter values are consistent with the instructions)
and the actual values.
For example, for a subword $q 1^x 2^y \# q' 1^{x'} 2^{y'}$, the automaton $\nestedA$ checks whether the value
of counter $1$ is consistent with transition $(q,\bot, +, q', +1, -1)$ in the following way.
The first slave automaton ignores letters $2$ and initially decrements its value at every letter $1$ until it reads letter $\#$ (where its value is $-x$).
Next, it switches its mode and increments its value at letters $1$ while ignoring letters $2$.
In that way its value upon reading $q 1^x 2^y \# q' 1^{x'} 2^{y'}$ equals $-x + x'$.
Finally, it increments its value by $1$.
Thus, the value of the slave automaton is $-x +x' +1$.
The second slave automaton works in a similar way, but it decrements whenever the first counter increments and vice versa.
 At the end, the value of the second slave automaton is
$x - x' -1$. Observe that the maximum of these value is $|x - (x'+1)|$, which is $0$ if and only if the value of counter $1$ is consistent with the transition
 $(q,\bot, +, q', +1, -1)$. It follows that the supremum over the values returned by all slave automata is $0$ only if all counter values are consistent with the transitions.
Therefore, the value $\flimsup$ of the whole word  is $0$ if and only if starting at some point all computations are valid and accepting.
The latter is possible only if $\M$ has at least one such a computation.
Otherwise, the value of $\flimsup$ is at least $1$.

Observe that this construction works for $\fsup$ as well.
\end{proof}

\section{Basic results on probabilistic questions}
In this section we discuss basic properties of the probabilistic questions and present some basic facts about Markov
chains. Next, in the following Section~\ref{s:limit} and Section~\ref{s:nolimit} we study the probabilistic questions for NWA\@.
We consider there separately NWA with $\fliminf, \flimsup, \flimavg$ value functions for the master automaton (Section~\ref{s:limit})
and NWA with $\finf, \fsup$ value functions for the master automaton (Section~\ref{s:nolimit}).

We begin with the discussion on the acceptance by an NWA\@.

\subsection{Property about almost-sure acceptance}

Observe that if the probability of the set of words rejected by an automaton $\aut$
is strictly greater than $0$, then the expected value of such an automaton
is infinite or undefined.
In the next lemma we show that given a deterministic NWA $\nestedA$ and
a Markov chain $\markov$ we can decide in polynomial time
whether the NWA is \emph{almost-surely accepting}, i.e., the set of words whose runs are accepting has probability~1.
In Section~\ref{s:limit}  we consider all the computational problems for NWA which are almost-surely accepting.
This assumption does not influence the complexity of computational questions
related to the expected value, but has an influence on
the complexity of distribution questions, which we discuss in Section~\ref{s:no-almost-surely-accepting}.

\begin{restatable}{prop}{AcceptAlmostAllInP}%
\label{prop:almostAll}
Given a deterministic NWA $\nestedA$ and a Markov chain $\markov$, we can decide in
polynomial time whether $\probability_{\markov}(\{w \mid \Acc(w) \neq \emptyset\})=1$?
\end{restatable}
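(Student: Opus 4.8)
The plan is to exploit determinism. Since $\nestedA$ is deterministic, every word $w$ has at most one run, so $\Acc(w) \neq \emptyset$ holds exactly when this unique run is accepting. By the definition of acceptance for NWA, this means the conjunction, over a $\markov$-random word $w$, of three events: (A)~the master run of $\masterA$ satisfies its B\"uchi condition; (B)~every invoked slave automaton accepts (reaches an accepting state without getting stuck on a missing transition); and (C)~infinitely many slave runs have length greater than $1$. Because the complement of $A \cap B \cap C$ is the union $A^c \cup B^c \cup C^c$, subadditivity gives that $\prob_\markov(\{w : \Acc(w) \neq \emptyset\}) = 1$ if and only if each of $\prob_\markov(A)$, $\prob_\markov(B)$, $\prob_\markov(C)$ equals $1$; I would therefore test the three conditions separately.

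First I would form the product Markov chain $\markov \times \masterA$ on state set $S \times Q$, which has polynomial size because $\masterA$ is deterministic. A configuration is reachable with positive probability iff it is reachable in the underlying graph, and the recurrent classes are the bottom strongly connected components, all computable in polynomial time. Condition (A) is then the classical statement that $\prob_\markov(A) = 1$ iff every reachable recurrent class contains a master-accepting state.

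The crux is condition (B), where the difficulty is that unboundedly many slave automata, with overlapping scopes, may be active simultaneously, so tracking them jointly is infeasible. The key observation is that one need not: whether the slave invoked at a position accepts depends only on which slave $\slaveA_j$ is invoked and on the suffix read from that position, and by the Markov property the conditional law of that suffix is determined by the chain state $s$ at the invocation. Hence for each pair $(j,s)$ I would decide, in the product $\slaveA_j \times \markov$ equipped with a ``stuck'' sink for letters on which $\slaveA_j$ has no transition, whether the probability $p_{j,s}$ of reaching an accepting state of $\slaveA_j$ equals $1$; this is a purely qualitative reachability question, solvable in polynomial time by graph analysis, since $p_{j,s} = 1$ iff every state reachable from $(q_0^j, s)$ can still reach an accepting state, independently of the exact edge probabilities. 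By subadditivity, if $p_{j,s} = 1$ for every invocation configuration $(j,s)$ reachable with positive probability in $\markov \times \masterA$, then almost surely all of the countably many invocations accept, so $\prob_\markov(B) = 1$; conversely, if some reachable invocation configuration has $p_{j,s} < 1$, that single invocation fails with positive probability, because reaching the configuration (a prefix event) and the subsequent failure (a suffix event) are independent given $s$, hence $\prob_\markov(B) < 1$.

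Finally, for condition (C) I would first test (B); if it fails the answer is already ``no'', and otherwise slaves almost surely never get stuck, so a slave run has length greater than $1$ exactly when the invoked slave's initial state is non-accepting (a slave with accepting initial state accepts the empty word by prefix-freeness and is silent). Marking those master transitions that invoke a slave with non-accepting initial state, I get $\prob_\markov(C) = 1$ iff every reachable recurrent class of $\markov \times \masterA$ contains such a transition, which is again a polynomial-time graph check. Combining the three tests yields the decision procedure. The main obstacle, as indicated, is the justification that the global condition (B) reduces to the finite, per-configuration qualitative-reachability check despite the unbounded number of concurrently active slave automata; this hinges on the independence-via-Markov-property argument together with subadditivity.
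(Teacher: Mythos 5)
Your proposal is correct and takes essentially the same approach as the paper: the paper's proof likewise reduces the question to almost-sure acceptance by the master automaton together with a check, for every pair consisting of a slave automaton and a state of the Markov chain in which that slave can be invoked with positive probability, that the slave accepts almost surely with respect to the distribution started in that state, everything being decided by qualitative polynomial-time analysis of product chains. Your write-up merely supplies details that the paper's one-line proof leaves implicit, namely the subadditivity and Markov-property justification of necessity and sufficiency, and the explicit treatment of the requirement that infinitely many slave runs have length greater than one (i.e., that the master takes infinitely many non-silent transitions).
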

\begin{proof}
First, the master automaton has to accept almost all words. We can check this in polynomial time by considering the master automaton as a \buchi{} automaton and applying the classical methods~\cite{BaierBook}.

For all pairs $(q,s)$, where $q$ is the initial state of some slave automaton $\slaveA_i$ and $s$ is a state of the Markov chain $\markov$,
we check that either $\slaveA_i$ is never invoked while $\markov$ is in the state $s$ or $\slaveA_i$ almost-surely accepts (w.r.t.\ the distribution given by $\markov$
started in $s$).
Observe that $\slaveA_i$ almost-surely accepts if for every finite word $u$ generated by $\markov$ starting in the state $s$ there exists its finite extension $u'$ (generated by $\markov$), which is accepted by $\slaveA_i$
(i.e., $\slaveA_i$ terminates in an accepting state and returns a finite value).
One can easily check that this condition is necessary and sufficient, and it can be checked in polynomial time~\cite{BaierBook}.
\end{proof}

\begin{rem}[Almost-sure acceptance]%
\label{rem:almost-sure}
The answer to the expected value problem does not change even without
the assumption.
We show next that without the almost-sure acceptance condition,
the distribution questions become similar to $\finf$ and $\fsup$ value functions.
Hence, in Section~\ref{s:limit} we consider the almost-sure acceptance property,
and presented the conceptually interesting results.
Moreover, classically weighted automata have been considered without any
acceptance conditions (i.e., all words are accepted), and then
the almost-sure acceptance is trivially ensured.
\end{rem}

\subsection{Duality property between infimum and supremum}
In Section~\ref{s:limit} and Section~\ref{s:nolimit}, when we consider the expected value and the distribution,
in most cases we consider only $\finf$ and $\fliminf$ value functions,
and by duality, we obtain results
for $\fsup$ and $\flimsup$ value functions, respectively.
The only exception are $(\finf, \fsum^+)$-automata and $(\fsup, \fsum^+)$-automata,
which have to be considered separately.
For every value function $g \in \FinVal \setminus \{\fsum^+\}$ we define $-g$ as  follows:
$-{\fmin} = \fmax, -{\fmax} = \fmin$ and
$-{g} = g$ for $g \in \{\fBsum{B}, \fsum\}$.

\begin{restatable}{lem}{Duality}%
\label{l:sup-to-inf}
For every $g \in \FinVal \setminus \{\fsum^+\}$, every deterministic $(\fsup; g)$-automaton (resp. $(\flimsup; g)$-automaton) $\nestedA_1$
accepting almost-all words can be transformed to a deterministic $(\finf; -{g})$-automaton (resp. $(\fliminf; -{g})$-automaton) $\nestedA_2$ of the same size such that for almost all words $w$ we have
$\valueL{\nestedA_1}(w) = -\valueL{\nestedA_2}(w)$.
\end{restatable}
\begin{proof}
The automaton $\nestedA_2$ is obtained from $\nestedA_1$ by multiplying all the weights by $-1$.
\end{proof}

\begin{rem}[Limited duality]%
\label{rem:duality}
As we work under the almost-sure acceptance assumption, the above duality result implies that all complexity results for NWA with
$\finf$ (resp,. $\fliminf$) master value function transfer to NWA with $\fsup$ (resp., $\flimsup$) master value function.
However, this duality does not extend to the classical problems of emptiness and universality.
Indeed, if $\nestedA_1$ does not have an accepting run over $w$ then
$\valueL{\nestedA_1}(w) = \valueL{\nestedA_2}(w) = \infty$ (where $\nestedA_1, \nestedA_2$ are from Lemma~\ref{l:sup-to-inf}).
This cannot be fixed with a different construction as the master automaton in NWA has \buchi{} acceptance conditions and deterministic \buchi{} automata are not closed under complementation.
This leads to different complexity results for these automata.
In particular, the emptiness problem for deterministic $(\fsum,\fsum)$-automata is undecidable, while  the universality problem for deterministic  $(\finf,\fsum)$-automata is $\PSPACE$-complete (Table~\ref{tab1}).
\end{rem}

\subsection{Basic facts about Markov chains}
\Paragraph{Labeled Markov chains with weights.}
A labeled Markov chain with weights is a (labeled) Markov chain $\markov$ with a function $r$, which
associates rationals with edges of $\markov$.
Formally, a \emph{(labeled) Markov chain with weights} is a tuple $\tuple{\Sigma,S,s_0,E,r}$,
where $\tuple{\Sigma,S,s_0,E}$ is a labeled Markov chain and
$r \colon S \times \Sigma \times S \mapsto \Q$.

\smallskip
\Paragraph{Graph properties on Markov chains}.
Standard graph notions have their counterparts on Markov chains by considering edges
with strictly positive probability as present and edges with probability $0$  as absent.
For example, we consider the following graph notions:
\begin{itemize}
\item \textbf{(reachability)}: A state $s$ is \emph{reachable} from $s'$ in a Markov chain if there exists a sequence of edges with positive probability
starting in $s' $ and ending in $s$.
\item \textbf{(SCCs)}: A subset of states $Q$ of a Markov chain is a \emph{strongly connected component} (SCC) if and only if
from any state of $Q$ all states in $Q$ are reachable.
\item
\textbf{(bottom SCCs)}: An SCC $Q$ is a \emph{bottom} SCC if and only if
there are no edges leaving $Q$.
\end{itemize}

\smallskip
\Paragraph{The product of an automaton and a Markov chain.}
Let $\aut = \tuple{ \Sigma, Q, q_0, \delta, F, \cost}$ be a deterministic weighted automaton
and let $\markov = \tuple{\Sigma,S,s_0,E, r}$ be a Markov chain.
We define the product of $\aut$ and $\markov$, denoted by $\aut \times \markov$, as a Markov chain
$\tuple{ \Sigma, Q \times S, \tuple{q_0, s_0}, E', r'}$, where
(1)~$E'(\tuple{q_1, s_1}, a, \tuple{q_2, s_2}) = E(s_1, a, s_2)$ if $(q_1, a, q_2) \in \delta$ and
$E'(\tuple{q_1, s_1}, a, \tuple{q_2, s_2}) = 0$ otherwise, and
(2)~$r'(\tuple{q_1, s_1}, a, \tuple{q_2, s_2}) = \cost(q_1, a, q_2) + r(s_1, a, s_2)$.

The expected value and distribution questions can be answered in polynomial time for deterministic weighted automata
with value functions from $\InfVal$~\cite{ChatterjeeDH09LimInf}.

\begin{fact}%
\label{t:weighted-inf-expected}%
\label{t:weighted-limavg-expected}
Let $f \in \InfVal$.
Given a  Markov chain $\markov$, a deterministic $f$-automaton $\aut$ and a value $\const$,
the values $\expected_{\markov}(\aut)$ and $\distrib_{\markov, \aut}(\const)$ can be computed
in polynomial time.
\end{fact}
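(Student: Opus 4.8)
The plan is to reduce every probabilistic question to a static analysis of the finite weighted Markov chain $\markov' = \aut \times \markov$ defined above (taking the weight function of $\markov$ to be identically $0$, so that edge weights in $\markov'$ are exactly the weights of $\aut$). Since $\aut$ is deterministic, each infinite word has at most one run, so projecting an infinite path of $\markov'$ onto its $\Sigma$-component is a measure-preserving bijection --- up to the null set of words with no accepting run, which is negligible under the standing assumption that accepted words have probability $1$ --- between paths of $\markov'$ under $\prob_{\markov'}$ and words $w$ under $\prob_{\markov}$ paired with their unique runs. Under this correspondence the weight sequence read along a path of $\markov'$ is exactly $\cost(\pi)$ for the run $\pi$ on $w$, and the B\"uchi acceptance condition of $\aut$ becomes a condition on $\markov'$. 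Hence $\expected_{\markov}(\aut)$ and $\distrib_{\markov,\aut}(\const)$ equal the expectation and cumulative distribution of $f$ applied to the weight sequence of a random path of $\markov'$, and it remains to compute these for a weighted Markov chain. I would then invoke the standard fact that a random path almost surely enters an end SCC of $\markov'$ and thereafter visits every state, and every edge, of that SCC infinitely often.

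For the tail-determined functions $\flimavg$, $\fliminf$, $\flimsup$ the value is almost surely constant on each end SCC $C$. For $\flimavg$ it equals the mean edge weight under the stationary distribution of $C$, computed by solving the balance equations (a linear system); for $\flimsup$ (resp.\ $\fliminf$) it equals the maximum (resp.\ minimum) weight of an edge inside $C$, since every such edge recurs infinitely often almost surely. Writing $p_C$ for the probability that the path is absorbed into $C$ (obtained from a further linear system of reachability equations) and $v_C$ for the almost-sure value on $C$, the expected value is $\sum_C p_C \cdot v_C$ and the distribution is $\distrib_{\markov,\aut}(\const) = \sum_{C : v_C \leq \const} p_C$. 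All SCCs, stationary distributions, and absorption probabilities are computable in polynomial time, which settles these three functions.

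The functions $\fsup$ and $\finf$ are prefix-sensitive, and this is the step I expect to be the main obstacle, since $\valueL{\aut}(w)$ is the largest (resp.\ smallest) weight occurring anywhere on the path, including its transient prefix, so end-SCC analysis alone does not suffice. For $\fsup$, the event $\valueL{\aut}(w) \leq \const$ is exactly the event that the path uses only edges of weight at most $\const$ forever \emph{and} is accepting. Restricting $\markov'$ to edges of weight at most $\const$ gives a substochastic chain; a surviving path almost surely settles in an end SCC of $\markov'$ that lies entirely inside this restriction (an end SCC containing a heavier edge would traverse it infinitely often, a contradiction), and it is accepting iff that SCC contains an accepting state. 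Thus $\distrib_{\markov,\aut}(\const)$ is the probability of reaching, along edges of weight at most $\const$, an end SCC of $\markov'$ that is contained in the restricted subgraph and meets $F$, a constrained absorption probability given by a linear system. The case $\finf$ is symmetric, using the subgraph of edges of weight greater than $\const$ together with the complementary event.

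Finally, to obtain the expected value for $\fsup$ and $\finf$ I would note that the value ranges over the distinct edge weights of $\markov'$, of which there are only polynomially many (weights are in unary). Computing $\distrib_{\markov,\aut}$ at each of them --- each by one of the polynomial-time linear systems above --- yields the full step distribution of $\valueL{\aut}$, and hence, using that the value is finite almost surely, the expectation as a finite weighted sum $\sum_j w_j\bigl(\distrib_{\markov,\aut}(w_j) - \distrib_{\markov,\aut}(w_{j-1})\bigr)$. Every ingredient --- SCC decomposition, stationary distributions, and (constrained) reachability and absorption probabilities --- is a polynomial-size linear system solvable in polynomial time, so all five value functions are handled within the claimed bound. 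The delicate point throughout is the correct interaction between the weight constraint on the whole path and the B\"uchi acceptance condition in the prefix-sensitive cases.
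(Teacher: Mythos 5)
The paper never proves this statement: it is imported as a \emph{Fact}, with polynomial-time computability of $\expected_{\markov}(\aut)$ and $\distrib_{\markov,\aut}(\const)$ for deterministic weighted automata credited to prior work (\cite{ChatterjeeDH09LimInf}). So there is no in-paper proof to match; what you have done is reconstruct, correctly, the standard argument behind the cited result. Your skeleton --- form the product chain $\aut\times\markov$, use determinism to transport the measure from paths to words, decompose into end SCCs, note that $\flimsup$, $\fliminf$, $\flimavg$ are tail functions whose almost-sure value in an end SCC is the maximal/minimal edge weight or the stationary mean, and handle the prefix-sensitive $\fsup$, $\finf$ by constrained reachability of end SCCs lying entirely in the weight-restricted subgraph --- is exactly the technique the paper itself deploys later for the nested case (Lemma~\ref{l:in-scc-all-equal}, Lemma~\ref{th:liminfIsPoly}, and the $\flimavg$ product construction), so your argument is fully in the spirit of the paper's toolkit, and your treatment of $\fsup$/$\finf$, which the paper never spells out even for NWA upper bounds, is the genuinely non-routine part and is handled soundly (the ``an end SCC containing a heavy edge would be traversed infinitely often'' argument is the right one). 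Two small imprecisions, neither fatal: the projection from paths of $\aut\times\markov$ to words is a measure-preserving surjection rather than a bijection, since distinct paths of $\markov$ can carry the same word --- harmless, because value and acceptance are functions of the word alone; and your expectation formula for $\fsup$/$\finf$ as a finite sum over distinct edge weights tacitly uses that the value of every accepting run is one of those finitely many weights (true, since sup/inf of a set of integers drawn from a finite set is attained in that set), together with the standing almost-sure-acceptance assumption to exclude infinite values.
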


\section{Results on limit value functions}%
\label{s:limit}
In this section we study NWA with $\fliminf, \flimsup$ and $\flimavg$ value functions for the master automaton.
All these value functions are prefix independent and hence in a (deterministic) strongly-connected almost-surely accepting NWA,
returning a value $\lambda$ is a tail event, which has probability either $0$ or $1$.
It follows that almost all words have the same value.
We use this property to establish polynomial-time algorithms for all probabilistic questions.

Throughout this section we assume that all NWA that are almost-surely accepting, i.e., for almost all words $w$, the run on $w$ is accepting.
In the classical setting of weighted automata, which have no
accepting condition, the almost-sure acceptance is trivially satisfied.
This is a conceptually interesting case as we are mainly interested in the quantitative aspect of (nested) weighted automata.
Moreover, we can check whether a given deterministic NWA is almost-surely accepting in polynomial time (Proposition~\ref{prop:almostAll}).
If it is not, the expected value is either infinite or undefined and hence the complexity of the expected question does not change.
However, the complexity of the distribution question changes and we discuss it in Section~\ref{s:no-almost-surely-accepting}.

\subsection{LimInf and LimSup value functions}%
\label{s:liminf}

In this section we study NWA with $\fliminf$ and $\flimsup$ value functions for the master automaton.
We start with a result for the special case when the master automaton is
strongly connected w.r.t.\ the Markov chain.

\smallskip
\Paragraph{An automaton strongly connected on a Markov chain.}
We say that a deterministic automaton $\aut$ is \emph{strongly connected on} a Markov chain $\markov$ if and only if
the states reachable (with positive probability) in $\aut \times \markov$ from the initial state
form an SCC\@.

\proofideas{}
The value functions $\fliminf$ and $\flimsup$ return values that occur infinitely often. Therefore, in a strongly connected Markov chain,
for every finite word $u$, the set of infinite words that contain $u$ infinitely many times has probability $0$ or $1$.
We extend this property to establish some sort of 0-1 law for NWA with $\fliminf$ or $\flimsup$ master value function (Lemma~\ref{l:in-scc-all-equal}), which states that 
 if the product of the Markov chain and the master automaton of $\nestedA$ form an SCC, then almost all words have the same value
which is the infimum returned by slave automata of $\nestedA$.

In the following result we do not assume that a given NWA accepts almost surely.

\begin{restatable}{lem}{stronglyConnectedComponenets}%
\label{l:in-scc-all-equal}
Let $g \in \FinVal$, $\markov$ be a Markov chain, and
$\nestedA$ be a deterministic $(\finf;g)$-automaton (resp., $(\fliminf; g)$-automaton).
Assume that the master automaton of $\nestedA$ is strongly connected on $\markov$.
Then, the following conditions hold:
\begin{enumerate}
\item either runs on almost all words are accepting or runs on almost all words are rejecting,
\item there exists a unique value $\lambda$ such that $\probability_{\markov}(\{ w \mid \valueL{\nestedA}(w) = \lambda \}) = 1$,
\item $|\lambda| \leq |\nestedA|\cdot |\markov|$ or $\lambda$ is extreme, i.e.,  $\lambda \in \{-\infty,\infty\}$ for $g \in \{\fsum,\fsum^+\}$ or $\lambda  \in \{ -B, B,\infty\}$ for $g = \fBsum{B}$, and
\item given $\markov$ and $\nestedA$, the value $\lambda$ can be computed in polynomial time in $|\markov| + |\nestedA|$.
\end{enumerate}
\end{restatable}
\begin{proof}
We first show duality (1) and then consider the case of almost-surely accepting NWA.\smallskip

\noindent\emph{Duality}.
Since $\nestedA$ is deterministic, all runs of $\nestedA$ on the distribution given by $\markov$ correspond to
the paths in $\masterA \times \markov$, where $\masterA$ is the master automaton of $\nestedA$.
Since $\masterA \times \markov$ is strongly connected, any finite path occurs infinitely often with the probability either $0$ or $1$~\cite{BaierBook}.
Therefore, either almost all runs satisfy the \buchi{} condition of $\masterA$ or almost all runs violate it.
We can check in polynomial time which of these two cases holds.

Now, either in almost all words all slave automata accept or
there is a state $(q,s)$ of $\masterA \times \markov$, where the master automaton invokes some slave automaton $\slaveA_i$,
which does not accept with positive probability. In the latter case, almost all words are rejected by $\nestedA$.
Observe that there exists a finite word $u$ generated by $\markov$ in state $s$ such that no finite extension of $u$ (generated by $\markov$ in state $s$) is accepted by $\slaveA_i$.
Again, since $\masterA \times \markov$ is strongly connected, the sequence of states $(q,s)$
followed by states forming the word $u$ occurs infinitely often in almost all words.
It follows that on almost all words $\nestedA$ does not have an accepting run.
We can check existence of $(q,s)$, which is visited infinitely often in $\masterA \times \markov$ and some invoked slave automaton rejects with positive probability, in polynomial time in $|\nestedA| + |\markov|$.

Recall that for a word $w$, the run of $\nestedA$ on $w$ is accepting if and only if
 the run of $\masterA$ on $w$ is accepting (it satisfies its \buchi{} condition), and
the runs of all invoked slave automata are accepting. In summary, one of the following holds:
(i) on almost all words $w$, the run of $\nestedA$ on $w$ is accepting, or
(ii) on almost all words $w$, the run of $\nestedA$ on $w$ is rejecting, and hence $\valueL{\nestedA}(w) = \infty$.
We can decide in polynomial time which of these cases holds. Moreover, if (ii) holds, then $\probability_{\markov}(\{ w \mid \valueL{\nestedA}(w) = \infty \}) = 1$, i.e., $\lambda = \infty$.
We next assume that (i) holds.
\smallskip

\noindent\emph{Almost-surely accepting NWA}.
Assume that $\nestedA$ is almost-surely accepting.
 Consider a state $(q,s)$ of $\masterA \times \markov$, where the master automaton invokes some slave automaton $\slaveA_i$, and the slave automaton $\slaveA_i$ attains its minimal value on the following letters.
(If $\slaveA_i$ does not attain its minimal value, we consider a sequence that tends to $-\infty$.)
Therefore, almost all runs contain the considered sequence infinitely often.
It follows that the value of almost all runs is the minimum over reachable states $(q,s)$ from $\masterA \times \markov$ and transitions $(s,a,s')$ of $\markov$
of the minimal value the slave automaton invoked in $(q,a,q')$ can achieve on all words generated by $\markov$ starting with the transition $(s,a,s')$.
This value can be computed in polynomial time in $|\markov| + |\nestedA|$.

Observe that either some invoked slave automaton can reach a cycle with the sum of weights being negative and iterate over it (i.e., a cycle in $\slaveA_i \times \markov$),
and hence the minimum is $-\infty$ (the minimum is bounded by $-B$ for  $g = \fBsum{B}$).
Otherwise, the minimum is attained over some word which does not form a cycle in $\slaveA_i \times \markov$, i.e., the length of this word is bounded by the number of states of $\slaveA_i$ times the size of $\markov$.
Since we consider weights to be given in the unary notation,
the sum of weights over such a word is bounded by $|\slaveA_i|$ times $|\markov|$.
Thus, $|\lambda| \leq |\nestedA| \cdot |\markov|$ or $\lambda = -\infty$ (resp., $-B$ for  $g = \fBsum{B}$).
If  $g = \fBsum{B}$ and $B < |\nestedA| \cdot |\markov|$, then $\lambda = B$.
\end{proof}

Lemma~\ref{l:in-scc-all-equal} implies the following main lemma of this section.

\proofideas{}
Consider a $(\fliminf; g)$-automaton (resp., $(\flimsup; g)$-automaton)
$\nestedA$ that accepts almost all words.
The value $\valueL{\nestedA}(w)$ depends only on the infinite behavior of the (unique) run of $\nestedA$ on $w$, which ends up in some bottom SCC (for almost all words $w$).
In a bottom SSC, almost all words have the same value, which can be computed in polynomial time (Lemma~\ref{l:in-scc-all-equal}).
Thus, to compute $\expected_{\markov}(\nestedA)$, we compute probabilities of reaching each of the
bottom SCCs and values of $\nestedA$ in these SSCs.
In a similar way, we can compute $\distrib_{\markov,\nestedA}(\const)$.

\begin{restatable}{lem}{liminfIsPolynomial}%
\label{th:liminfIsPoly}
Let $g \in \FinVal$.
For a deterministic almost-surely accepting  $(\fliminf;g)$-automata (resp., $(\flimsup; g)$-automata) $\nestedA$ and a Markov
chain $\markov$, given a threshold $\const$, both $\expected_{\markov}(\nestedA)$
and $\distrib_{\markov,\nestedA}(\const)$ can be computed in polynomial time.
\end{restatable}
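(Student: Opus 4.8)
The plan is to reduce the general case to the strongly-connected case already handled by Lemma~\ref{l:in-scc-all-equal}, exploiting that $\fliminf$ and $\flimsup$ are \emph{tail} (prefix-independent) value functions. First I would form the product Markov chain $\masterA \times \markov$ of the master automaton and $\markov$; since $\nestedA$ is deterministic this product is again a finite Markov chain whose states are pairs $(q,s)$, and it is built in polynomial time. A run of $\nestedA$ on a random word corresponds to a random walk in this product, which almost surely enters some end SCC $E$ after finitely many steps and thereafter visits every state of $E$ infinitely often.

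Next I would compute the end SCCs of $\masterA \times \markov$ (a polynomial-time SCC decomposition) and, for each end SCC $E$, the absorption probability $p_E$ that the walk is eventually trapped in $E$; these probabilities are the solution of a standard linear system and are obtained in polynomial time. Restricting $\masterA$ and $\markov$ to $E$ yields an instance in which the master automaton is strongly connected on the Markov chain: every state of $E$ is reachable from every other, and, since $E$ is an end SCC, no edge leaves $E$, so every invoked slave automaton reads a word supported on $E$. Lemma~\ref{l:in-scc-all-equal} then provides, in polynomial time, a single value $\lambda_E$ (with $|\lambda_E| \le |\nestedA|$, or $\lambda_E = -\infty$, or $\lambda_E = -B$ for $g=\fBsum{B}$) attained with conditional probability one inside $E$.

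The crucial observation is that, for almost every word, its $\nestedA$-value equals $\lambda_E$ for the unique end SCC $E$ into which its run settles. Indeed, only finitely many slave automata are invoked before the walk reaches $E$, and discarding finitely many entries of $v(\pi_1) v(\pi_2)\cdots$ changes neither its $\silent{\fliminf}$ nor its $\silent{\flimsup}$; the remaining entries are governed exactly by the restricted instance, whose value is $\lambda_E$ almost surely. Under the standing assumption (Proposition~\ref{prop:almostAll}) that acceptance has probability one, every end SCC reachable with positive probability yields accepting runs almost surely, so rejecting runs can be ignored. Consequently
\[
\expected_{\markov}(\nestedA) = \sum_{E} p_E\cdot \lambda_E,
\qquad
\distrib_{\markov,\nestedA}(\const) = \sum_{E:\,\lambda_E \le \const} p_E,
\]
where the sums range over end SCCs of $\masterA \times \markov$ reachable with positive probability.

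Each ingredient---the product, the SCC decomposition, the absorption probabilities, and the per-SCC values $\lambda_E$---is computable in polynomial time, so both quantities are obtained in polynomial time, establishing the lemma. I expect the main obstacle to be the rigorous justification that the value is almost surely determined by the terminal end SCC: one must argue that slave automata straddling the entry time into $E$ (invoked before reaching $E$ but reading the $E$-suffix) contribute only finitely many terms and hence do not affect the tail functions $\fliminf$ and $\flimsup$, and that the conditional law of the slave returns inside $E$ coincides with the instance to which Lemma~\ref{l:in-scc-all-equal} applies.
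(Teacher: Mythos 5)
Your proposal is correct and takes essentially the same route as the paper's proof: decompose $\masterA \times \markov$ into end SCCs, compute the absorption probabilities, obtain the almost-sure per-SCC value from Lemma~\ref{l:in-scc-all-equal}, and combine via the prefix-independence of the master value function, yielding exactly the two formulas you state. The only point to patch is the $\flimsup$ case: Lemma~\ref{l:in-scc-all-equal} is stated only for $\finf$/$\fliminf$ master functions, so inside each end SCC you should first negate all weights and invoke the duality of Lemma~\ref{l:sup-to-inf} to reduce $(\flimsup;g)$ to $(\fliminf;-g)$ --- which is precisely the one-line reduction the paper uses.
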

\begin{proof}
First, we discuss how to compute the expected and the distribution questions of a deterministic $(\fliminf; \fsum)$-automaton $\nestedA$.

The value of $(\fliminf; \fsum)$-automaton $\nestedA$ on a word depends on weights that appear infinitely often.
Since $\nestedA$ reaches some bottom SCC with probability $1$,
we can neglect values of slave automata returned before the master automaton $\masterA$ (of $\nestedA$)
reaches a bottom SCC of $\masterA \times \markov$.
Thus, the expected value of $(\fliminf; \fsum)$-automaton $\nestedA$ w.r.t.\ a Markov chain $\markov$ can be computed in the following way.
Let $S_1, \ldots, S_l$ be all bottom SCCs of $\masterA \times \markov$.
We compute probabilities $p_1, \ldots, p_l$ of reaching the components $S_1, \ldots, S_l$ respectively.
These probabilities can be computed in polynomial time~\cite{BaierBook}.
Next, for every component $S_i$ we compute in polynomial time the unique value $m_i$, which $\nestedA$ returns on almost every word
whose run ends up in $S_i$ (Lemma~\ref{l:in-scc-all-equal}).
The expected value $\expected_{\markov,\nestedA}$ is equal to $p_1 \cdot m_1 + \cdots + p_l \cdot m_l$.
Observe that, given a value $\const$, the distribution $\distrib_{\markov, \nestedA}(\const)$ is equal to
the sum the probabilities $p_i$ over such $i$ that $m_i \leq \const$.
Hence, the expected and the distribution questions can be computed in polynomial time.

Due to Lemma~\ref{l:sup-to-inf},  the case of $\flimsup$ reduces to the case of $\fliminf$.
All value functions from $\FinVal$ are special cases of $\fsum$.
This concludes the proof.
\end{proof}

Lemma~\ref{th:liminfIsPoly} states the the expected question and the distribution question can be computed in polynomial time.
The remaining probabilistic questions are their  special cases and hence they can be computed in polynomial time as well.
The following theorem summarizes results of this section.

\begin{thm}%
\label{th:compLimInf}
Let $g \in \FinVal$.
All probabilistic questions for deterministic almost-surely accepting $(\fliminf; g)$-automata (resp., $(\flimsup; g)$-automata)
can be solved in polynomial time.
\end{thm}

\begin{rem}[Contrast with classical questions]%
\label{remark:LimInf-classical-vs-probabilistic}
Consider the results on classical questions shown in Table~\ref{tab1} and the results for
probabilistic questions we establish in Theorem~\ref{th:compLimInf}.
While for classical questions the problems are $\PSPACE$-complete or
undecidable, we establish polynomial-time algorithms for all probabilistic questions.
\end{rem}

\subsection{The expected question for the LimAvg value function}%
\label{s:limavg}

In this section we study NWA with the $\flimavg$ value function for the master automaton.
We essentially show that to compute the expected value of a given $(\flimavg;g)$-automaton, it suffices to
substitute in each transition invoking a slave automaton $\slaveA_i$ by the expected value of $\slaveA_i$.

We assume that considered $(\flimavg;g)$-automata are deterministic and accept almost all words.
We discuss the case of almost-surely accepting NWA\@.
This assumption does not change the complexity of the expected question (Remark~\ref{rem:almost-sure}).

\begin{lem}%
\label{l:limavg-poly}
Let $g \in \FinVal$.
Given a Markov chain $\markov$ and a deterministic almost-surely accepting $(\flimavg;g)$-automaton $\nestedA$, the value $\expected_{\markov}(\nestedA)$ can be computed in polynomial time.
\end{lem}

\noindent\emph{Overview}.
We present the most interesting case when $g=\fsum$.
Let $\nestedA$ be a $(\flimavg;\fsum)$-automaton and let $\markov$ be a Markov chain.
We define a weighted Markov chain $\MCfromNested$ as the product $\masterA \times \markov$,
where $\masterA$ is the master automaton of $\nestedA$.
The weights of  $\MCfromNested$ are  the expected values of invoked slave automata, i.e.,
the weight of the transition $\tuple{(q,s),a,(q',s')}$ is the expected value of $\slaveA_i$,
the slave automaton started by $\masterA$ in the state $q$ upon reading $a$, w.r.t.\ the distribution given by $\markov$ starting in $s$.

In the remaining part of this section we show
that the expected value of $\nestedA$ w.r.t. $\markov$ and the expected value of $\MCfromNested$ coincide (Lemma~\ref{l:limavgReducesToMC}).
The Markov chain $\MCfromNested$ can be computed in polynomial time and
has polynomial size in $|\nestedA| + |\markov|$. Thus, we can compute the expected values of $\MCfromNested$, and
in turn $\expected_{\markov}(\nestedA)$, in  polynomial time in $|\nestedA| + |\markov|$.

\begin{lem}%
\label{l:limavgReducesToMC}
Let $\nestedA$ be a deterministic almost-surely accepting  $(\flimavg;\fsum)$-automaton.
The values $\expected_{\markov}(\nestedA)$  and $\expected(\MCfromNested)$ coincide.
\end{lem}

In the following we prove Lemma~\ref{l:limavgReducesToMC}.
First, we show that lemma for $(\flimavg;\fsum)$-automata in which
duration of runs of slave automata is bounded by some $N \in \N$.
Next, we show how to solve the general case of all $(\flimavg;\fsum)$-automata by the reduction to
this special case.

Note that $\MCfromNested$ can have silent moves labeled by $\bot$. Indeed, an automaton that starts in the accepting state always returns value $\bot$, which is its expected value.
Before we continue, we discuss computing the expected values of Markov chains with silent moves.

\smallskip
\Paragraph{Expected limit averages  of Markov chains with silent moves}.
\newcommand{\silentMarkov}{\markov_{\textrm{sil}}}
\newcommand{\Path}{\rho}
Let $\silentMarkov$ be a Markov chain labeled by $\Q \cup \{ \bot \}$, where $\bot$
corresponds to a silent transition. We consider the limit average value function with silent moves $\silent{\flimavg}$,
which applied to a sequence $a_1 a_2 \ldots$ of elements of $\Q \cup \{\bot\}$ removes all $\bot$ symbols are
applies the standard $\flimavg$ function (defined on the sequences of rational numbers in the similar way as for integers) to the sequence consisting of the remaining elements.
The expected value of the limit average of a path in $\silentMarkov$ can be computed by a slight modification of the standard method
for Markov chains without silent transitions~\cite{filar}.

Without loss of generality we can assume that $\silentMarkov$ is strongly connected. It if is not, we can  compute bottom strongly connected components
$B_1, \ldots, B_k$ of $\silentMarkov$, then compute probabilities $p_1, \ldots, p_k$ of reaching these
components and $\expected(\silentMarkov) = \sum_{i=1}^k p_i \expected(\silentMarkov[B_i])$, where $\silentMarkov[B_i]$ is $\silentMarkov$ with the initial state being some state from $B_i$.

Assume that  $\silentMarkov$ is strongly connected and contains non-silent transitions.
We associate with each transition $e = (s,a,s')$ of $\silentMarkov$ a real-valued variable $x[e]$, which is the frequency of transition $e$.
Formally, given an infinite path $\Path$ in $\silentMarkov$ we define $|\Path[1,n]|_e$ as the number of transitions $e$ among first $n$ transitions of $\Path$.
Let $e_1, \ldots, e_k$ be all non-silent transitions in $\silentMarkov$.
We state a system of equations and inequalities such that for almost all infinite paths $\Path$
in $\silentMarkov$ and all $i \in \{1, \ldots, k\}$ we have
\begin{equation}
\lim_{n \to \infty} \frac{|\Path[1,n]|_{e_i} }{|\Path[1,n]|_{e_1} + \cdots + |\Path[1,n]|_{e_k}} = x[e_i].
\label{eq:correctness-silent}
\end{equation}
These equations and inequalities are as follows:
\begin{enumerate}[label={(E\arabic*)}]
\item\label{E1} for every transition $e = (s,a,s')$ we put
\[
x[(s,a,s')] = E(s,a',s') \cdot \sum_{s'' \in S_{\silentMarkov}, a' \in \Sigma} x[(s'',a',s)],
\]
the frequency of $(s,a,s')$ is the probability of taking $(s,a,s')$ from $s$ multiplied by the sum of frequencies of all transitions leading to $s$,
\item\label{E2} $x[e_1] + \cdots + x[e_k] = 1$, where
the sum of frequencies of all non-silent transitions is $1$,
\item\label{E3} $0 \leq x[e]$ for every transition $e$.
\end{enumerate}

\noindent
Following the argument for Markov chains without silent moves~\cite{filar,BaierBook}, we can show that the above system of equations has the unique solution and it satisfies~\eqref{eq:correctness-silent}.
Then, the expected limit average of $\silentMarkov$ is given as
$c(e_1) \cdot x[e_1] + \cdots + c(e_k) \cdot x[e_k]$, where $c(e_i)$ is the cost of transition $e_i$.

\subsubsection{The expected value in the bounded-duration case}

First, we show that  Lemma~\ref{l:limavgReducesToMC} holds if we assume that for some $N>0$ all slave automata take at most $N$ transitions.

\begin{lem}%
\label{l:limavgReducesToMC-bounded-width}
Let $\nestedA$ be an almost-surely accepting deterministic $(\flimavg;\fsum)$-automaton
in which duration of runs of slave automata is bounded by $N$
and let $\MCfromNested$ be the weighted Markov chain corresponding to $\nestedA$.
The values $\expected_{\markov}(\nestedA)$  and $\expected(\MCfromNested)$ coincide.
\end{lem}

Before we proceed with the proof of Lemma~\ref{l:limavgReducesToMC-bounded-width}, we present an example.

\begin{exa}%
\label{ex:N-ART}
Recall the average response time property (ART) presented in Example~\ref{ex:intro}.
We consider a variant of ART called $N$-bounded ART\@. We define the \emph{$N$-bounded response time} of a request as the minimum of $N$ and the number of steps to the following grant.  The $N$-bounded ART is the limit average of $N$-bounded response times over all requests.
The $N$-bounded ART property can be computed by a $(\flimavg;\fsum^+)$-automaton $\nestedA_N$ that at each request invokes a slave automaton that takes at most $N$ steps and computes the $N$-bounded response time.
The NWA $\nestedA_N$ invokes a dummy slave automaton on the remaining transitions, which corresponds to taking a silent transition. For simplicity, we restrict the events to requests and grants only (no idle events).

We consider the uniform probability measure over ${\{r,g\}}^{\omega}$ (without events $i$), which can be given by a single-state Markov chain $\markov_U$. First, the expected $N$-bounded response time equals
\[ \sum_{i=1}^{N-1} {\left(\frac{1}{2}\right)}^{i} \cdot i + N \sum_{i=N}^{\infty} {\left(\frac{1}{2}\right)}^i = 	2 - (N-1)\cdot {\left(\frac{1}{2}\right)}^{N-1} + N \cdot  {\left(\frac{1}{2}\right)}^{N-1} = 2 -  {\left(\frac{1}{2}\right)}^{N-1}.\]
The Markov chain $\markov^{\nestedA_N}$ has a single state with a  self-loop labeled by a grant of the empty weight $\bot$ and a self-loop labeled by a request of weight $ 2 - {(\frac{1}{2})}^{N-1}$. Therefore, $\expected(\markov^{\nestedA_N}) = 2 - {(\frac{1}{2})}^{N-1}$.

Now, to compute $\expected_{\markov_U}(\nestedA_N)$ we construct a $\silent{\flimavg}$-automata $\aut_N$ that works as follows.
In each block $r^* g$, for the first $N$ requests, the automaton $\aut_N$ assigns weights $1, 2, \ldots, N$, and then
for the following requests it assigns weight $N$.
 It takes silent transitions over grants $g$. The automaton $\aut_N$ is depicted in Figure~\ref{fig:autN}.

\begin{figure}
\begin{center}
\begin{tikzpicture}
\node[draw, circle] (Q0)    at (0,0) {$q_0$};
\node[draw, circle] (Q1)    at (2,0) {$q_1$};
\node[draw, circle] (QN)   at (6,0) {$q_N$};

\node[] (dots) at (4,0) {\ldots};

\draw[bend left,->] (Q0) to node[above] {$(r,1)$} (Q1);
\draw[bend left,->] (Q1) to node[above] {$(r,2)$} (dots);
\draw[bend left,->] (dots) to node[above] {$(r,N)$} (QN);

\draw[loop left, ->] (Q0) to node[left] {$(g, \bot)$} (Q0);
\draw[loop right, ->] (QN) to node[right] {$(r,N)$} (QN);

\draw[bend left,->] (Q1) to node[below ] {$(g,\bot)$} (Q0);
\draw[bend left= 80,->] (dots) to node[below] {$(g,\bot)$} (Q0);
\draw[bend left = 100,->] (QN) to node[below] {$(g,\bot)$} (Q0);

\end{tikzpicture}
\end{center}
\caption{The automaton $\aut_N$}%
\label{fig:autN}
\end{figure}

Observe that in each block $r^k g$ the $N$-bounded response times are
\[
    \min(N,k), \min(N,k-1), \ldots, 2, 1,
\]
while the weights returned by the automaton $\aut$ are
\[
    1,2, \ldots, \min(N,k-1), \min(N,k).
\]
On all words $w$ with infinitely many grants  the values $\lang_{\nestedA_N}(w)$ and $\lang_{\aut_N}(w)$ are equal.
Therefore, $\expected_{\markov_U}(\nestedA_N) = \expected_{\markov_U}(\aut_N)$.

We compute the $\expected_{\markov_U}(\aut_N)$ in the standard way. Let $x_i$ be the density of visiting state $q_i$ in $\aut_i$.
Clearly, for $i=0, \ldots, N-1$, we have $x_i = {(\frac{1}{2})}^{i}$ and $x_N = {(\frac{1}{2})}^{N-1}$.
Since transitions labeled with $g$ are silent,
$\expected_{\markov_U}(\aut_N) = (\sum_{i=0}^{N-1} x_i \cdot (i+1)) + N \cdot x_N =  2 -  {(\frac{1}{2})}^{N-1}$.
Thus,
the values $\expected_{\markov}(\nestedA_N)$  and $\expected(\markov^{\nestedA_N})$ coincide.
\end{exa}

\smallskip
\Paragraph{The plan of the proof}.
We define a $\silent{\flimavg}$-automaton $\nonnestedA$ that simulates runs of
$\nestedA$; the value on $\nonnestedA$ on every word coincides with $\nestedA$.
Then, we transform the Markov chain $\nonnestedA \times \markov$ into
 a Markov chain $\markov_E$ by adjusting its weights only.
We change all weights to the empty weight $\bot$ except for the transitions
corresponding to the invocation of slave automata, where the weight is the expected value of the invoked slave automaton w.r.t.
the distribution given by $\markov$ in the current state.
In the proof we argue that the expected values of limit average of $\nonnestedA \times \markov$ and $\markov_E$ coincide.
We show that by looking at the linear equations corresponding to computing the expected limit average of each of the Markov chains.
Basically, the frequency of each transition is the same in both Markov chains and changing the value of the slave automaton
from its actual value to the expected value does not affect the solution to the set of equations.
Next, we observe that runs of slave automata past the first transition do not matter. Indeed, all runs of slave automata are accepting and all weights past
the first transition are $0$. Thus, we can reduce $\markov_E$ to a Markov chain $\markov_R$ by projecting out information about the runs of  slave automata past the first transition.
Finally, we observe that the Markov chain $\markov_R$ is in fact $\MCfromNested$. Hence, we have shown that
\[
\expected_{\markov}(\nestedA) = \expected_{\markov}(\nonnestedA) = \expected (\markov_E) = \expected(\markov_R) = \expected(\MCfromNested)
\]
\smallskip

\begin{proof}
Every slave automaton of $\nestedA$ takes at most $N$ steps. Therefore,
$\nestedA$ has width bounded by $N$.
Moreover, without loss of generality, we assume that each slave automaton
takes transitions of weight $0$ except for the last transition, which may have a non-zero weight,
and all slave automata are either trivial, i.e., they start in the accepting state and take no transitions,
or they take precisely $N$ transitions.
Basically, slave automata may keep track of the accumulated values
and the number of steps in their states.

\Paragraph{The automaton $\nonnestedA$}. Let
$Q_{mas}$ be the set of states of the master automaton of $\nestedA$ and let
$Q_s$ be the union of the set of states of the slave automata of $\nestedA$.
We define $\nonnestedA$ as a $\silent{\flimavg}$ automaton over the set of states $Q_{mas} \times {(Q_s \cup \{\bot\})}^N$.
The component $Q_{mas}$ is used to keep track of the run of the master automaton while
the component  ${(Q_s \cup \{\bot\})}^N$ is used to keep track of up to $N$ slave automata running concurrently.
The symbol $\bot$ corresponds to an empty slot that can be used to simulate another slave automaton.
Since $\nestedA$ has width bounded by $N$, the automaton $\nonnestedA$ can simulate the Boolean part of the run of $\nestedA$.
The weight of a transition of $\nonnestedA$ is either $\bot$ if no automaton terminates or it is
the value of a terminating slave automaton (non-trivial slave automata take precisely $N$ steps, so at most one can terminate at each position).
Transitions at which no slave automaton terminates are silent transitions.
The automata $\nestedA$ and $\nonnestedA$ encounter the same weights but differ in their aggregation.
The value of a slave automaton is associated to the position at which it is invoked, while in $\nonnestedA$ it is associated with
the position at which the slave automaton terminates.
However, these positions differ by $N$, therefore
the limit averages of both sequences coincide.
Hence, for every word $w$, the values $\valueL{\nestedA}(w)$ and $\valueL{\nonnestedA}(w)$ coincide.
It follows that $\expected_{\markov}(\nestedA) = \expected_{\markov}(\nonnestedA)$.

\Paragraph{The Markov chain $\markov_E$}. We define $\markov_E$ as $\nonnestedA \times \markov$ with altered weights defined as follows.
All transitions which correspond to the invocation of a slave automaton $\slaveA_i$ with the state of the Markov chain $\markov$ being $s$
have weight equal to the expected value of $\slaveA_i$ w.r.t.\ the distribution given by $\markov$ starting in the state $s$.
Other transitions are silent.

\Paragraph{Expected values of $\masterA \times \markov$ and $\markov_E$ coincide}.
Assume that $\masterA \times \markov$ and $\markov_E$ are strongly connected. If they are not,
we can apply the following reasoning for all bottom strongly connected components of both Markov chains as they have the same underlying structure.

Recall that the expected limit average of a Markov chain with silent moves is given by
$c(e_1) \cdot x[e_1] + \cdots + c(e_k) \cdot x[e_k]$ where
variables $x[e]$, over all transitions $e$, form a solution to the system of equations and inequalities~\ref{E1},~\ref{E2} and~\ref{E3}, and
$e_1, \ldots, e_k$ are all non-silent transitions.
Now, observe that the equations~\ref{E1} and inequalities~\ref{E3} are the same for both Markov chains
 $\nonnestedA \times \markov$ and  $\markov_E$ as they have the same structure with the same probabilities.
The equation~\ref{E2} is, in general, different for $\nonnestedA \times \markov$ and for $\markov_E$.
However, non-silent transitions of $\nonnestedA \times \markov$, denoted by $e_1, \ldots, e_k$,
are all states at which at least one slave automaton terminates, while
non-silent transitions of $\markov_E$, denoted by $e_1', \ldots, e_l'$ are all states at which some (non-trivial) slave automaton is invoked.
Observe that every terminating slave automaton has been invoked, and, in $\nonnestedA$,
every invoked slave automaton terminates. Therefore, the sum of frequencies of invocations
and terminations of slave automata are equal, i.e., equations~\ref{E1} imply
\[
x[e_1] + \cdots + x[e_k] = x[e_1']+ \cdots +x[e_l'].
\]
 It follows that the unique solution to equations and inequalities~\ref{E1},~\ref{E2} and~\ref{E3}
corresponding to $\nonnestedA \times \markov$ and to $\markov_E$ are the same.
It remains to show that
\[ c(e_1) \cdot x[e_1] + \cdots + c(e_k) \cdot x[e_k] =
 c'(e_1') \cdot x[e_1'] + \cdots + c'(e_l') \cdot x[e_l'],
\]
 where $c$ (resp. $c'$) are weights in  $\nonnestedA \times \markov$ (resp., $\markov_E$).

Since $c'(e')$ is the expected value of the slave automaton started at $e'$,
the expected value $c'(e')$ is given by $c'(e') = \sum_{e'' \in T} p(e',e'') \cdot c(e'')$, where
$T$ is the set of transitions that correspond the the final transitions of the slave automaton started at the transition $e'$, and
$p(e',e'')$ is the probability of reaching the transition $e''$ from $e'$ omitting the set $T$.
Indeed, each (non-trivial) slave automaton takes precisely $N$ transitions, hence
at each position at most one non-trivial slave automaton terminates and
$c(e'')$ is the value of the slave automaton terminating at $e''$.
Therefore, $c'(e') = \sum_{e'' \in T} p(e',e'') \cdot c(e'')$.

Now, we take  $c'(e_1') \cdot x[e_1'] + \cdots + c'(e_l') \cdot x[e_l']$ and substitute each $c(e_i')$ by
the corresponding $c'(e_i') = \sum_{e'' \in T_i} p(e',e'') \cdot  c(e'')$.
Then, we now group in all the terms by $e''$ and we get
\[
 c'(e_1') \cdot x[e_1'] + \cdots + c'(e_l') \cdot x[e_l'] = \sum_{i=1}^k c(e_i) \cdot \big(x[e_1'] \cdot p(e_1',e_i) + \cdots + x[e_l'] \cdot  p(e_l',e_i)\big)
\]
Observe that the frequency of taking the transition $e_i$ at which some slave automaton $\slaveA$ terminates is equal to the
sum of frequencies on transitions at which this slave automaton $\slaveA$ has been invoked, in which each frequency is multiplied by the probability of reaching
 the terminating transition $e_1$ from a given invoking transition.
Therefore, we have
\[
x[e_1'] \cdot  p(e_1',e_i) + \cdots + x[e_l']\cdot   p(e_l',e_i) = x[e_i].
\]
It follows that
\[ c(e_1) \cdot x[e_1] + \cdots + c(e_k) \cdot x[e_k] =
 c'(e_1') \cdot x[e_1'] + \cdots + c'(e_l') \cdot x[e_l']
\] and
$ \expected_{\markov}(\nonnestedA) = \expected (\markov_E)$.

\Paragraph{The Markov chain $\markov_R$}.
We construct $\markov_R$ from $\markov_E$ by projecting out the component ${(Q_s \cup \{\bot\})}^N$.
We claim that this step preserves the expected value.
First, observe that the distribution is given by an unaffected component $\markov$ and the weights depend only on
the state of the Markov chain $\markov$ and the state of the master automaton $\masterA$. Thus, projecting out
the component ${(Q_s \cup \{\bot\})}^N$ does not affect the expected value, i.e.,
$ \expected_{\markov}(\markov_E) = \expected (\markov_R)$. Now, observe that the set of states of $\markov_R$ is
$Q_{mas} \times Q_{\markov}$.
Observe that the probability and the weights of the transitions of $\markov_R$ match the conditions of the definition
of $\MCfromNested$. Therefore, $\markov_R = \MCfromNested$.
\end{proof}

\subsubsection{Reduction to the bounded-duration case}
Let $\nestedA$ be a $(\flimavg;\fsum)$-automaton.
For every $N$, we define $\nestedA^N$ as $\nestedA$ with the bound $N$ imposed on slaves, i.e.,
each slave automaton terminates either by reaching an accepting state or
when it takes the $N$-th step. Let $\MCfromNested_N$ be the Markov chain that corresponds to $\nestedA^N$.
Observe that as $N$ tends to infinity, weights in $\MCfromNested_N$ converge
to the weights in $\MCfromNested$. It remains to be shown that, as $N$ tends to
infinity, the expected values of $\nestedA^N$ converge to the expected value of $\nestedA$.
We show in the following Lemma~\ref{l:convergence} that random variables generated by
$\nestedA^N$ converge in probability to the random variable generated by $\nestedA$, i.e.,
 for every $\epsilon > 0$ we have
\[
\lim_{N \rightarrow \infty} \probability_{\markov}(\{ w \mid |\valueL{\nestedA}(w) - \valueL{\nestedA^N}(w)| \geq \epsilon \}) = 0
\]
Convergence in probability implies convergence of the expected values.
It follows that the expected values of
$\nestedA$ and $\MCfromNested$ coincide.

\begin{lem}%
\label{l:convergence}
The random variables defined by ${\{ \lang_{\nestedA^N} \}}_{N\geq 0}$ converge in probability to
the random variable defined by $\nestedA$.
\end{lem}

\begin{exa}
Recall Example~\ref{ex:N-ART}.
Lemma~\ref{l:convergence} implies that with $N$ tending to infinity,
the limit of the expected $N$-bounded ARTs converges to the expected ART\@.
For $N>0$, the expected $N$-bounded ART is
$\expected_{\markov_U}(\aut_N) =  2 -  {(\frac{1}{2})}^{N-1}$.  Therefore, the expected ART is $\lim_{N \to \infty} \expected_{\markov_U}(\aut_N) = 2$.
\end{exa}

\begin{proof}
\newcommand{\excessA}[1]{\ensuremath{\nestedA^{\geq #1}}}
\newcommand{\partialExcessA}[1]{\nestedA[#1]}
\newcommand{\flimavgsup}{\textsc{LimAvgSup}}
We define an $(\flimavgsup; \fsum)$-automaton $\excessA{N}$ as the automaton obtained from $\nestedA$ in the following way.
First, each slave automaton take transitions of weight $0$ for the first (up to) $N$ steps, past which it takes transitions of weight $1$
until it terminates.
Second, the value function of the master automaton is $\flimavgsup$ defined on  $a_1, a_2, \ldots$ as
$\flimavgsup(a_1 \ldots ) = \limsup_n \frac{1}{n} \sum_{i=1}^{n} a_i$.
Intuitively, the automaton $\excessA{N}$ computes the limit average (supremum) of the steps slave automata take above the threshold $N$.
Let $C$ be the maximal absolute weight in slave automata of $\nestedA$. Then, for every word $w$ we have
\[
\valueL{\nestedA^N}(w) - C \cdot \valueL{\excessA{N}}(w) \leq \valueL{\nestedA}(w) \leq \valueL{\nestedA^N}(w) + C \cdot \valueL{\excessA{N}}(w).
\]
It follows that
\[
\probability_{\markov}(\{ w \mid |\valueL{\nestedA}(w) - \valueL{\nestedA^N}(w)| \geq \epsilon \}) =
\probability_{\markov}(\{ w \mid |\valueL{\excessA{N}}(w)| \geq \frac{\epsilon}{C} \})
\]
We show that with $N$ increasing to infinity,
probabilities $\probability_{\markov}(\{ w \mid |\valueL{\excessA{N}}(w)| \geq \frac{\epsilon}{C}\} )$
converge to $0$.
From that we conclude that
random variables $\lang_{\nestedA^N}$ converge in probability to $\lang_{\nestedA}$ as $N$ tends to infinity.

Observe that for every word $w$ and every $N$ we have  $0 \leq \valueL{\excessA{N}}(w)$ and $\valueL{\excessA{N}}(w) \geq \valueL{\excessA{N+1}}(w)$.
Therefore, we only need to show that
 for every $\epsilon > 0$ there for $N$ large enough $\expected_{\markov}(\excessA{N}) \leq \epsilon$. Then, by Markov inequality,
$\probability_{\markov}(\{ w \mid |\valueL{\excessA{N}}(w)| \geq \sqrt{\epsilon}) < \sqrt{\epsilon}$.

To estimate the value of $\expected_{\markov}(\excessA{N})$ we consider $\silent{\flimavgsup}$-automata $\partialExcessA{K,i}$ defined as follows.
The automaton $\partialExcessA{K,i}$ simulates the master automaton $\nestedA$ and slaves that are invoked at positions $\{ K\cdot l + i \mid l \in \N \}$.
For every $l>0$, the transition at the position $K \cdot (l+1) + i $ has the weight $1$ if the slave invoked at the position $K \cdot l + i$ works for at least $K$ steps.
Otherwise, this transition has weight $0$. On the remaining positions, transitions have weight $0$.
Observe that due to distributivity of the limit supremum, the limit average supremum of the number of slave automata that take at least $K$ steps
at a given word $w$ is bounded by $\sum_{i=0}^{K-1} \valueL{\partialExcessA{K,i}}(w)$. It follows that for every word $w$ we have
$\valueL{\excessA{N}}(w) \leq \sum_{K \geq N} \sum_{i=0}^{K-1} \valueL{\partialExcessA{K,i}}(w)$. Therefore,
\[
\text{(*)}\quad \expected_{\markov}(\excessA{N}) \leq \sum_{K \geq N} \sum_{i=0}^{K-1} \expected_{\markov}(\partialExcessA{K,i}).
\]

Now, we estimate $\expected_{\markov}(\partialExcessA{K,i})$.
Let $n$ be the maximal size of a slave automaton in $\nestedA$ and let $k$ be the number of slave automata.
We assume, without loss of generality, that every state of slave automata is reached along some run on words generated by $\markov$.
Now, observe that from every state of slave automata some accepting state is reachable.
Otherwise, there would be a set of strictly positive probability at which $\nestedA$ does not accept.
Moreover, as it is reachable, it is reachable within $n$ steps.
Therefore,
the probability $p$ such that any slave automaton in any state
terminates after next $n$ steps is greater than $0$.
It follows that $\expected_{\markov}(\partialExcessA{K,i}) \leq \frac{1}{K} p^{\lfloor \frac{K}{n} \rfloor}$.
We apply this estimate to (*) and obtain $\expected_{\markov}(\excessA{N}) \leq \sum_{K \geq N}  p^{\lfloor \frac{K}{n} \rfloor} \leq n \cdot \frac{p^{\lfloor \frac{N}{n} \rfloor }}{1-p}$.
Therefore, $\expected_{\markov}(\excessA{N})$ converges to $0$ as $N$ increases to infinity.
Finally, this implies that
$\nestedA^N$ converges in probability to $\nestedA$ as $N$ tends to infinity.
\end{proof}

\subsection{The distribution question for the LimAvg value function}

\begin{restatable}{lem}{LimAvgDistribution}%
\label{l:limavg-dist-poly}
Let $g \in \FinVal$.
Given a Markov chain $\markov$, a deterministic almost-surely accepting  $(\flimavg; g)$-automaton $\nestedA$ and a value $\const$,
the value $\distrib_{\markov,\nestedA}(\const)$ can be computed in polynomial time.
\end{restatable}

\proofideas{}
We show that the distribution is discrete. More precisely,
let $\aut$ be the product of the Markov chain $\markov$ and the master automaton of $\nestedA$.
We show that almost all words, whose run end up in the same bottom SCC of $\aut$,
have the same value, which is equal to the expected value over words that end up in that SCC\@.
Thus, to answer the distribution question, we have to
compute for every bottom SCC $C$ of $\aut$, the expected value over words that end up
in $C$ and the probability of reaching $C$.
Both values can be computed in polynomial time (see Lemma~\ref{l:limavg-poly}).

\begin{proof}
Let $\nestedA$ be a deterministic $(\flimavg;\fsum)$-automaton with the master automaton $\masterA$ and let $\markov$ be a Markov chain.
Moreover, let  $\MCfromNested$ be the Markov chain obtained from $\markov$ and $\nestedA$.
We show that the distribution $\distrib_{\markov,\nestedA}$ and the distribution defined by ${\MCfromNested}$ coincide.

\Paragraph{The single-SCC case}. Assume that $\masterA \times \markov$ is an SCC\@.
Observe that the event ``the value of $\nestedA$ equals  $\const$''
is a tail event w.r.t.\ the Markov chain $\markov$, i.e., it does not depend on finite prefixes.
Therefore, its probability is either $0$ or $1$~\cite{feller}. It follows that
the value of almost all words is equal to the expected value of $\nestedA$.
Now, $\MCfromNested$ is structurally the same as $\masterA \times \markov$, hence it is also an SCC\@.
Therefore, also in  $\MCfromNested$ almost all words have the same value, which is equal to $\expected(\MCfromNested)$.
As $\expected_{\markov}(\nestedA) = \expected(\MCfromNested)$ (Lemma~\ref{l:limavgReducesToMC}) we have
$\distrib_{\markov,\nestedA}$ and the distribution defined by ${\MCfromNested}$ coincide.
\smallskip

\Paragraph{The general case}. Consider the case where $\masterA \times \markov$ consists multiple bottom SCCs $S_1,  \ldots, S_k$.
Using conditional probability, we can repeat the single-SCC-case argument to show that in all bottom SCCs $S_1,  \ldots, S_k$
the values of $\nestedA$ are the same and equal to the expected values in these SCCs.
Similarly, in each bottom SCC of ${\MCfromNested}$, all words have the same value, which is equal to the expected value in that SCC\@.
Since $\masterA \times \markov$ is structurally the same as $\MCfromNested$, each SCC $S_1,  \ldots, S_k$  corresponds to an SCC in
${\MCfromNested}$. Lemma~\ref{l:limavgReducesToMC} states that $\expected_{\markov}(\nestedA) = \expected(\MCfromNested)$.
By applying Lemma~\ref{l:limavgReducesToMC}  to $\markov$ and $\nestedA$ with different initial states
of $\markov$ and $\masterA$ (in each $S_1, \ldots, S_k$), we infer that in every SCC $S_1,  \ldots, S_k$
the expected values of $\nestedA$ and $\MCfromNested$ coincide. Therefore, the distribution
 $\distrib_{\markov,\nestedA}$ and the distribution defined by ${\MCfromNested}$ coincide.
\end{proof}

\begin{thm}%
\label{th:compLimAvg}
Let $g \in \FinVal$.
All probabilistic questions for deterministic almost-surely accepting  $(\flimavg; g)$-automata
can be solved in polynomial time.
\end{thm}

\begin{rem}[Contrast with classical questions]%
\label{remark:LimAvg-classical-vs-probabilistic}
Our results summarized in Theorem~\ref{th:compLimAvg} contrast the
results on classical questions shown in Table~\ref{tab1}.
While classical questions are $\PSPACE$-complete, in $\EXPSPACE$ or open,
we establish polynomial-time algorithms for all probabilistic questions.
\end{rem}

\section{Results on non-limit value functions}%
\label{s:nolimit}%
\label{s:inf}

In this section we study NWA with $\finf, \fsup$ value functions for the master automaton.
In contrast to $\fliminf$ and $\flimsup$ value functions, for which all probabilistic questions can
be answered in polynomial time (Theorem~\ref{th:compLimInf}), we show \#P-hardness, $\PSPACE$-hardness and uncomputability results
for $\finf, \fsup$ value functions as well as exponential-time upper bounds in some cases.

In contrast to $\fliminf$ and $\flimsup$ value functions the almost-sure acceptance condition
does not change the complexity of the probabilistic questions.
We show the lower bounds under the almost-sure acceptance condition. However, for the upper bounds
we do not assume almost-sure acceptance of NWA\@. We first present several hardness results
for $\finf$ and $\fsup$ value functions.

\subsection{Lower bounds for all value functions for slave automata}

We present the key ideas of the hardness results.

\smallskip\noindent{\em $\PSPACE$-hardness}.
NWA can invoke multiple slave automata working independently over the same finite subword, which we use to express the problem:
is the intersection of given regular languages empty, which is $\PSPACE$-complete.
We transform given DFA $\aut_1, \ldots, \aut_k$ into slave automata that return $1$ if the original automaton accepts and $0$ otherwise.
The resulting NWA $\nestedA$ is deterministic, accepts almost all words and $\distrib_{\calU,\nestedA}(0)=1$ if the intersection is empty.
Note however, that words of the minimal length in the intersection can have exponential length and their probability can be doubly-exponentially small.
Therefore, even if $\distrib_{\calU,\nestedA}(0) \neq 1$, the difference between $\distrib_{\calU,\nestedA}(0)$ and $1$ can be small and hence
$\PSPACE$-hardness does not apply to the approximation problems (which we establish below).

\smallskip\noindent{\em $\#P$-hardness}.
We show $\#P$-hardness of the approximate probabilistic questions by reduction from
$\#\SAT$, which is $\#P$-complete~\cite{valiant1979complexity,papadimitriou2003computational}.
The $\#\SAT$ problem asks for the number of variable assignments satisfying a given CNF formula $\varphi$.
In the proof, the input word gives an assignment, and each slave automaton checks the satisfaction of one clause and returns $1$ if it is satisfied and $0$ otherwise.
Therefore, all slave automata return $1$ if and only if all clauses are satisfied, and hence
we can compute the number of satisfying assignments of $\varphi$ from $\expected_{\calU}(\nestedA)$ and $\distrib_{\calU, \nestedA}(0)$,
where $\calU$ is the uniform distribution over infinite words.
The lower bounds hold even under the additional almost-sure acceptance condition.

\begin{restatable}[Hardness results]{lem}{InfimumIsHard}%
\label{l:hardness-for-det-inf}
Let $g \in \FinVal$ be a value function, and $\calU$ denote the uniform distribution over the infinite
words.
\begin{enumerate}
\item The following problems are $\PSPACE$-hard:
Given a deterministic almost-surely accepting $(\finf;g)$-automaton (resp., $(\fsup; g)$-automaton) $\nestedA$,
decide whether $\expected_{\calU}(\nestedA)=0$; and decide whether $\distrib_{\calU,\nestedA}(0)=1$?
\item The following problems are \#P-hard:  Given $\epsilon > 0$ and a deterministic almost-surely accepting $(\finf;g)$-automaton (resp., $(\fsup; g)$-automaton)
$\nestedA$, compute $\expected_{\calU}(\nestedA)$ up to precision $\epsilon$; and compute $\distrib_{\calU, \nestedA}(0)$ up to precision $\epsilon$.
\end{enumerate}
\end{restatable}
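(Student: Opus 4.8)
The plan is to reduce from canonical hard problems into the evaluation of deterministic $(\finf;g)$-automata (or $(\fsup;g)$-automata) under the uniform distribution $\calU$. For the $\PSPACE$-hardness of part~(1), I would reduce from the universality problem for nondeterministic finite automata (NFA) over finite words, which is $\PSPACE$-complete. Given an NFA $N$ over $\Sigma$, I construct a slave automaton $\slaveA$ (a $g$-automaton) that, started at some position, scans a finite block of the input and detects whether the block read encodes a \emph{bad} witness --- namely a word rejected by $N$ (equivalently, by determinizing on-the-fly only implicitly through the construction). The slave returns value $0$ on all ``good'' blocks and a strictly negative value (say $-1$) on a ``bad'' block. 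The master automaton, with value function $\finf$, takes the infimum over all invoked slaves, so $\nestedA(w)=0$ if and only if no block of $w$ is bad, and $\nestedA(w)<0$ as soon as one bad block occurs. Then $\expected_{\calU}(\nestedA)=0$ (equivalently $\distrib_{\calU,\nestedA}(0)=1$) precisely when almost no word contains a bad block, which I arrange to hold iff $N$ is universal; the key is that under $\calU$ every finite pattern occurs with probability~1, so a single realizable bad block has positive (indeed full) probability. The main subtlety here is encoding the \emph{complement} of $N$ into a \emph{deterministic} slave without an exponential blowup --- I expect to handle this by having the slave guess the run boundaries implicitly through the uniform structure of the input encoding, or by reducing instead from the complement of the succinct problem so that the deterministic slave directly checks membership in an easily-decidable language while the hardness comes from the emptiness of a product whose states are encoded in the input.

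For the $\#P$-hardness of part~(2), I would reduce from the canonical $\#P$-complete problem of counting satisfying assignments of a Boolean formula, \#\textsf{SAT}. The idea is to let a prefix of the random word encode a truth assignment to the $n$ variables (uniformly chosen by $\calU$, so each assignment has probability $2^{-n}$ on the relevant coordinates), and then design a deterministic slave $g$-automaton that, reading the remaining encoding of the formula $\varphi$ interleaved with the assignment bits, returns $0$ if the assignment satisfies $\varphi$ and a fixed nonzero value otherwise. Under $\finf$ (taking the infimum over all invocations) I would structure the master so that $\nestedA(w)$ reflects exactly whether the sampled assignment satisfies $\varphi$; then $\expected_{\calU}(\nestedA)$ or $\distrib_{\calU,\nestedA}(0)$ is an affine function of the number of satisfying assignments divided by $2^n$. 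Consequently, approximating the expected value (or the distribution at threshold $0$) up to a sufficiently small additive $\epsilon$ --- polynomially smaller than $2^{-n}$ --- would determine the exact count, giving the $\#P$-hardness. The delicate point is arranging the precision: since approximation only asks for additive $\epsilon$, I must ensure the reduction produces a gap of size at least, say, $2^{-n}$ between consecutive possible answer values, and that $\epsilon$ can be taken as an input smaller than this gap, so that an $\epsilon$-approximation pins down the exact number of satisfying assignments.

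Throughout, I would unify the treatment of the various finite value functions $g\in\FinVal=\{\fmax,\fmin,\fBsum{B},\fsum\}$ (and $\fsum^+$) by noting that each can simulate a simple indicator of the form ``output $0$ on a good block and a fixed negative (or nonzero) weight on a bad block,'' which is all the reductions require; the specific aggregation of $g$ over the finite block is irrelevant as long as the slave can force this binary distinction, for instance by emitting a single nonzero weight exactly when the undesirable event is detected. The $\fsup$ case is handled symmetrically to $\finf$ by flipping signs and thresholds (replacing infimum-of-negative witnesses by supremum-of-positive witnesses). I expect the main obstacle to be the first reduction: keeping the slave automaton deterministic while encoding an inherently nondeterministic or succinct $\PSPACE$-hard check, which I would overcome by pushing the nondeterminism into the uniformly random input word itself rather than into the automaton, so that $\calU$ effectively performs the guessing and the deterministic slave only verifies.
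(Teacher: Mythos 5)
Both of your reductions are missing the one idea that makes the paper's proof work: a deterministic NWA can invoke \emph{several} slave automata at consecutive positions so that they all scan the \emph{same} finite block of the input, and the master's $\finf$ (resp.\ $\fsup$) value function then turns their individual $0/1$ verdicts into a conjunction. This is exactly what lets a polynomial-size \emph{deterministic} NWA perform a check that no single polynomial-size deterministic automaton can. For part~(1) the paper does not reduce from NFA universality but from the \emph{intersection emptiness} problem for DFAs $\aut_1,\dots,\aut_n$: the master invokes $n$ deterministic slaves $\slaveA_1,\dots,\slaveA_n$ in its first $n$ steps, slave $\slaveA_i$ skips an offset so that all slaves align on the same block $v$, simulates $\aut_i$ on $v$, and returns $1$ if $\aut_i$ accepts and $0$ otherwise; the infimum is $1$ iff $v$ lies in the intersection, so the intersection is empty iff $\expected_{\calU}(\nestedA)=0$ iff $\distrib_{\calU,\nestedA}(0)=1$. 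Your starting point, NFA universality, cannot be salvaged by your suggested workarounds: whether a block is \emph{rejected} by an NFA is a universal condition over runs, so a single deterministic polynomial-size slave cannot verify it (determinization blowup), and ``pushing the nondeterminism into the random word'' only replaces existential guessing, not co-nondeterministic verification. Your second, vaguer suggestion (``emptiness of a product whose states are encoded in the input'') gestures toward intersection emptiness, but without the multi-slave mechanism you have no way to realize the product check with polynomial-size deterministic components.

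The same gap breaks your part~(2). You propose a \emph{single} deterministic slave that reads the assignment bits and decides whether the assignment satisfies the whole CNF $\varphi$. A one-pass deterministic automaton doing this must track which clauses are already satisfied, which requires exponentially many states in general: for $(x_1\vee y_1)\wedge\dots\wedge(x_m\vee y_m)$ read in the order $x_1\dots x_m y_1\dots y_m$, any such automaton needs $2^m$ states, so the reduction is not polynomial-time. The interleaving variant you sketch does not help either: if the assignment is repeated once per clause, the copies are independent under $\calU$, a small deterministic automaton cannot check their consistency, and without consistency the expectation factors into a product of per-clause satisfaction probabilities, which does not count satisfying assignments of $\varphi$. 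The paper's construction again uses one slave \emph{per clause} $C_i$, all invoked during the first $m$ steps and all reading the same $n$ assignment bits after skipping their offsets; the infimum of the returned values is $1$ iff the single sampled assignment satisfies every clause, giving exactly $\expected_{\calU}(\nestedA)=1-\distrib_{\calU,\nestedA}(0)=(\text{number of satisfying assignments})/2^{n}$, after which your precision argument (take $\epsilon<2^{-n-1}$) goes through. So your high-level skeleton for part~(2) — reduce from \#SAT, make the answer an affine function of the count, pin the count down with small $\epsilon$ — matches the paper, and your closing remarks about handling all $g\in\FinVal$ via a single nonzero weight and about obtaining $\fsup$ by sign-flipping are fine; but as written, both of your constructions require exponential-size deterministic slaves and therefore establish no hardness.
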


\begin{proof}
We present the following argument for $g  = \fmin$. The same proof works for $g = \fmax$.
Lemma~\ref{l:sup-to-inf} implies that problems in (1) and (2) for nested weighted automaton with $\fsup$ value function reduce to the corresponding problems for  nested weighted automaton with $\finf$ value function.
Since $\fmin$ can be regarded as a special case of $\fBsum{B}, \fsum^+$ or $\fsum$, the result holds for these functions as well.
Hence, we consider only the case of $(\finf, \fmin)$-automata.

\Paragraph{PSpace-hardness}: We show $\PSPACE$-hardness by reduction from the emptiness problem for the intersection of regular languages.
Let $\lang_1, \ldots, \lang_n \subseteq {\{a,b\}}^*$ be regular languages recognized by
deterministic finite automata $\aut_1, \ldots, \aut_n$.
We define a deterministic $(\finf;\fmin)$-automaton $\nestedA$ that at the first $n$ steps starts slave automata
$\slaveA_1, \ldots, \slaveA_n$ and then it invokes only a dummy slave automaton that returns $1$ after a single step.
For every $i$, the slave automaton $\slaveA_i$ first reads $n-i$ letters which it ignores, then
it simulates $\aut_i$ until the first $\#$ when it terminates. It returns $1$ if the simulated automaton $\aut_i$ accepts
and $0$ otherwise. More precisely, $\slaveA_i$ works on subwords $uv \#$, where $u \in {\{a,b,\#\}}^{n-i}, v \in {\{a,b \}}^*$ and
returns $1$ if $v \in \lang_i$ and $0$ otherwise.
Observe that on a word $w = u v \# w'$ where $u \in {\{a,b,\# \}}^{n}, v \in {\{a,b\}}^*$ and $w' \in {\{a,b,\# \}}^{\omega}$,
the automaton $\nestedA$ returns $1$ if and only if all automata $\aut_1, \ldots, \aut_n$ accept $v$.
Otherwise, $\nestedA$ assigns value $0$ to $w$.
In consequence, the following conditions are equivalent:
(1)~the intersection  $\lang_1 \cap \ldots \cap \lang_n$ is empty,
(2)~the expected value $\expected_{\calU}(\nestedA)$ is $0$, and
(3)~the distribution $\distrib_{\calU, \nestedA}(0) = 1$.
Note that the almost-sure distribution question in $\PSPACE$-hard as well.

Observe that if the intersection $\lang_1 \cap \ldots \cap \lang_n$ is non-empty it might be the case that the word of the minimal
length in the intersection consists of a single word of  exponential length. In such a case, the values
$\expected_{\calU}(\nestedA)$ and $|1-\distrib_{\calU, \nestedA}(0)|$ are non-zero, but doubly-exponentially small.
Therefore, we cannot use this reduction to show hardness of the approximate versions of the probabilistic problems.

\Paragraph{$\#P$-hardness}:
We show \#P-hardness by reduction from the \#SAT problem, which, given a propositional formula $\varphi$ in conjunctive normal form
asks for the number of valuations that satisfy $\varphi$.
Let $n$ be the number of variables of $\varphi$ and let $C_1, \ldots, C_m$ be the clauses of $\varphi$.
For every $i \in [1,m]$, we define a slave automaton $\slaveA_i$ (associated with $C_i$) that ignores the first $m-i$ letters, next
considers the following $n$ letters $0,1$ as the valuation of the successive variables and checks
whether this valuation satisfies the clause $C_i$. If it does, the slave automaton returns $1$, otherwise it returns $0$.
The master automaton first invokes slave automata $\slaveA_1, \ldots, \slaveA_m$ and then it invokes
a dummy slave automaton that returns $1$ after a single step.
Observe that for $w = uvw'$, where $u \in {\{0,1\}}^m, v \in {\{0,1\}}^n$ and $w' \in {\{0,1\}}^{\omega}$,
the automaton $\nestedA$ returns $1$ on $w$ if and only if the valuation given by $v$ satisfies
all clauses $C_1, \ldots, C_m$, i.e., it satisfies $\varphi$.
Otherwise, $\nestedA$ returns $0$ on $w$.
Therefore, the values $\expected_{\calU}(\nestedA)$ and $1 - \distrib_{\calU, \nestedA}(0)$
 are equal, and multiplied by $2^{n}$ give the number of valuations satisfying $\varphi$.
In follows that all approximate probabilistic questions are $\#P$-hard.
\end{proof}

\subsection{Upper bounds for value functions \texorpdfstring{$g\in  \FinVal \setminus \{ \text{Sum}^+,\text{Sum} \}$}{g in FinVal\textbackslash\{ Sum+, Sum \}}}
We now present upper bounds for value functions
for $(\finf;g)$-automata and $(\fsup;g)$-automata, where $g \in  \FinVal \setminus \{ \fsum^+,\fsum \}$.

\Paragraph{Overview}.
We begin with the discussion on the bounded-sum value function.
We show the translation lemma (Lemma~\ref{l:bsum-to-inf}), which states that deterministic $(\finf; \fBsum{B})$-automata
can be translated to deterministic $\finf$-automata with exponential blow-up. Moreover, this blow-up can be avoided
by considering NWA of bounded width and the bound in the sum $B$ given in unary.
Since the probabilistic questions can be solved for
$\finf$-automata in polynomial time, Lemma~\ref{l:bsum-to-inf} implies that all probabilistic
questions can be solved in exponential time for deterministic $(\finf; \fBsum{B})$-automata.
Next, we use Lemma~\ref{l:bsum-to-inf} to show the upper bounds on the probabilistic questions for all slave value functions $g\in  \FinVal \setminus \{ \fsum^+,\fsum \}$.

\proofideas{}
It has been shown in~\cite{nested} that for
$f \in \InfVal$ and $g \in \FinVal \setminus \{ \fsum^+, \fsum \}$,
$(f;g)$-automata can be transformed to exponential-size $f$-automata.
In the original transformation the threshold $B$ is fixed.
We slightly modify the construction from~\cite{nested}, to show the case of variable threshold for  $(\finf; \fBsum{B})$-automata.

\begin{restatable}{lem}{BoundedSumReducesToInf}%
\label{l:bsum-to-inf}
(1)~Given $B >0$ in the binary notation and a deterministic $(\finf; \fBsum{B})$-automaton $\nestedA$, one can construct
in exponential time an exponential-size deterministic $\finf$-automaton $\nonnestedA$ such that for every word $w$
we have $\valueL{\nestedA}(w) = \valueL{\nonnestedA}(w)$.
(2)~Let $k > 0$.
Given $B >0$ in the unary notation and a deterministic $(\finf; \fBsum{B})$-automaton $\nestedA$ of width bounded by $k$,
one can construct in polynomial time a polynomial-size deterministic $\finf$-automaton $\nonnestedA$ such that for every word $w$
we have $\valueL{\nestedA}(w) = \valueL{\nonnestedA}(w)$.
\end{restatable}
\begin{proof}
\Paragraph{(1)}: Let $Q_m$ be the set of states of the master automaton and
$Q_s$ be the union of the sets of states of slave automata of $\nestedA$.
We define an $\finf$-automaton $\nonnestedA$ over the set of states $Q_m \times {(Q_s \times [-B, B] \cup \{ \bot\})}^{|Q_s|}$.
Intuitively, $\nonnestedA$ simulates runs of $\nestedA$ by simulating (a)~the run of the master automaton using the component $Q_m$
and (b)~selected runs of up to $|Q_s|$ slave automata using the component ${(Q_s \times [-B, B])}^{|Q_s|}$.
Slave automata are simulated along with their values, which are stored in the state, i.e., the state $(q, l)$ encodes
that a given slave automaton is in the state $q$ and its current value is $l$.
Then, the value of a given transition of $\nonnestedA$ is the minimum over the values of simulated slave automata
that terminate at the current step.
Finally, the symbol $\bot$ denotes ``free'' components in the product
${(Q_s \times [-B, B] \cup \{ \bot\})}^{|Q_s|}$, which can be used to simulate newly invoked slave automata.
We need to convince ourselves that we need to simulate at most $|Q_s|$ slave automata.
Therefore, every time a new slave automaton is invoked, we have a free component to simulate it.

Observe that if at some position two slave automata $\slaveA_1, \slaveA_2$ are in the same state $q$ and they have
computed partial sums of weights $l_1 \leq l_2$, then we can discard the simulation of the automaton $\slaveA_2$, which computed the value $l_2$.
Indeed, since slave automata are deterministic and recognize prefix-free languages,  the remaining runs of both slave automata
$\slaveA_1, \slaveA_2$ are the same, i.e., they either both reject or both return values, respectively,  $l_1 + v$ and $l_2 +v$ for some common $v$.
Thus, the run of $\slaveA_2$ does not change the value of the infimum and we can stop simulating it, i.e., we can substitute $(q, l_2)$ by $\bot$.
Therefore, at every position at most $|Q_s|$ components are necessary.
It follows from the construction that the values of $\nestedA$ and $\nonnestedA$ coincide on every word.

\Paragraph{(2)}: If $B$ is given in the unary notation and the width is bounded by $k$, we can basically repeat the construction as above for
the automaton with the set of states  $Q_m \times {(Q_s \times [-B, B] \cup \{ \bot\})}^{k}$, which is polynomial in $\nestedA$.
Thus, the resulting automaton has polynomial size in $\nestedA$ and can be constructed in polynomial time (in $\nestedA$).
\end{proof}

\smallskip\noindent{\em Key ideas}.
We consider almost surely accepting automata and hence by Lemma~\ref{l:sup-to-inf}, the results of Lemma~\ref{l:bsum-to-inf} apply to
$(\fsup; \fBsum{B})$-automata.
The bounded-sum value function $\fBsum{B}$ subsumes all value functions from $g\in  \FinVal \setminus \{ \fsum^+,\fsum \}$, and hence
Lemma~\ref{l:bsum-to-inf} implies that a deterministic $(\finf;g)$-automaton (resp., $(\fsup;g)$-automaton) can be transformed to an equivalent exponential-size
$\finf$-automaton (resp.,  $\fsup$-automaton).
Therefore, using Fact~\ref{t:weighted-inf-expected}, both $\expected_{\markov}(\nestedA)$ and $\distrib_{\markov,\nestedA}(\const)$
can be computed in exponential time.

\begin{restatable}{lem}{InfExpectedSolution}%
\label{l:infSolutions}
Let $g \in \FinVal \setminus \{ \fsum^+, \fsum \}$ be a value function.
(1)~Given a Markov chain $\markov$, a deterministic $(\finf;g)$-automaton (resp., $(\fsup; g)$-automaton) $\nestedA$, and a
threshold $\const$ in binary, both $\expected_{\markov}(\nestedA)$ and $\distrib_{\markov,\nestedA}(\const)$
can be computed in exponential time.
(2)~If $\nestedA$ has bounded width, then the above quantities can be computed in polynomial time.
\end{restatable}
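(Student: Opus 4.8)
The plan is to reduce the computation for $\finf$ and $\fsup$ master automata to the $\fliminf$/$\flimsup$ case already handled in Lemma~\ref{th:liminfIsPoly}, by observing that for value functions $g \in \FinVal \setminus \{\fsum^+,\fsum\}$ the set of possible slave values is \emph{bounded}. Indeed, for $g \in \{\fmin,\fmax,\fBsum{B}\}$ every returned slave value lies in a fixed interval whose endpoints are linearly bounded in $|\nestedA|$ (for $\fmin,\fmax$ the value is some individual transition weight, and for $\fBsum{B}$ it lies in $[-B,B]$). Thus the overall $\finf$ (resp.\ $\fsup$) value of the NWA ranges over only polynomially many distinct integer candidates $\lambda_1 < \cdots < \lambda_m$. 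The key distinction between $\finf$ and $\fliminf$ is that $\finf$ is sensitive to the \emph{transient} (finite prefix) behavior of the run, not just the behavior in the end SCC; this is exactly why $\finf$ is harder than $\fliminf$ and is the source of the exponential blow-up.

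First I would construct a \emph{product automaton} that tracks, along the run of the master automaton, the running infimum (resp.\ supremum) of the slave-automaton values seen so far. Since the range of slave values is bounded by $O(|\nestedA|)$ distinct values, this running infimum can be stored in a finite register with polynomially many values; however, at any position the NWA may have unboundedly many slave automata \emph{simultaneously active} (each having been invoked at an earlier position and not yet terminated), and their partially-computed values all contribute once they terminate. To resolve this I would take a subset/configuration construction over the states of the slave automata: a state of the product records the master state together with the multiset (as a set of reachable slave-states, since the slaves are deterministic) of currently active slave computations and the running aggregate. The number of distinct active-slave configurations is at most exponential in $|\nestedA|$, giving a deterministic weighted automaton $\widehat{\nestedA}$ of exponential size whose $\fliminf$ (resp.\ $\flimsup$) value on every word equals the $\finf$ (resp.\ $\fsup$) value of $\nestedA$, because once the running infimum stabilizes in an end SCC it equals the global infimum over the whole run. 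I would then apply Fact~\ref{t:weighted-inf-expected} (polynomial in the size of the automaton, here exponential in $|\nestedA|$) to compute $\expected_{\markov}(\widehat{\nestedA})$ and $\distrib_{\markov,\widehat{\nestedA}}(\const)$, yielding the claimed exponential-time bound.

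For the bounded-width refinement I would argue that when at most $C$ slave automata are active at any position, the active-slave configuration is a tuple of at most $C$ slave-states rather than an arbitrary subset, so the number of configurations is polynomial (of the form $|\nestedA|^{O(C)}$ for fixed $C$); the product automaton $\widehat{\nestedA}$ then has polynomial size and Fact~\ref{t:weighted-inf-expected} gives a polynomial-time algorithm directly. I expect the main obstacle to be the correctness argument that the $\fliminf$ value of $\widehat{\nestedA}$ coincides with the $\finf$ value of $\nestedA$: one must verify that the running-infimum register, after finitely many steps, permanently locks onto the true global infimum of the (infinite) sequence of slave values, which requires that no later slave return a strictly smaller value — this holds because the register is monotonically non-increasing and the range is a finite discrete set, so it necessarily converges, and the $\fliminf$ of an eventually-constant sequence equals that constant, which is precisely the global $\finf$. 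Handling the subtle case where infinitely many slaves are active and contribute values approaching but never reaching the infimum (impossible here since the value set is finite and discrete) is what makes the exclusion of $\fsum,\fsum^+$ essential, and I would make this finiteness-of-range hypothesis explicit at the start of the proof.
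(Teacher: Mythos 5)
Your overall strategy---collapse the NWA into an exponential-size deterministic \emph{non-nested} weighted automaton by tracking the configuration of active slave computations, then run Fact~\ref{t:weighted-inf-expected} on the result---is exactly the paper's route (the paper reduces $\fmin,\fmax$ to $\fBsum{B}$ and builds an $\finf$-automaton whose transition weights are the values of slaves terminating at the current step, whereas you carry a running-infimum register and evaluate it with $\fliminf$; that difference is cosmetic, since Fact~\ref{t:weighted-inf-expected} covers both value functions). However, there is a genuine gap in your configuration construction: you record the \emph{set of currently active slave states} (collapsing the multiset ``since the slaves are deterministic'') plus a single global running aggregate, and this loses the \emph{partial accumulated value} of each active slave. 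Two active slaves sitting in the same slave state but invoked at different positions generally carry different partial values (different running minima for $\fmin$, different partial sums for $\fBsum{B}$); determinism makes their future \emph{state trajectories} identical, but not their outputs, so they return different values at termination. With only the set of states you cannot compute the value a slave returns when it terminates, hence you cannot update your register, and the claimed equality between the $\fliminf$ value of $\widehat{\nestedA}$ and the $\finf$ value of $\nestedA$ breaks down.

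The repair is precisely what the paper's Lemma~\ref{l:bsum-to-inf} does: track pairs of (slave state, partial value), i.e., a state space of the form $Q_m \times (Q_s \times [-B,B] \cup \{\bot\})^{|Q_s|}$, and prune by keeping, among active slaves in the same slave state, only the one with the smallest partial value. This pruning is sound because determinism together with prefix-freeness of the slave languages makes the residual runs of the two slaves identical, so the discarded slave's final value dominates the kept one's and cannot affect the infimum; it is also what caps the number of tracked slaves at $|Q_s|$, giving the exponential bound in general and the polynomial bound for bounded width. Note also that your assertion that slave values range over polynomially many candidates is false for $\fBsum{B}$ when $B$ is given in binary (which the paper explicitly allows in the remark following the lemma): the range $[-B,B]$ then has exponentially many integers. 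This is harmless for the exponential-time claim, but it means the bounded-width polynomial-time claim needs $B$ in unary (as in part~(2) of Lemma~\ref{l:bsum-to-inf}) or treated as a constant, a hypothesis you should state explicitly.
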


\begin{rem}
Observe the following:
\begin{enumerate}
\item We show in Lemma~\ref{l:infSolutions} a polynomial-time upper bound for NWA with bounded width, which gives a
polynomial-time upper bound for automata with monitor counters.
\item For $g = \fBsum{B}$, the value $B$ can be given in binary in input, and the complexity  in (1) from Lemma~\ref{l:infSolutions}
does not change.
\end{enumerate}
\end{rem}

\noindent
We first prove Lemma~\ref{l:infSolutions} for $g \in \FinVal \setminus \{ \fsum,\fsum^+\}$.
Next, in Lemma~\ref{l:bound-from-below} we show that Lemma~\ref{l:infSolutions} holds for deterministic $(\finf;\fsum^+)$-automata.
The statement of Lemma~\ref{l:bound-from-below} is more general though.

\begin{proof}
Observe that deterministic $\fmin$-automata and $\fmax$-automata can be
transformed in polynomial time to equivalent deterministic $\fBsum{B}$-automata.
Basically, a deterministic $\fBsum{B}$-automaton simulating a $\fmin$-automaton (resp., $\fmax$-automaton)
uses its bounded sum to track the currently minimal (resp., maximal) weight taken by the automaton.
Therefore, we focus on $g = \fBsum{B}$.
Consider a deterministic $(\finf;\fBsum{B})$-automaton $\nestedA$.
By Lemma~\ref{l:bsum-to-inf}, $\nestedA$ can be transformed in exponential time into an equivalent exponential-size deterministic $\finf$-automaton
$\nonnestedA$ and hence
$\expected_{\markov}(\nestedA) = \expected_{\markov}(\nonnestedA)$ and
$\distrib_{\markov,\nestedA}(\const) = \distrib_{\markov,\nonnestedA}(\const)$ (for all $\const$).
The values  $\expected_{\markov}(\nonnestedA), \distrib_{\markov,\nonnestedA}(\const)$ can be computed in polynomial time
in $\nonnestedA$ (Fact~\ref{t:weighted-inf-expected}), which amounts to exponential time in $\nestedA$.
Observe, however, that for $\nestedA$ of bounded width the automaton $\nonnestedA$ has polynomial size (assuming that the bound on the width is constant),
and the values  $\expected_{\markov}(\nonnestedA), \distrib_{\markov,\nonnestedA}(\const)$  can be computed in polynomial time in $\nestedA$.
\end{proof}

Now, we turn to deterministic $(\finf;\fsum)$-automata.

\subsection{The \texorpdfstring{$\text{Sum}^+$}{Sum+} and Sum value functions for slave automata}
We now establish the result when $g \in \{ \fsum,\fsum^+ \}$.
First we establish decidability of the approximation problems,
and then undecidability of the exact questions.
Finally, we show that for deterministic $(\finf; \fsum^+)$-automata all probabilistic questions are decidable.

\proofideas{}
The main difference between $\finf$ and $\fliminf$ value functions is that the latter discards all values encountered before the master automaton
reaches a bottom SCC where the infimum of values returned by slave automata coincides with the limit infimum and hence it can be computed in polynomial time (Lemma~\ref{l:in-scc-all-equal}).
We show that we can bound the values returned by slave automata and the expected values and the distributions do not change much.
More precisely, given $\nestedA$ and $\epsilon$, we show that for some $B$, exponential in $|\nestedA|$ and polynomial in the binary representation of $\epsilon$,
the probability that any slave automaton collects a (partial) sum outside the interval $[-B,B]$ is smaller than $\epsilon$.
Therefore, to approximate $\expected_{\markov}(\nestedA)$ and $\distrib_{\markov, \nestedA}(\const)$ up to precision $\epsilon$, we can regard a given
$(\finf; \fsum)$-automaton (resp., $(\fsup; \fsum)$-automaton) as
$(\finf; \fBsum{B})$-automaton (resp., $(\fsup; \fBsum{B})$-automaton) and use Lemma~\ref{l:infSolutions}.

\begin{restatable}{lem}{InfSumSolution}%
\label{l:infSumSolution}
Let $g \in \{\fsum^+, \fsum\}$.
Given $\epsilon > 0$, a Markov chain $\markov$, a deterministic $(\finf;g)$-automaton (resp., $(\fsup; g)$-automaton) $\nestedA$,
a threshold $\const$, both
$\expected_{\markov}(\nestedA)$ and $\distrib_{\markov, \nestedA}(\const)$ can be computed up to precision $\epsilon$
in exponential time in $\nestedA$, polynomial time in $\markov$ and  the binary representation of $\epsilon$.
\end{restatable}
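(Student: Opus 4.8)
\emph{Reductions.} I would first collapse the case analysis. The $\fsup$ case is symmetric to the $\finf$ case: replace every infimum over slave values by a supremum and reverse the threshold inequalities. The $\fsum^+$ case reduces to $\fsum$ by replacing each slave weight $x$ by $\abs(x)$ inside the slave automaton, which turns an absolute sum into an ordinary sum while leaving the master automaton, the width, and the Markov chain untouched. Hence it suffices to treat a deterministic $(\finf;\fsum)$-automaton $\nestedA$.

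\emph{Truncation.} The plan is to replace $\nestedA$ by the truncated automaton $\nestedA^C$, in which every slave is forced to terminate and return its current partial sum after $C$ steps, and to argue that this changes both target quantities by at most $\epsilon$. The introduction already records such a bound for the expected value, with $C$ exponential in $|\nestedA|$ and depending on $\epsilon$ only through $\log(1/\epsilon)$; I would prove the analogous bound for $\distrib_{\markov,\nestedA}(\const)$. The only input is Proposition~\ref{prop:almostAll}: once almost every word is accepted, in the product of any slave automaton with $\markov$ an accepting state is reached with probability $1$ from every reachable state, so the length of a slave run has a geometrically decaying tail. Bounding the effect of cutting these tails — on the infimum for the expectation, and on the event $\{\nestedA(w)\le\const\}$ for the distribution — yields the estimate and fixes $\log C=\mathrm{poly}(|\nestedA|,|\markov|)+O(\log(1/\epsilon))$.

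\emph{Computation on $\nestedA^C$.} In $\nestedA^C$ every slave run has length at most $C$, so every returned value lies in $[-CN,CN]$ (with $N$ the largest absolute weight) and at every position at most $C$ slaves are active. I would reduce both quantities to the threshold probabilities $P(t)=\prob_\markov(\{w:\nestedA^C(w)>t\})$: for the distribution $\distrib_{\markov,\nestedA^C}(\const)=1-P(\const)$, and for the expectation $\expected_\markov(\nestedA^C)=-CN+\sum_{t=-CN}^{CN-1}P(t)$. Since $\nestedA^C(w)>t$ holds exactly when every slave invoked along $w$ returns a value $>t$, each $P(t)$ is a safety probability of the finite Markov chain obtained as the product of $\markov$, the master automaton, and a monitor of the active slaves, and is read off from the solution of a linear system. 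The structural observation that keeps this monitor small is that any two active slaves currently in the same state of the same slave automaton read the same future letters, hence follow identical transitions, terminate at the same future position, and maintain a constant difference of partial sums; therefore, for the infimum it suffices to store, for each state of each slave automaton, the minimum partial sum among the active slaves in that state, rather than the whole multiset of active slaves. This makes the monitor, and thus the product, single-exponential in $|\nestedA|$.

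\emph{Main obstacle.} The difficulty is to stay inside the stated budget: single-exponential time in $|\nestedA|+|\markov|+|\const|$, with only linear dependence on the bit-size of $\epsilon$. The horizon $C$, the value range $[-CN,CN]$, and the threshold sum defining $\expected_\markov(\nestedA^C)$ are all exponential, so none of them may be expanded term by term. I expect the crux to be treating these parameters symbolically: encoding the one-step dynamics of the slave-monitor, with the accumulated value carried in a formal variable, as a transfer matrix over the (exponential) product state space, and computing its $C$-step behaviour by repeated squaring in $O(\log C)$ matrix products, so that the horizon and the value range contribute only a $\mathrm{poly}(|\nestedA|,|\markov|)+O(\log(1/\epsilon))$ factor. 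The technical work is to verify that the minimum-per-state summary remains sound under this symbolic update, i.e.\ that it still decides the ``$>t$'' versus ``$\le t$'' comparison exactly at each slave's natural or truncated termination, and that the matrices stay single-exponential; granting this, solving the resulting linear systems is polynomial in the size of the product, which gives the claimed complexity.
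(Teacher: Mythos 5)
Your overall architecture (case reductions, truncation, finite product with a minimum-per-slave-state monitor) is close in spirit to the paper's, and your monitor observation is exactly the merging idea the paper uses in Lemma~\ref{l:bsum-to-inf}. But the truncation step itself, which you take from the introduction's informal description, is genuinely broken, and it is precisely the step where the paper's actual proof does something different. Since a $(\finf;\fsum)$-automaton invokes infinitely many slave automata, for any fixed $C$ almost \emph{every} word has infinitely many slave runs longer than $C$ (each such event has positive probability and recurs, e.g.\ by Borel--Cantelli on disjoint windows). So cutting runs at $C$ steps affects almost every word, and because the returned cut value is a \emph{partial sum}, it can lie strictly below the slave's true value; an infimum then picks up these artifacts and the error does not vanish as $C\to\infty$. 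Concretely: let a slave accept $a^*ba^*b$ with weight $0$ on $a$, $-1$ on the first $b$ and $+1$ on the second, so its value is $0$ on every accepted word, and let the master invoke it at every position. Under the uniform distribution $\nestedA(w)=0$ almost surely, yet for every $C$ almost surely infinitely many invocations are cut between the two $b$'s, so $\nestedA^C(w)=-1$ almost surely; no choice of $C$ achieves error below $1$. The paper avoids this by truncating the \emph{value range} rather than the run length: it replaces $\fsum$ by the $\fBsum{B}$ semantics (and then applies Lemma~\ref{l:infSolutions}, i.e.\ Lemma~\ref{l:bsum-to-inf} plus Fact~\ref{t:weighted-inf-expected}). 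Under $\fBsum{B}$ a slave is affected only if its partial sums leave $[-B,B]$, and once one knows that slaves invoked inside end SCCs of $\masterA\times\markov$ have no negative cycles, such slaves can only return $+B$, never a spuriously low value; hence the almost-sure infimum contributed by end-SCC slaves (Lemma~\ref{l:in-scc-all-equal}) is preserved exactly, and the error comes only from the slaves invoked \emph{before} an end SCC is reached --- finitely many in expectation --- which is the only place where the geometric-tail estimate you invoke is legitimately applied.

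The second gap is that you never detect the $-\infty$ case. If some slave invoked in an end SCC has a negative cycle traversed with positive probability, then $\expected_{\markov}(\nestedA)=-\infty$, while every truncated automaton has finite expectation, so the bound ``truncation changes the expected value by at most $\epsilon$'' is false for every $C$ (and every $B$). Proposition~\ref{prop:almostAll} does not exclude this: a slave can accept almost surely and still take unboundedly negative values (e.g.\ weight $-1$ per $a$ until the first $b$). The paper's proof therefore begins by computing, in polynomial time, the expected value of $\nestedA$ read as a $(\fliminf;\fsum)$-automaton (Lemma~\ref{th:liminfIsPoly}); if that is $-\infty$ it answers $-\infty$ for the expected question and handles the distribution question by a separate estimate, and only under the finiteness assumption --- equivalent to the absence of negative cycles in end-SCC slaves, which is what bounds those slaves' values below by $-|\nestedA|$ --- does the truncation argument go through. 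Your proof needs both repairs. By contrast, the ``main obstacle'' you identify is not one: exponential time permits building the single-exponential product explicitly and enumerating its exponentially many distinct values, so no symbolic repeated squaring is needed; that part is exactly the paper's Lemma~\ref{l:bsum-to-inf} combined with Fact~\ref{t:weighted-inf-expected}.
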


\begin{proof}
Consider a deterministic $(\finf;\fsum)$-automaton $\nestedA$. Let $\nestedA^{\lim}$ be $\nestedA$ considered as  a $(\fliminf;\fsum)$-automaton.
First, we assume that $\nestedA^{\lim}$ has finite expected value (in particular it accepts almost all words).
We can check whether this assumption holds in polynomial time by computing $\expected_{\markov}(\nestedA^{\lim})$ (Lemma~\ref{th:liminfIsPoly}).
Then, we show the following {\bf claim}: for every $\epsilon >0$, there exists $B>0$, exponential in $|\nestedA|$ and linear in $|\log(\epsilon)|$ such that for $\nestedA_B$ defined as $\nestedA$ considered
as an $(\finf; \fBsum{B})$-automaton we have $|\expected_{\markov}(\nestedA) - \expected_{\markov}(\nestedA_B)| \leq \epsilon$.

\smallskip
\Paragraph{The claim implies the lemma}. Observe that due to Lemma~\ref{l:infSolutions} with the following remark on $B$,
the expected value $\expected_{\markov}(\nestedA_B)$ can be computed in polynomial time in $\nestedA_B$, hence exponential time in $\nestedA$ and polynomial time in $\markov$.
Therefore, we can approximate $\expected_{\markov}(\nestedA)$ up to $\epsilon$ in exponential time in $\nestedA$ and polynomial time in $\markov$.
Due to Markov inequality, for every $\const$ we have $\distrib_{\markov,\nestedA}(\const+\epsilon) - \distrib_{\markov,\nestedA_B}(\const-\epsilon) < \epsilon$.
However, the values of $\nestedA$ are integers, therefore for $\epsilon < 0.5$ we get
$|\distrib_{\markov,\nestedA}(\const) - \distrib_{\markov,\nestedA_B}(\const)| < \epsilon$.
Therefore, again by Lemma~\ref{l:infSolutions},
we can approximate $\distrib_{\markov,\nestedA}(\const)$ in exponential time in $\nestedA$, polynomial time in $\markov$ and $|\log(\epsilon)|$.

\smallskip
\Paragraph{The proof of the claim}.
First, we observe that every run ends up in some some SCC of $\masterA \times \markov$, and, hence,
Lemma~\ref{l:in-scc-all-equal} implies that values of all words are bounded from above by $|\nestedA| \cdot |\markov|$.
Next, the values of all slave automata invoked in bottom SCCs of $\masterA \times \markov$
are bounded from below. Otherwise, the expected value of $\nestedA$ as a $(\fliminf;\fsum)$-automaton is $-\infty$.
Assume that the values of all slave automata invoked in bottom SCCs of $\masterA \times \markov$ are bounded from below, which implies that
they are bounded by $-|\nestedA|$.
Then, we need to estimate the influence on the expected value of the slave automata invoked before the master automaton reaches
a bottom SCC of $\masterA \times \markov$.

Let $Y_B$ be a random variable on finite words such that $Y_B(u)$ is the maximum of $0$ and the number of steps of any slave automaton on $\nestedA$ takes on $u$
minus $B$. Let $E_2$ be the expected number of steps of the master automaton before it reaches a bottom SCC\@.
It follows that for $B > |\nestedA|$ and $C$ being the maximal absolute weight in slave automata of $\nestedA$, we have
$|\expected_{\markov}(\nestedA) - \expected_{\markov}(\nestedA_B)| < C \expected(Y_B) \cdot E_2$.

Let $p$ be the minimal positive probability that occurs in $\markov$ and let $n = |\nestedA|$.
We show that for $B > \frac{n}{p^n} |\log \frac{n^2}{p^n} \epsilon|$,
 we have $\expected(Y_B) \cdot E_2 < \epsilon$. We first estimate $E_2$.
Observe that starting from every state, there exists at least one word of length at most $|\masterA|$ upon which
the master automaton reaches a bottom SCC of $\masterA \times \markov$.
Therefore, the master automaton reaches a bottom SCC  in $|\masterA|$ steps with probability at least $p^{|\masterA|}$, and, hence,
the number of steps before $\masterA$ reaches a bottom SCC is estimated from above by
$|\masterA|$ multiplied by the geometric distribution with the parameter $p^{|\masterA|}$.
Hence, $E_2$ is bounded by $\frac{n}{p^{n}}$.

Now, we estimate $\expected(Y_B)$.
Observe that for every reachable state $q$ of any slave automaton $\slaveA$, there exists a word
of the length at most $|\slaveA|$ such that $\slaveA$, starting in $q$ terminates upon reading that word.
Therefore, the probability $q_l(\slaveA)$ that $\slaveA$ works at least $l$ steps is bounded by ${(1-p^{|\slaveA|})}^{\frac{l}{|\slaveA|}}$.
Now, $\expected(Y_B) $ is bounded by the maximum over slave automata $\slaveA$ of $\sum_{l \geq B} q_l(\slaveA)$.
We have $\sum_{l \geq B} q_l(\slaveA) \leq  \frac{n}{p^n} \cdot {(1-p^n)}^{\frac{B}{n}}$.
Hence, $\expected(Y_B) \leq \frac{n}{p^n} \cdot {(1-p^n)}^{\frac{B}{n}}$ and
$\expected(Y_B) \cdot E_2 \leq \frac{n^2}{p^n} \cdot {(1-p^n)}^{\frac{B}{n}}$.
Observe that for $B > \frac{n}{p^n} s$, where $s  = |\log \frac{n^2}{p^n} \epsilon|$,
we have $\frac{n^2}{p^n} \cdot {(1-p^n)}^{\frac{B}{n}} \leq  \frac{n^2}{p^n} \cdot {(\frac{1}{2})}^s$ and
 $\expected(Y_B) \cdot E_2 \leq \epsilon$.
Observe that $\frac{n}{p^n} \cdot |\log \frac{n^2}{p^n} \epsilon|$ is exponential in $|\nestedA|$ and
linear in $|\log(\epsilon)|$.
\smallskip

\Paragraph{Lifting the assumption}.
Now, we discuss how to remove the assumption that $\expected_{\markov}(\nestedA^{\lim})$ is finite.
First, we check in polynomial time whether $\nestedA$ accepts almost all words (Proposition~\ref{prop:almostAll}).
For the expected question, observe that $\expected_{\markov}(\nestedA) \leq \expected_{\markov}(\nestedA^{\lim})$, hence
if the latter is $-\infty$, we can return the answer $\expected_{\markov}(\nestedA) = -\infty$ if $\nestedA$ accepts almost all words, and otherwise
the expected value is undefined.
For the distribution question, consider threshold $\const$.
Observe that for every $w$, we have $\valueL{\nestedA}(w) \leq \valueL{\nestedA_B}(w)$.
Moreover, $\valueL{\nestedA}(w) < \const$ while $\valueL{\nestedA_B}(w) \geq \const $ holds only if there is a slave automaton
run before $\masterA$ reaches a bottom SCC which runs more than $B$ steps. Therefore,
the probability $\probability_\markov(\{w \mid \valueL{\nestedA}(w) \leq \const \wedge  \valueL{\nestedA_B}(w) \geq \const \})$ is bounded from above by
$\expected({Y_B}) \cdot E_2$. Thus, by the previous estimate on $\expected({Y_B}) \cdot E_2$,  for
 $B > \max(\const+1, \frac{n}{p^n} \log \frac{n^2}{p^n} \epsilon|)$  we have
$\probability_\markov(\{w \mid \valueL{\nestedA}(w) \leq \const \wedge  \valueL{\nestedA_B}(w) \geq \const \}) < \epsilon$
and $|\distrib_{\markov, \nestedA}(\const) - \distrib_{\markov,\nestedA_B}(\const)| < \epsilon$.
Again, $\distrib_{\markov,\nestedA_B}(\const)$ can be computed in exponential time in $|\nestedA|$ and polynomial in $\markov$ and $|\log(\epsilon)|$.
\end{proof}

We now show that the exact values in probabilistic questions are uncomputable for deterministic almost-surely accepting $(\finf; \fsum)$-automata (resp., $(\fsup; \fsum)$-automata).

\proofideas{}
To show uncomputability of the distribution question, we show undecidability of the almost-sure distribution problem.
Given a two-counter machine $\M$, we modify the construction from the proof of Theorem~\ref{th:undecidable-limsup} and construct
 a deterministic $(\finf; \fsum)$-automaton $\nestedA_{\M}$ such that the following conditions are equivalent:
(a)~$\M$ has an accepting computation,
(b)~there exists a finite word $u$ such that for all $w = uw'$ we have $\valueL{\nestedA_{\M}}(w) \geq 0$, and
(c)~$\probability_{\markov}( \{ w \mid  \valueL{\nestedA_{\M}}(w) \geq 0 \}) > 0$.
Condition (c) is equivalent to $\distrib_{\calU, \nestedA}(-1) < 1$.
We show  uncomputability of the expected question via reduction from the almost-sure distribution problem
to deciding equality of expected values of given to automata.
Given an $(\finf; \fsum)$-automaton $\nestedA_{\M}$, we construct $\nestedA_{\M}'$ such that
for all words $w$ we have $\valueL{\nestedA_{\M}'}(w) = \min(-1, \valueL{\nestedA_{\M}}(w))$.
Observe that the following conditions are equivalent:
(i)~the expected values of $\nestedA_{\M}'$ and $\nestedA_{\M}$ are equal,
(ii)~$\nestedA_{\M}'$ and $\nestedA_{\M}$ are equal on almost every word, and
(iii)~$\distrib_{\calU, \nestedA_{\M}}(-1) = 1$.

\begin{restatable}{lem}{InfSumUndecidable}%
\label{l:inf-prob-undec}
Let $\calU$ be the uniform distribution over the infinite words.
The following problems are undecidable:
(1)~Given a deterministic almost-surely accepting $(\finf; \fsum)$-automaton (resp., $(\fsup; \fsum)$-automaton) $\nestedA$ of width $8$, decide whether
$\distrib_{\calU, \nestedA}(-1) = 1$.
(2)~Given two deterministic almost-surely accepting $(\finf; \fsum)$-automata (resp., $(\fsup; \fsum)$-automata) $\nestedA_1, \nestedA_2$ of width bounded by $8$, decide whether
$\expected_{\calU}(\nestedA_1) = \expected_{\calU}(\nestedA_2)$.
\end{restatable}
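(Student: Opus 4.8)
I would prove both undecidability results by reduction from the halting problem for two-counter (Minsky) machines, exploiting the same mechanism that drives the classical undecidability result of Theorem~\ref{th:undecidable-limsup} but now transported to the probabilistic setting. The central idea is that a $(\finf;\fsum)$-automaton can, on a given input word, \emph{check} whether that word encodes a valid halting computation of a fixed two-counter machine $M$: slave automata invoked along the run verify local consistency of successive configurations (that counter updates are respected and the control transitions are legal), with each slave returning a $\fsum$-value that is $0$ exactly when the portion it inspects is consistent and is made strictly negative (say $\le -1$) when it detects a fault. Because the master takes $\finf$ over the returned values, $\nestedA(w) \ge -1$ holds for \emph{every} $w$ iff $M$ does not halt, while a single faithfully-encoded halting computation $w_0$ forces $\nestedA(w_0)$ to be driven below $-1$ (or conversely, depending on how I orient the construction). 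The bound of $8$ on the width comes from reusing the $8$-counter construction behind Theorem~\ref{th:undecidable-limsup} via the Translation Lemma (Lemma~\ref{l:mc-vs-nested}), which converts the deterministic $8$-counter automaton into a deterministic $(\finf;\fsum)$-automaton of width bounded by $8$.

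\textbf{Carrying the classical argument into the probabilistic world.} The delicate point is that emptiness/universality quantify over \emph{existence} of a word, whereas here I must phrase everything in terms of the uniform measure $\calU$. For part~(1), I would arrange the encoding so that the set of words encoding a halting computation of $M$ is, if nonempty, a basic open cylinder of strictly positive probability: fixing the halting computation as a finite prefix $u_0$, every word in $u_0 \cdot \Sigma^\omega$ has value below the threshold. Hence $\distrib_{\calU,\nestedA}(-1)=1$ fails precisely when $M$ halts (since a positive-measure set of words then has value $> -1$, after suitably negating), and equals $1$ when $M$ does not halt. The direction of the inequalities and the exact threshold $-1$ must be chosen so that the \emph{universal} statement ``almost every word has value $\le -1$'' is equivalent to non-halting; this mirrors the universality-side undecidability of $\finf$-automata in Table~\ref{tab1}. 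I would make the faulty-run penalty a $\fsum$-value that is genuinely unbounded below or just crosses the threshold, so that the almost-sure distribution question detects the positive-probability ``bad'' event.

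\textbf{Part~(2) via expected values.} For the second statement I would reduce part~(1) to the equality of expected values by a standard gadget: given the automaton $\nestedA$ from part~(1), construct $\nestedA_1$ as $\nestedA$ truncated or thresholded so that its value is bounded (for instance replace $\fsum$ by $\fBsum{B}$-style clipping, which keeps the value in $[-B,B]$ and hence keeps the expectation finite and computable-in-principle), and let $\nestedA_2$ be a fixed automaton realizing the ``default'' expected value that would arise if no halting computation existed. Then $\expected_{\calU}(\nestedA_1)=\expected_{\calU}(\nestedA_2)$ holds iff the bad event (a positive-probability set of halting-witness words contributing a different value) has measure zero, i.e.\ iff $M$ does not halt. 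The key is that a single halting computation contributes a cylinder of positive probability and therefore a strictly nonzero discrepancy in the expectation, making the equality question exactly as hard as non-halting.

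\textbf{Main obstacle.} The hardest part will be the measure-theoretic bookkeeping: ensuring that the encoding of computations is robust enough that the ``$M$ halts'' event corresponds to a set of \emph{positive} $\calU$-measure (a cylinder), rather than a measure-zero set that uniform measure would ignore. In the classical setting one needs only the existence of one witness word; here a lone witness is invisible to $\calU$ unless it generates an entire positive-measure cylinder. I would handle this by designing the slave automata so that only a \emph{finite prefix} of the word needs to encode the computation faithfully (the infinite tail being arbitrary), which automatically turns any single halting witness into a positive-probability cylinder event and makes both the almost-sure distribution question and the expected-value equality question sensitive to halting. Reconciling this prefix-based encoding with the width bound of $8$ and with determinism of the slave automata is where the construction must be done carefully.
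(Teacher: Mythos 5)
Your part~(1) follows the paper's own route: encode a \emph{single} accepting computation of the two-counter machine in a finite prefix $\$u\$$ of the word (with the infinite tail arbitrary), reuse the counter-consistency gadgets behind Theorem~\ref{th:undecidable-limsup} via Lemma~\ref{l:mc-vs-nested} to get determinism and width $8$, and observe that one halting witness yields a positive-probability cylinder of words of value $0$ while every other word has value at most $-1$, so that $\distrib_{\calU, \nestedA}(-1)=1$ iff the machine has no accepting computation. Your ``main obstacle'' paragraph (prefix-based encoding so that a lone witness becomes a cylinder) is exactly the modification the paper makes relative to Theorem~\ref{th:undecidable-limsup}, whose construction checks infinitely many computations and would indeed be invisible to the uniform measure. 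Apart from one inverted inequality that you flag yourself, this half is sound and essentially identical to the paper's proof.

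Part~(2), however, has a genuine gap, and the gap is the clipping. The paper's reduction stays inside the $(\finf;\fsum)$ class: it takes $\nestedA_1=\nestedA$ from part~(1) and builds $\nestedA_2$ from $\nestedA_1$ by invoking, at the first transition, one extra slave that returns $-1$, so that $\nestedA_2(w)=\min(-1,\nestedA_1(w))$ pointwise. Then $\expected_{\calU}(\nestedA_1)-\expected_{\calU}(\nestedA_2)=\prob_{\calU}(\{w:\nestedA_1(w)>-1\})$, hence the two expectations agree iff $\distrib_{\calU,\nestedA_1}(-1)=1$, and undecidability follows from part~(1). No clipping is needed for finiteness of the expectations: faulty words receive values whose magnitude is at most linear in the prefix length, while the probability of a prefix decays geometrically, so the unclipped expectation converges. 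Your proposal to replace $\fsum$ by ``$\fBsum{B}$-style clipping'' is self-defeating in two ways. First, once the slave values are bounded the automaton is semantically a $(\finf;\fBsum{B})$-automaton, and for those the expected value is \emph{computable} in exponential time (Lemma~\ref{l:bsum-to-inf} with Fact~\ref{t:weighted-inf-expected}, i.e., Lemma~\ref{l:infSolutions}); so if your clipped instances really satisfied ``expectations differ iff the machine halts,'' you would have decided the halting problem. Second, the reason this contradiction does not materialize is that clipping destroys the encoding: the consistency slaves must compare two \emph{unboundedly large} counter values $x,x'$ (their runs have intermediate partial sums near $-x$ before returning $\pm(x'-x-1)$), and with sums clipped at $B$ a perfectly valid transition with $x>B$ becomes indistinguishable from a faulty one, so the dichotomy ``value $0$ on valid prefixes, value $\le -1$ otherwise'' fails for machines whose computations use counters above $B$ --- and no finite $B$ suffices. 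Relatedly, your ``fixed automaton $\nestedA_2$ realizing the default expected value'' is underspecified: that default value depends on the machine in a complicated way, and the only effective way to realize it is the min-gadget above, i.e., a second automaton built \emph{from} $\nestedA_1$ rather than a fixed one.
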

\begin{proof}
\Paragraph{(1)}:
In the following, we discuss how to adapt the proof
of Theorem~\ref{th:undecidable-limsup} to prove this lemma.

Given a deterministic two-counter machine $\M$, we construct a deterministic $(\finf; \fsum)$-automaton $\nestedA_M$
such that for a word $w$ of the form $\$u\$ w'$ it returns $0$ if $u$ is a valid accepting computation of $\M$
and a negative value otherwise.
We use $\Sigma =  Q \cup \{ 1,2,\#,\$\}$ for convenience; one can encode letters from $\Sigma$ using two-letter alphabet $\{0,1\}$.
On words that are not of the form $\$u\$ w'$, the automaton $\nestedA_M$ returns values less or equal to $-1$.
Basically, the automaton $\nestedA_M$ simulates on $u$ the execution of $\nestedA$ (as defined in the proof of Theorem~\ref{th:undecidable-limsup}) with the opposite values of slave automata, i.e.,
all weights of slave automata are multiplied by $-1$.
Recall, that the supremum of the values returned by slave automata on a subword $\$u\$$
is $0$ if and only if $u$ encodes a valid and accepting computation of $\M$.
Otherwise, the supremum is at least $1$.
Thus, in our case, the infimum over the values of slave automata is $0$ if and only if $u$
encodes a valid and accepting computation of $\M$. Otherwise,
the value of $\finf$ is at most $-1$.
Therefore,  $\distrib_{\calU,\nestedA}(-1) = 1$ if and only if $\M$ does not have an accepting computation.

\Paragraph{(2)}: We show that knowing how to compute the expected value of deterministic
$(\finf; \fsum)$-automata, we can decide equality in the distribution question.
Let $\nestedA$ be an automaton and we ask whether $\distrib_{\calU,\nestedA}(-1) = 1$.
We construct another automaton $\nestedA'$ that simulates $\nestedA$, but at the first transition it invokes a slave automaton that returns the value $-1$.
The values of automata $\nestedA$ and $\nestedA'$ differ precisely on words which have values (assigned by $\nestedA$) greater than $-1$.
Thus, their expected values
$\expected_{\calU}(\nestedA)$ and $\expected_{\calU}(\nestedA')$ differ if and only if
$\distrib_{\calU,\nestedA}(-1)$ is different than $1$. Due to undecidability of the latter problem, there exists no Turing machine that computes the expected value of $(\finf;\fsum)$-automata
over the uniform distribution.
\end{proof}

Finally, we have the following result for the absolute sum value function, which guarantees that the return values are at least $0$. We present a slightly more general result.
Recall that we assume that weights in slave automata are given in unary.

\begin{lem}%
\label{l:bound-from-below}
Given a Markov chain $\markov$, a value $\const \in \Q$ and a deterministic $(\finf;\fsum)$-automaton
such that the value of every slave automaton is bounded from below,
the values $\expected_{\markov}(\nestedA)$ and $\distrib_{\markov,\nestedA}(\const)$ can be computed in exponential time in $|\nestedA|$ and polynomial time in $|\markov|$.
\end{lem}

\begin{proof}
Consider a deterministic $(\finf;\fsum)$-automaton $\nestedA$ such that the value of every slave automaton is bounded from below.
Let $B = |\nestedA|\cdot |\markov|$ and let $\nestedA'$ be $\nestedA$ considered as a deterministic $(\finf; \fBsum{B})$-automaton.
We show that on almost all words $w$ we have $\valueL{\nestedA}(w) = \valueL{\nestedA'}(w)$.
Then,
$\expected_{\markov}(\nestedA) = \expected_{\markov}(\nestedA')$
and $\distrib_{\markov,\nestedA}(\const) = \distrib_{\markov,\nestedA'}(\const)$ and the values
$\expected_{\markov}(\nestedA')$ and $\distrib_{\markov,\nestedA'}(\const)$ can be computed in exponential time by
Lemma~\ref{l:infSolutions} taking into account the remark about $B$ being input.

Since the value of every slave automaton $\slaveA_i$ is bounded from below,
the (reachable part of) the weighted Markov chain $\slaveA_i \times \markov$ considered as a weighted graph does not have negative cycles.
Therefore, the minimal value $\slaveA_i$ can achieve is greater than $-|\slaveA_i|\cdot|\markov| > - |\nestedA|\cdot|\markov|$.
Moreover, every accepting run of $\nestedA$ ends up in some SCC of $\masterA \times \markov$, where
almost all words have the same value~(Lemma~\ref{l:in-scc-all-equal}), which
is either $\infty$ (if almost all words are rejected) or bounded from above by $|\nestedA|\cdot|\markov|$. This value can be computed in polynomial time.
Therefore, the value of almost all words belong to the interval $[- |\nestedA|\cdot|\markov|,  |\nestedA|\cdot|\markov|]$ or it is $\infty$ if the run on $\nestedA$ on this word is rejecting.
Finally, the sets of words with accepting runs in $\nestedA$ and $\nestedA'$ coincide.
\end{proof}

The above lemma implies that the probabilistic questions for deterministic $(\finf;\fsum^+)$-automata can be answered in exponential time in $|\nestedA|$ and polynomial time in $|\markov|$.
Note that $(\finf;\fsum^+)$-automata and $(\fsup;\fsum^+)$-automata are not dual.
Indeed, in Lemma~\ref{l:sup-to-inf} we multiply weights by $-1$, which turns $\fsum^+$-automata into $\fsum$-automata with negative weights.
Thus, we consider separately the distribution question for $(\fsup;\fsum^+)$-automata.
We show that the distribution question for deterministic $(\fsup;\fsum^+)$-automata is
decidable in $\EXPTIME$.

\begin{lem}%
\label{l:distributionSumPlusDecidable}
The distribution question for deterministic $(\fsup;\fsum^+)$-automata can be computed in exponential time in $|\nestedA|$ and polynomial time in $|\markov|$.
\end{lem}
\begin{proof}
Let $\nestedA$ be a deterministic $(\fsup;\fsum^+)$-automaton, $\markov$ be a Markov chain and let $\const$ be a threshold in the distribution question.
Consider $\nestedA'$ defined as $\nestedA$ considered as a $(\fsup;\fBsum{B})$-automaton with $B = \const+1$.
Observe that for every word $w$ we have $\valueL{\nestedA}(w) \leq \const$ if and only if
$\valueL{\nestedA'}(w) \leq \const$. Therefore,
$\distrib_{\markov,\nestedA}(\const) = \distrib_{\markov,\nestedA'}(\const)$.
The latter value can be computed in exponential time in $\nestedA$ and polynomial time in $|\markov|$ (Lemma~\ref{l:infSolutions}).
\end{proof}

\begin{thm}%
\label{th:compInf}
Let $g \in \FinVal$.
The complexity results for the probabilistic questions for $(\finf; g)$-automata
and $(\fsup,g)$-automata are summarized in Table~\ref{tab:compInf}, with the exception
of the expected question of $(\fsup; \fsum^+)$-automata.
\end{thm}

\begin{table}[ht]
\centering
\begin{tabular}{|c|c|c|c|} 
\hline 
& $\fmin$, $\fmax$, & \multirow{2}{*}{$\fsum$} \\
& $\fBsum{B},\fsum^+$ &  \\
\hline 
Expected value & \multirow{3}{*}{$\EXPTIME$~(Lemma~\ref{l:infSolutions},~\ref{l:infSumSolution},~\ref{l:distributionSumPlusDecidable})} & \multirow{3}{*}{\uncomp}\\
\cline{1-1}
{Distribution} & \multirow{3}{*}{$\PSPACE$-hard~(Lemma~\ref{l:hardness-for-det-inf})} & \multirow{3}{*}{(Lemma~\ref{l:inf-prob-undec})}\\
\cline{1-1}
Almost sure & &  \\
distribution &  & \\
\hline 
Approximate: & \multicolumn{2}{|c|}{\multirow{2}{*}{$\EXPTIME$~(Lemma~\ref{l:infSolutions},~\ref{l:infSumSolution})}} \\
(a)~expected value & \multicolumn{2}{|c|}{\multirow{2}{*}{\#P-hard~(Lemma~\ref{l:hardness-for-det-inf})}} \\
(b)~distribution & \multicolumn{2}{|c|}{} \\
\hline 
\end{tabular}
\caption{The complexity results for various problems for deterministic NWA with $\finf$ and $\fsup$ value functions,
with exception of the expected question of $(\fsup,\fsum^+)$-automata which is open.
Columns represent slave-automata value functions, rows represent probabilistic questions.
}%
\label{tab:compInf}
\end{table}

\smallskip\noindent{\em Open question}.
The decidability of the expected question of $(\fsup; \fsum^+)$-automata is open.
This open problem is related to the language inclusion problem of deterministic
$(\fsup; \fsum^+)$-automata which is also an open problem.

\begin{rem}[Contrast with classical questions]%
\label{remark:Inf-classical-vs-probabilistic}
Consider Table~\ref{tab1} for the classical questions and
our results established in Table~\ref{tab:compInf} for probabilistic questions.
There are some contrasting results, such as, while for
$(\fsup,\fsum)$-automata the emptiness problem is undecidable,
the approximation problems are decidable.
\end{rem}

\begin{rem}[Contrast of $\fliminf$ vs $\finf$]%
\label{remark:LimInf-vs-Inf}
We remark on the contrast of the $\fliminf$ vs $\finf$ value functions.
For the classical questions of emptiness and universality, the complexity and
decidability always coincide for $\fliminf$ and $\finf$ value functions for NWA
(see Table~\ref{tab1}).
Surprisingly we establish that for probabilistic questions there is a substantial
complexity gap: while the $\fliminf$ problems can be solved in polynomial time, the
$\finf$ problems are undecidable, $\PSPACE$-hard, and even $\#P$-hard for approximation.
\end{rem}

\section{Results on non-deterministic automata}%
\label{s:nondeterminism}
In this section, we briefly discuss non-deterministic NWA evaluated on Markov chains.
First, we discuss the definition of random variables defined by non-deterministic NWA\@. Next, we present two negative results.

\smallskip\noindent{\em Non-deterministic NWA as random variables}.
A non-deterministic NWA $\nestedA$ defines a function $h \colon \Sigma^{\omega} \to \R$ as in the deterministic case via
$h(w) = \valueL{\nestedA}(w)$.
Recall that $\valueL{\nestedA}(w) = \inf_{\pi \in \Acc(w)} f(\pi)$, where $\Acc(w)$ is the set of accepting runs of $\nestedA$ on $w$.
We show that $h$ is measurable w.r.t.\ any probability measure given by a Markov chain.
It suffices to show that for every interval $\lopen{-\infty,x}$, its preimage $h^{-1}[\lopen{-\infty,x}]$ is measurable.
Consider a set $A_x \subseteq \Sigma^{\omega} \times \masterStates^{\omega} \times {(\Z \cup \{\bot\})}^{\omega}$ of words with runs of $\nestedA$ of
the value at most $x$. More precisely, let $f$ be the the master value function of $\nestedA$.
     Then, $(w,\masterRun,\alpha) \in A_x$ if and only if
(1)~$\masterRun$ is a run of the master automaton of $\nestedA$ on $w$,
(2)~for every $i \in \N$, the slave automaton invoked by the master automaton at position $i$ has a finite run on $w[i,\infty]$ of the value
$\alpha[i]$,
(3)~$f(\alpha) \leq x$.
Observe that $A_x$ is a Borel set in the product topology. The preimage $h^{-1}[\lopen{-\infty,x}]$ is
the projection of the Borel set $A_x$, and hence it is an \emph{analytic set}, which is measurable~\cite{kechris}.

\smallskip\noindent{\em Conceptual difficulty}.
The evaluation of a non-deterministic (even non-nested) weighted automaton over a Markov
chain is conceptually different as compared to the standard model of
Markov decision processes (MDPs).
Indeed, in an MDP, probabilistic transitions are interleaved with non-deterministic transitions,
whereas in the case of an automaton, it runs over a word that has been already generated by
the Markov chain.
In MDPs, the strategy to resolve non-determinism can only rely on the past, whereas
in the automaton model the whole future is available (i.e., there is a crucial distinction
between online vs offline processing of the word).
Below we present an example to illustrate
this conceptual problem.

\begin{exa}
Consider a non-deterministic $\flimavg$-automaton $\aut$, depicted in Figure~\ref{fig:nondet-vs-MDP}.
Intuitively, the automaton processes a given word in blocks of letters $a,b$ separated by letters $\#$.
At the beginning of every block it decides whether the value of this block is
the number of $a$ letters $n_a$ minus the number of $b$ letters $n_b$ divided by $n_a + n_b$
(i.e., $\frac{n_a - n_b}{n_a + n_b}$) or
the opposite (i.e., $\frac{n_b - n_a}{n_a + n_b}$).
\begin{figure}
\centering
\begin{tikzpicture}

\node[state,accepting] (Q0) at (0,0) {$q_0$};
\node[state,accepting] (Q1) at (3,0) {$q_1$};

\draw[->,loop left] (Q0) to node[left]{$(\#,0)$} (Q0);
\draw[->,loop above] (Q0) to node[above]{$(a,1)$} (Q0);
\draw[->,loop below] (Q0) to node[below]{$(b,-1)$} (Q0);

\draw[->,loop right] (Q1) to node[right]{$(\#,0)$} (Q1);
\draw[->,loop above] (Q1) to node[above]{$(a,-1)$} (Q1);
\draw[->,loop below] (Q1) to node[below]{$(b,1)$} (Q1);

\draw[->,bend left] (Q0) to node[above]{$(\#,0)$} (Q1);
\draw[->,bend left] (Q1) to node[below]{$(\#,0)$} (Q0);
\end{tikzpicture}
\caption{An example of non-deterministic automaton, in which non-deterministic choices has to ``depend on the future'' in order to obtain the infimum.}%
\label{fig:nondet-vs-MDP}
\vspace{-1em}
\end{figure}
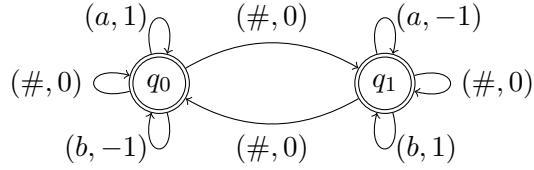
Let $\calU$ be the uniform distribution on infinite words over $\Sigma$.
Suppose that the expected value of $\aut$ w.r.t. $\calU$ is evaluated as in MDPs case, i.e.,
non-deterministic choices depend only on the read part of the word.
Then, since the distribution is uniform, any strategy results in the same expected value, which is equal to $0$.
Now, consider $\expected_{\calU}(\aut)$. The value of every block is at most $0$ as the automaton works over fully
generated words and at the beginning of each block can guess whether the number of $a$'s or $b$'s is greater.
Also, the blocks $a\#, b\#$ with the average  $-\frac{1}{2}$ appear with probability $\frac{2}{9}$, hence
$\expected_{\calU}(\aut) < -\frac{1}{9}$.
Thus, the result of evaluating a non-deterministic weighted automaton over a Markov chain is different than evaluating
it as an MDP\@.
\end{exa}

\smallskip\noindent{\em Non-deterministic $\flimavg$-automata under probabilistic semantics}.
Non-deterministic $\flimavg$-automata evaluated over Markov chains have been studied in~\cite{concur18}.
It has been established that the expected value or the distribution of such automata (over the uniform distribution) can be irrational and these values are not computable.
This is in stark contrast to deterministic $\flimavg$-automata, where the answers are rational and can be computed in polynomial time~\cite{BaierBook}.

\smallskip\noindent{\em Computational difficulty.}
In contrast to our polynomial-time algorithms for the probabilistic questions for
deterministic $(\flimsup,\fsum)$-automata, we establish the following
undecidability result for the non-deterministic automata with width~1.
Lemma~\ref{th:nondeterminism-is-hard} implies Theorem~\ref{c:nondet-undecidable}.

\begin{restatable}{lem}{NondeterminismLemma}%
\label{th:nondeterminism-is-hard}
The following problem is undecidable: given a non-deterministic $(\flimsup; \fsum)$-automaton $\nestedA_M$ of width $1$,
decide whether $\probability(\{ w \mid \valueL{\aut_M}(w) = 0 \}) = 1$ or $\probability(\{ w \mid \valueL{\aut_M}(w) = -1 \}) = 1$ w.r.t.
the uniform distribution on infinite words.
\end{restatable}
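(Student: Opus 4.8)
The plan is to reduce from the halting problem for deterministic two-counter (Minsky) machines, which is undecidable. Given such a machine $M$, I would construct a non-deterministic $(\flimsup;\fsum)$-automaton $\nestedA_M$ of width $1$ over the binary alphabet whose slave automata return values only in $\{0,-1\}$, so that the promise that either $\prob_\calU(\{w:\nestedA_M(w)=0\})=1$ or $\prob_\calU(\{w:\nestedA_M(w)=-1\})=1$ always holds, and so that which of the two occurs records whether $M$ halts. Since the value of a word is the infimum over accepting runs, and the slaves are two-valued, $\nestedA_M(w)=-1$ holds exactly when \emph{some} accepting run can force cofinitely many slaves to return $-1$, and $\nestedA_M(w)=0$ otherwise; this is the existential quantifier I exploit.

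The key structural observation is that under the uniform measure $\calU$ every fixed finite string occurs in almost every $w\in\{0,1\}^\omega$, in fact infinitely often (Borel--Cantelli / ergodicity of the Bernoulli shift). I fix a binary encoding of configurations of $M$ as counter blocks, so that a halting computation of $M$ is a single concrete finite string $h$. The automaton $\nestedA_M$ operates in two modes. In the \emph{default} mode it reads the word invoking trivial slaves that return $0$. Non-deterministically it may guess a position at which an occurrence of a valid halting computation begins, invoke a single verifying slave on the guessed finite segment, and — only if that slave accepts — switch irrevocably into a \emph{sink} mode in which every subsequent slave returns $-1$. A run in the sink mode has $\flimsup$ value $-1$; a run staying in the default mode has value $0$; and since slave values lie in $\{0,-1\}$, every accepting run has $\flimsup\geq -1$. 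If $M$ halts, $h$ is embedded in almost every $w$, so a run can align to an occurrence, verify it, and reach $-1$, giving $\nestedA_M(w)=-1$ almost surely; if $M$ does not halt, no such $h$ exists, the verifying slave never accepts, no run can legitimately switch, and $\nestedA_M(w)=0$ almost surely. Deciding which equality holds would therefore decide halting.

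The verifying slave is where the $\fsum$ value function does the work, and it is the analogue inside a single finite-word automaton of the monitor-counter simulation underlying Theorem~\ref{th:undecidable-limsup} (transported to the nested setting by Lemma~\ref{l:mc-vs-nested}). The finite-control correctness of each transition is a regular property checked by the slave's states; the only unbounded data are the counter magnitudes, and these are validated by a telescoping $\fsum$: reading the counter block of one configuration with positive weights and of the successor configuration with negative weights forces the running sum to track the counter difference, so that the prescribed $\pm 1$ updates are certified exactly when the slave's accepting run closes with the correct balance. Width $1$ is maintained throughout since at most one slave (trivial, verifying, or sink) is active at any position.

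The main obstacle is guaranteeing the soundness and completeness of this existential guess-and-verify against the genuinely universal condition ``$h$ is a \emph{correct} computation''. I must design the encoding and the verifying $\fsum$-slave so that it accepts a guessed segment \emph{only} when that segment is a valid halting computation, with no spurious acceptance that would let a run switch to the sink mode when $M$ does not halt — this is precisely the delicate interaction between the infimum over non-deterministic runs and the $\flimsup$, and it is what keeps the value cleanly two-valued. The second point requiring care is the measure-theoretic step: establishing that ``some run reaches the sink mode'' is a tail event, hence of probability $0$ or $1$, and that it has probability $1$ exactly in the halting case, via the almost-sure infinitely-often embedding of the finite witness $h$.
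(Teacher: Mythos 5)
Your reduction hinges on a step that the NWA model cannot support: a verifying slave that \emph{accepts} a finite segment if and only if it is a valid halting computation of $M$. In an NWA, a slave is a weighted automaton; its Boolean acceptance is determined solely by reaching an accepting state, hence the set of segments it accepts is a regular language, completely independent of the accumulated $\fsum$ value --- the automaton cannot test whether ``the accepting run closes with the correct balance'' (that would make it a genuine counter machine rather than a weighted/monitor-counter device, and is exactly what the model forbids). The telescoping sum can only flow into the \emph{value} returned by the slave, and the master cannot branch on returned values; moreover a single returned value is invisible to $\flimsup$, which ignores every finite prefix, so a one-shot verification contributes nothing to the word's value. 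Consequently your irrevocable mode switch can only be gated by the slave's regular language $L_M$. For the reduction to work you would need: if $M$ halts, the halting computation lies in $L_M$; and if $M$ does not halt, $L_M$ must contain \emph{no} string whatsoever --- because any fixed string in $L_M$ occurs almost surely (infinitely often) in a random word, so a run could spuriously switch to the sink mode and both cases would yield value $-1$ almost surely. Thus you need $L_M \neq \emptyset$ exactly when $M$ halts; since $L_M$ is effectively constructed from $M$ and emptiness of regular languages is decidable, this would decide halting outright. So the verifying slave you require cannot be constructed, and the gap is not a matter of careful encoding --- the approach is unrepairable in this form.

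The paper's proof resolves precisely this difficulty by reversing the polarity of the non-determinism: instead of certifying that a block \emph{is} a valid computation (a universal property, inaccessible to infimum-over-runs semantics gated by regular acceptance), runs certify that blocks are \emph{invalid}. The word is read in blocks $\$u\$$; at each block the master non-deterministically guesses which consistency condition $u$ violates and launches a slave that either checks the finite-control condition (returning $-1$ on violation, $0$ otherwise) or guesses the position and sign of a counter discrepancy and returns a negative value exactly when a genuine violation exists; in addition, a slave returning $-1$ is launched at the end of every block. Then in a valid block \emph{every} run sees a slave value $0$, while in an invalid block the correctly guessing run sees only values $\leq -1$; hence the word's value is $0$ iff infinitely many blocks encode valid accepting computations, and $-1$ otherwise. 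Almost-sure infinitely-often occurrence of the (single) halting computation then gives value $0$ almost surely when $M$ halts, and value $-1$ on all words when it does not --- the opposite labelling from yours, achieved without ever needing a slave whose acceptance tests counter consistency. This dual ``guess the violation'' scheme is the missing idea.
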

\begin{proof}
In the following, we discuss how to adapt the proof
of Theorem~\ref{th:undecidable-limsup} to prove this lemma.
First, observe that we can encode any alphabet $\Sigma$ using a two-letter alphabet $\{0,1\}$, therefore we will present our argument
for multiple-letters alphabet as it is more convenient.
Given a two-counter machine $\M$ we construct
 a non-deterministic $(\flimsup; \fsum)$-automaton $\nestedA_M$ such that $\valueL{\aut_M}(w) = 0$ if and only if $w$
contains infinitely many subsequences that correspond to valid accepting computations of $\M$.
As in the proof of Theorem~\ref{th:undecidable-limsup}, for every subsequence $\$ u \$$, where $u$ does not contain $\$$, we check whether 
$u$ is an encoding of a  valid accepting computation of $\M$.
To do that, we check conditions (C1) and (C2) as in the proof of Theorem~\ref{th:undecidable-limsup}.
At the letter $\$$, the master automaton non-deterministically decides whether $u$ violates (C1) or (C2) and 
either starts a slave automaton checking (C1) or (C2).
The slave automaton checking (C1) works as in the proof of Theorem~\ref{th:undecidable-limsup}.
It returns $-1$ if (C1) is violated and $0$ otherwise.
The slave automaton checking (C2) non-deterministically picks the position of invocation of a slave automaton
from the proof of Theorem~\ref{th:undecidable-limsup} that returns a negative value.
Finally, at the letter $\$$ following $u$, the master automaton starts the slave automaton that returns the value $-1$.
It follows that the supremum of all values of slave automata started at $u\$$ is either $-1$ or $0$.
By the construction, there is a (sub) run on $u\$$ such that the supremum of the values of all slave automata is
$-1$ if and only if $u$ does not encode valid accepting computation of $\M$.
Otherwise, this supremum is $0$.
Therefore, the value of the word $w$ is $0$ if and only if $w$ contains
infinitely many subsequences that correspond to valid accepting computations of $\M$.
Now, if $\M$ has at least one valid accepting computation $u$, then almost all words contain infinitely many
occurrences of $u$ and almost all words have value $0$. Otherwise, all words have value $-1$.
\end{proof}

Lemma~\ref{th:nondeterminism-is-hard} implies that there is no terminating Turing machine that computes any of probabilistic questions.

\begin{restatable}{thm}{NondeterministicUncomputable}%
\label{c:nondet-undecidable}
All probabilistic questions (Q1-Q5) are undecidable for non-deterministic
$(\flimsup,\fsum)$-automata of width~1.
\end{restatable}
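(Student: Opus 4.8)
The plan is to derive Theorem~\ref{c:nondet-undecidable} directly from Lemma~\ref{th:nondeterminism-is-hard} by a uniform family of reductions, one for each of the questions Q1--Q5. Lemma~\ref{th:nondeterminism-is-hard} supplies a computable family of non-deterministic $(\flimsup;\fsum)$-automata $\nestedA_M$ of width~$1$ for which it is undecidable to tell apart the two promise cases
\[
\text{(I)}\quad \prob_{\calU}(\{w : \nestedA_M(w) = 0\}) = 1
\qquad\text{versus}\qquad
\text{(II)}\quad \prob_{\calU}(\{w : \nestedA_M(w) = -1\}) = 1 .
\]
Hence it suffices to show that a hypothetical algorithm answering \emph{any one} of Q1--Q5 on such automata would let us decide which of (I), (II) holds, contradicting the lemma. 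I would first record the (routine) well-definedness observation that in both cases $\nestedA_M$ equals $0$ (resp.\ $-1$) outside a null set, so the set of rejected words is null and the random variable is bounded almost surely; consequently $\expected_{\calU}(\nestedA_M)$ and $\distrib_{\calU,\nestedA_M}(\const)$ are well defined, finite, and insensitive to the behaviour on that null set.

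Concretely, I would fix the thresholds and the precision as follows. For the expected question (Q1), case~(I) yields $\expected_{\calU}(\nestedA_M)=0$ and case~(II) yields $\expected_{\calU}(\nestedA_M)=-1$, so computing the exact expected value and comparing it with $-\tfrac12$ decides between (I) and (II). For the distribution question (Q2) and the almost-sure question (Q5) I would use the threshold $\const=-1$: in case~(I) we have $\distrib_{\calU,\nestedA_M}(-1)=0$, whereas in case~(II) we have $\distrib_{\calU,\nestedA_M}(-1)=1$, so computing this value (Q2) or merely testing whether it equals~$1$ (Q5) separates the two cases. For the approximate variants I would take $\epsilon=\tfrac14$: an $\epsilon$-approximation of $\expected_{\calU}(\nestedA_M)$ lies in $[-\tfrac14,\tfrac14]$ in case~(I) and in $[-\tfrac54,-\tfrac34]$ in case~(II), settling Q3; and an $\epsilon$-approximation of $\distrib_{\calU,\nestedA_M}(-1)$ lies in $[-\tfrac14,\tfrac14]$ in case~(I) and in $[\tfrac34,\tfrac54]$ in case~(II), settling Q4. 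In each instance the two output intervals are disjoint, so the approximate answer already decides (I) versus (II).

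Since every reduction queries the assumed oracle only on the promise automata produced by Lemma~\ref{th:nondeterminism-is-hard}, the oracle's behaviour on non-promise inputs is irrelevant: a correct answer on the promise inputs suffices to solve the undecidable distinguishing problem. The genuinely hard part is Lemma~\ref{th:nondeterminism-is-hard} itself, which encodes an undecidable problem into the gap between almost-sure value $0$ and almost-sure value $-1$; this we are free to assume. The only points needing care here are the well-definedness claim above and the elementary check that the value gap of exactly $1$ between cases~(I) and~(II) is preserved by the chosen thresholds and by any precision $\epsilon<\tfrac12$. With these in place the theorem is an immediate consequence of the threshold choices.
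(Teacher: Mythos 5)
Your proposal is correct and takes essentially the same route as the paper, which simply observes that Lemma~\ref{th:nondeterminism-is-hard} implies Theorem~\ref{c:nondet-undecidable}: an oracle for any of Q1--Q5 would separate the two promise cases (almost-sure value $0$ versus almost-sure value $-1$), which the lemma shows to be undecidable. Your explicit threshold and precision choices ($\const=-1$, $\epsilon=\tfrac14$, comparison with $-\tfrac12$) and the well-definedness remark are exactly the routine details the paper leaves implicit.
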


\begin{table}
\centering
\begin{tabular}{|c|c|c|} 
\hline 
& Deterministic & Non-deterministic \\
\hline 
Emptiness & \multicolumn{2}{|c|}{Undecidable~(from~\cite{nested})} \\
\hline 
Probabilistic & \multirow{2}{*}{Polynomial time~(Theorem~\ref{th:compLimInf})} & \multirow{2}{*}{\uncomp~(Theorem~\ref{c:nondet-undecidable})} \\
questions & & \\
\hline 
\end{tabular}
\caption{Decidability and complexity status of the classical and probabilistic questions for $(\flimsup;\fsum)$-automata.
The negative results hold also for NWA of bounded width and automata with monitor counters.}%
\label{tab2}
\vspace{-1em}
\end{table}

\section{Discussion}%
\label{s:discussion}
In this section we discuss extensions of our main results.
We begin with the distribution question for NWA without the almost-sure acceptance condition.
Next, we show how to apply the obtained results to compute the probabilistic variant of the quantitative inclusion problem~\cite{Chatterjee08quantitativelanguages,nested}.
Finally, we discuss the parametric complexity of the questions considered in the paper.

\subsection{Non almost-sure acceptance}%
\label{s:no-almost-surely-accepting}
In Section~\ref{s:limit} we consider NWA that accept almost all words, i.e., they assign finite values to almost all words.
Dropping almost-sure acceptance condition does not change the complexity of the expected question, but it influences the complexity of the distribution question, which we discuss in this section.

We begin with the lower bounds for the distribution question for NWA without almost-sure acceptance condition.

\proofideas{} We show that removing the restriction that almost all words are accepted
changes the complexity of distribution questions.
The intuition behind this is that the condition ``all slave automata accept''
allows us to simulate (in a restricted way) the $\fsup$ master value function.

\begin{lem}
For all $f \in \InfVal$ and $g \in \FinVal$, we have
\begin{enumerate}
\item The distribution question for deterministic $(f;g)$-automata is $\PSPACE$-hard.
\item The approximate distribution question for deterministic $(f;g)$-automata is $\#P$-hard.
\end{enumerate}
\end{lem}
\begin{proof}
We say that an NWA is \emph{simple} if it is deterministic, accepts almost all words and its slave automata have only weights $0$ and $1$.
First, we show that the almost-sure distribution and the approximate distribution questions for simple $(\fsup;\fmax)$-automata are respectively $\PSPACE$-hard and
\#P-hard. Second, we reduce the almost-sure distribution and the approximate distribution problems for simple $(\fsup;\fmax)$-automata to
the corresponding problems for deterministic $(f;g)$-automata (which are not
almost-surely accepting). These two steps show (1) and (2).

We observe that $(\finf;\fmin)$-automata considered
in the proof of Lemma~\ref{l:hardness-for-det-inf} are simple; they are deterministic, accept almost all words and all slave automata have only weights $0,1$.
Given an $(\finf;\fmin)$-automaton $\nestedA$, let $\nestedA'$ be a $(\fsup;\fmax)$-automaton
obtained from $\nestedA$ by swapping weights $0$ and $1$ in all slave automata. Note that
for all words $w$, $\nestedA(w)$ returns $0$ (resp. $1$)  if and only if $\nestedA'$ returns $1$ (resp. $0$).
Therefore, $\distrib_{\calU,\nestedA}(0) = 1 - \distrib_{\calU,\nestedA'}(0)$.
It follows that the almost-sure distribution and the approximate distribution questions for simple
$(\fsup;\fmax)$-automata are respectively  $\PSPACE$-hard and \#P-hard.

Now, we show reduction from simple $(\fsup;\fmax)$-automata  to deterministic $(f;g)$-automata (which are not almost-surely accepting).
For a deterministic $\fmax$-automaton $\aut$ with only weights $0$ and $1$, we define $\aut^g$ as
a deterministic $g$-automaton obtained from $\aut$ by deletion of transitions of weight $1$.
Observe that $\aut^g$ returns $0$ whenever $\aut$ returns $0$, and it rejects whenever $\aut$ rejects or returns $1$.
This construction works for all $g$, which return $0$ on any sequence of $0$'s.
All value functions $g$ from $\FinVal$ have this property.
Now, consider any $f \in \InfVal$.
Given a simple $(\fsup;\fmax)$-automaton $\nestedA$,
 we apply to all slave automata $\slaveA$ of $\nestedA$ the transformation $\slaveA \to \slaveA^g$.
Let $\nestedA^f$ be the resulting $(f;g)$-automaton.
This NWA is deterministic and its slave automata return only value $0$ or reject (return $\infty$).
Therefore, for every $f \in \InfVal$ and every word $w$, we have
(a)~$\valueL{\nestedA^f}(w) = 0$ if and only if $\valueL{\nestedA}(w) = 0$, and
(b)~$\valueL{\nestedA^f}(w) = \infty$ if and only if $\valueL{\nestedA}(w) \in \{1,\infty\}$.
It follows that $\distrib_{\calU,\nestedA^f}(0)=\distrib_{\calU,\nestedA}(0)$.
Therefore, the almost-sure distribution (resp., the approximate distribution) problem for simple $(\fsup;\fmax)$-automata
reduces to the almost-sure distribution (resp., the approximate distribution) problem for deterministic $(f;g)$-automata.
\end{proof}

We now show the upper bound for the distribution question for NWA considered in Section~\ref{s:limit}.

\begin{restatable}{lem}{DistribNotUnicversal}%
\label{l:notUniversalDistrib}
Let $f \in \{\fliminf, \flimsup, \flimavg\}$ and let $g \in \FinVal$ be a value function.
Given a Markov chain $\markov$, a deterministic $(f;g)$-automaton $\nestedA$, and a
threshold $\const$ in binary, the value $\distrib_{\markov,\nestedA}(\const)$
can be computed in exponential time in $\nestedA$ and polynomial time in $\markov$.
Moreover, if $\nestedA$ has bounded width, then the above quantities can be computed in polynomial time.
\end{restatable}
\begin{proof}
Consider $\masterA \times \markov$, where $\masterA$ is the master automaton of $\nestedA$.
For all bottom SCCs of  $\masterA \times \markov$ either almost all words have an accepting run or all words are rejected (Lemma~\ref{l:in-scc-all-equal} --- acceptance is independent of the value function and hence it works for $\flimavg$ as well).
Moreover, in a bottom SCC in which almost all words are accepted, almost all words have the same value (Lemma~\ref{l:in-scc-all-equal} and Lemma~\ref{l:limavg-dist-poly}).
We compute these values in all almost-accepting bottom SCCs.
Next, we need to compute the probability of the set of words, which are accepted by $\nestedA$ and reach any almost-accepting bottom SCC, in which almost all words have the value not exceeding $\lambda$.

To do that, we consider $\nestedA$ as an $(\finf;\fBsum{B})$-automaton, transform it into an $\finf$-automaton $\nonnestedA$ (Lemma~\ref{l:bsum-to-inf}).
Then, we compute the probability of reaching the the corresponding SCCs in $\nonnestedA \times \markov$ with the standard reachability analysis~\cite{BaierBook}.
Recall that each state of $\nonnestedA$ contains a state of $\masterA$ as a component and hence we can identify states of
$\nonnestedA \times \markov$ that corresponds to the selected SCC of  $\masterA \times \markov$.
The size of $\nonnestedA \times \markov$ is exponential in $\nestedA$ and polynomial in $\markov$.
Moreover, if $\nestedA$ has bounded width, then
$\nonnestedA \times \markov$  is polynomial in $|\nestedA|$.
\end{proof}

\subsection{Quantitative inclusion}
In this section, we discuss \emph{probabilistic inclusion question}, which is the probabilistic variant of the quantitative inclusion problem for weighted automata.
We show how to adapt the results from Section~\ref{s:limit} and Section~\ref{s:nolimit} to establish decidability and complexity of the probabilistic inclusion question.
\smallskip

The following definition is common for weighted automata, automata with monitor counters and NWA, and hence we refer to them collectively as automata.
The \emph{probabilistic inclusion question} asks, given two automata $\aut_1, \aut_2$ and a Markov chain $\markov$, to compute the probability of the set
$\{ w \mid \valueL{\aut_1}(w) \leq \valueL{\aut_2}(w) \}$ w.r.t.\ the probability measure given by $\markov$.
\smallskip

\proofideas{}
Let $\markov$ be a Markov chain, $f \in \{\fliminf, \flimsup, \flimavg\}$ and let $\nestedA_1, \nestedA_2$ be deterministic $(f;\fsum)$-automata accepting almost all words.
For all such NWA, almost all words reaching the same bottom SSC (of the product $\masterA \times \markov$) have the same value (Section~\ref{s:limit}), and hence
to decide probabilistic inclusion we examine all pairs of bottom SSCs $B_1, B_2$ from $\nestedA_1$ and $\nestedA_2$ respectively, in which the value of $\nestedA_1$ does not exceed the value of $\nestedA_2$
and compute the probability of words that lead to $B_1$ in $\nestedA_1$ and $B_2$ in $\nestedA_2$. The sum of such probabilities is the answer to the quantitative inclusion problem.
These probabilities can be computed in polynomial time~\cite{BaierBook}.

For $f \in \{\finf,\fsup\}$ and $g \in \{\fmin, \fmax, \fBsum{B} \}$, deterministic $(f;g)$-automata $\nestedA_1, \nestedA_2$ are equivalent to deterministic $f$-automata $\nonnestedA_1, \nonnestedA_2$ respectively.
The automata $\nonnestedA_1, \nonnestedA_2$ have exponential size in $\nestedA_1$ and $\nestedA_2$ respectively.
The probabilistic inclusion question for deterministic $\finf$-automata (resp., $\fsup$-automata) can be computed in polynomial time, which gives us the exponential-time upper bound in $\nestedA_1$ and $\nestedA_2$.
Finally, the probabilistic inclusion question subsumes the distribution question, which gives us lower bounds.

\begin{thm}
The following conditions hold:
\begin{enumerate}
\item For all $f \in \{\fliminf, \flimsup, \flimavg\}$ and $g \in \FinVal$, the probabilistic inclusion question for deterministic almost-surely accepting $(f;g)$-automata can be solved in polynomial time.
\item For all $f \in \{\finf, \fsup\}$ and $g \in \{ \fmin, \fmax, \fBsum{B} \}$, the probabilistic inclusion question for deterministic $(f;g)$-automata is $\PSPACE$-hard and can be solved in
exponential time.
\item For all $f \in \{\finf, \fsup,\}$, the probabilistic inclusion question for deterministic $(f;\fsum)$-automata is uncomputable.
\end{enumerate}
\end{thm}

\begin{proof}
\Paragraph{$\fliminf$ and $\flimsup$ and $\flimavg$ value functions}.
We discuss the case of $g = \fsum$ as it subsumes other value functions from $\FinVal$.
Let $\markov$ be a Markov chain, $f \in \{\fliminf, \flimsup, \flimavg\}$ and let $\nestedA_1, \nestedA_2$ be deterministic $(f;\fsum)$-automata accepting almost all words.
Let $\masterA^1, \masterA^2$ be the master automata of respectively $\nestedA_1, \nestedA_2$.
We construct the product Markov chain $(\masterA^1 \times \masterA^2) \times \markov$ and compute all its bottom SCCs $S_1, \ldots, S_k$.
For almost all words $w$, both runs of respectively $\nestedA_1, \nestedA_2$ on $w$ finally reach one of these SCCs.
Note that  each $S_i$ projected to  $\masterA^1 \times \markov$ or $\masterA^2 \times \markov$ is still an SCC and hence almost all words $w$, whose run end up in $S_i$
have the same value in $\nestedA_1$  (resp., in $\nestedA_2$).
We compute these value for all SCCs  $S_1, \ldots, S_k$ and select these components, in which the value in $\nestedA_1$ does not exceed the value in $\nestedA_2$.
This can be done in polynomial time for $f \in \{\fliminf, \flimsup\}$ (Lemma~\ref{l:in-scc-all-equal}) and for $f = \flimavg$ (Lemma~\ref{l:limavg-dist-poly}).
Finally, we compute the probability of reaching the selected SCCs in  $(\masterA^1 \times \masterA^2)\times \markov$, which can be done in polynomial time~\cite{BaierBook}.
\smallskip

\Paragraph{$\finf$ and $\fsup$ value functions}.
We consider the case of $f = \finf$ as the case of $f = \fsup$ is symmetric.
First, consider the case of  $g \in \{\fmin, \fmax, \fBsum{B} \}$ and consider two deterministic $(\finf;g)$-automata $\nestedA_1, \nestedA_2$.
These NWA can be transformed to equivalent deterministic $\finf$-automata $\nonnestedA_1, \nonnestedA_2$ respectively, which have exponential size in $\nestedA_1$ and $\nestedA_2$ respectively (Lemma~\ref{l:bsum-to-inf}).
Now, let $x_1, \ldots x_n$ be weights of $\nonnestedA_1$. For each of these weights $x_i$, we compute the probability $p_i$ of the set of words such that the value of $\nonnestedA_1$ is $x_i$ and
the value of $\nonnestedA_2$ is at least $x_i$. To compute $p_i$, we construct the product Markov chain $\markov \times \masterA^1 \times \masterA^2$, remove from it
all transitions, which correspond to transition of $\nonnestedA_1$ or $\nonnestedA_2$ of weight less than $x_i$, and compute the probability of the set of paths which take at least once a transition of
$\nonnestedA_1$ of weight $x_i$. Finally, the answer to the probabilistic inclusion question for $\nonnestedA_1$ and $\nonnestedA_2$, and hence  $\nestedA_1$ and $\nestedA_2$, is the sum of $p_i$.

Observe that for $\nestedA_2$ that returns $\lambda$ for every word, the probabilistic inclusion problem becomes the distribution question for $\nestedA_1$.
\end{proof}

\subsection{Parametric complexity}
The problems we consider correspond to measuring performance (expectation or cumulative distribution) under stochastic environments, when
the specification is an NWA and the system is modeled by a Markov chain.
In this section we discuss the parametric complexity of the probabilistic problems, where the specification, represented by an NWA, is fixed.
We discuss the parametric complexity for different value functions for the master automaton:

\begin{itemize}
\item For $\fliminf, \flimsup, \flimavg$ value functions for the master automaton the expected question  solvable in polynomial time in $\nestedA$ and $\markov$ (Theorem~\ref{th:compLimInf} and Theorem~\ref{th:compLimAvg}).
The distribution question is solvable in polynomial time (in $\nestedA$ and $\markov$) as well,
but only for NWA that are almost-surely accepting (Theorem~\ref{th:compLimInf} and Theorem~\ref{th:compLimAvg}).
For NWA that are not almost-surely accepting Lemma~\ref{l:notUniversalDistrib} states that the distribution question can be solved in polynomial time in $|\markov|$.

\item For $f \in \{\finf,\fsup\}$ and $g \in \{\fmin, \fmax, \fBsum{B}\}$, a deterministic $(f;g)$-automaton $\nestedA$ can be transformed to an equivalent  deterministic $f$-automaton $\nonnestedA$ (Lemma~\ref{l:bsum-to-inf}).
The size of  $\nonnestedA$ is exponential in $\nestedA$. However, if $\nestedA$ is fixed, the complexity of all probabilistic questions is the same as for weighted automata; they can be solved in polynomial time
(Lemma~\ref{t:weighted-inf-expected}).

\item For $f \in \{\finf,\fsup\}$ and $g \in \{\fsum, \fsum^+\}$, the approximate expected question and the approximate distribution questions for $(f;g)$-automata
can be solved in polynomial time in $\markov$ and the length of the binary representation of $\epsilon$ (Lemma~\ref{l:infSumSolution}).

\item For $f \in \{\finf,\fsup\}$, the distribution question for $(f;\fsum^+)$-automata can be solved in polynomial time in $|\markov|$ (Lemma~\ref{l:bound-from-below} and Lemma~\ref{l:distributionSumPlusDecidable}).

\item Finally, for deterministic $(\finf;\fsum)$-automata and $(\fsup;\fsum)$-automata, surprisingly the expected question, the distribution question and the almost-sure distribution question remain uncomputable. We sketch this in the following Lemma~\ref{l:parametric-uncomputable}.
\end{itemize}

\noindent
In summary, in all computable cases fixing the NWA makes the complexity of all probabilistic problems drop to polynomial time.
Interestingly, the uncomputable cases remain uncomputable.
This is similar to the parametric complexity analysis from~\cite{nested}, where fixing the size of slave automata reduces the complexity of the classic decision questions for NWA, but the undecidable cases remain undecidable.

\proofideas{}
We modify the construction from the proof of Lemma~\ref{l:inf-prob-undec}, where given a Minsky machine $\M$,
we construct an $(\finf;\fsum)$-automaton $\nestedA$ that returns $0$ if the input word $w$ is of the form $u \$ w'$
and $u$ encodes an accepting computation of $\M$. Otherwise it returns negative values.
The constructed NWA checks two types of conditions:
(C1)~Boolean conditions stating that the sequence of configurations is consistent with instructions of $\M$, and
(C2)~quantitative conditions, which imply that if the counters values are inconsistent with increments and decrements, the NWA returns negative values.
We observe that (C2) are independent of $\M$, while conditions (C1) are Boolean and can be checked with a finite automaton, or they can be enforced by the Markov chain.
Thus, given a Minsky machine $\M$, we construct a Markov chain $\markov_{\M}$ checking (C1), while the NWA $\nestedA'$ that checks (C2) is independent of $\M$, and hence can be fixed.

\begin{lem}%
\label{l:parametric-uncomputable}
The following conditions hold:
\begin{enumerate}
\item There exists a deterministic $(\finf;\fsum)$-automaton $\nestedA$ such that the problem: given a Markov chain $\markov$, decide whether $\distrib_{\nestedA, \markov}(-1) = 1$ is undecidable.
\item There exist  deterministic  $(\finf;\fsum)$-automata $\nestedA_1, \nestedA_2$ such that the problem: given a Markov chain $\markov$, decide whether $\expected_{\markov}(\nestedA_1) = \expected_{\markov}(\nestedA_2)$ is undecidable.
\end{enumerate}
\end{lem}
\begin{proof}
Let $\Sigma = \{ 0,1,2,\#,\$\}$.
We construct an NWA working over words of the form $u \$ w'$, where $u \in {(0^* 1^* 2^* \#)}^*$.
Each block $0^n 1^k 2^j$ represents a configuration of some Minsky machine such that the machine is in the state $q_n$ the value of the first counter is $k$ and the value of the second counter is $j$.
We define $(\finf;\fsum)$-automaton $\nestedA$ that for each counter invokes two slave automata, which respectively compute the difference between two consecutive values of the same counter plus $1$ and its inverse.
Moreover, $\nestedA$ invokes one slave automaton returning value $0$ to ensure that the value of each word is at most $0$.
Thus, $\nestedA$ returns $0$ on an input word $u \$ w'$ if the counter values in all blocks differ by at most $1$, i.e., for any infix $\# 0^{n_1} 1^{k_1} 2^{l_1} \# 0^{n_2} 1^{k_2} 2^{l_2} \#$ we have $|k_1 - k_2 | \leq 1$ and $|l_1 - l_2| \leq 1$.
On all other words it returns values less or equal to $-1$.

Now, given a Minsky machine $\M$ we define a Markov chain $\markov$ that generates sequences of the form $\# 0^{n[0]} 1^{k[0]} 2^{j[0]} \# 0^{n[1]} 1^{k[1]} 2^{j[1]} \ldots \$ \Sigma^{\omega}$ consistent with the instructions from $\M$.
More precisely, using Boolean conditions we specify that
(i)~states encoded by $0^{n}$ change according to the instructions of $\M$ (we can verify zero and non-zero tests),
(ii)~the first configuration and the configuration before $\$$ are respectively initial and the final configuration of $\M$, and
(iii)~values of counters modulo $3$ change according to the instructions of $\M$.
Observe that if there is a word $u \$ w'$ generated by $\markov$ has value $0$ assigned by $\nestedA$, then the values of counters in $u$ change by at most $1$ and the change modulo $3$ is verified in condition (iii).
Both conditions imply that the counters change according to instructions of $\M$.
Therefore, the word $u$ encodes an accepting computation of $\M$.
It follows that for the NWA $\nestedA$, the problem given a Markov chain $\markov$ decide whether $\distrib_{\nestedA, \markov}(-1) = 1$ is undecidable.
In consequence, the expected and the distribution questions are undecidable (see Lemma~\ref{l:inf-prob-undec}).

For the expected value, we reduce the distribution question to the equality of the expected values as in the proof of (2) from Lemma~\ref{l:inf-prob-undec}.
\end{proof}

\section{Conclusions}
In this work we study the probabilistic questions related to NWA and
automata with monitor counters.
We establish the relationship between NWA and automata with monitor counters,
and present a complete picture of decidability for all the
probabilistic questions we consider.
Our results establish a sharp contrast of the decidability and complexity
of the classical questions (of emptiness and universality) and
the probabilistic questions for deterministic automata
(see Tables~\ref{tab1},~\ref{tab:compInf} and Theorems~\ref{th:compLimInf},~\ref{th:compLimAvg}).
In addition, there is also a sharp contrast for deterministic and
non-deterministic automata.
For example, for $(\flimsup,\fsum)$-automata, the classical questions are
undecidable for deterministic and non-deterministic automata, while the
probabilistic questions are decidable for deterministic automata, but remain
undecidable for non-deterministic automata (see Table~\ref{tab2}).
We have some complexity gap (e.g., $\EXPTIME$ vs $\PSPACE$) which is due to the fact
that the computational questions we consider for Markov chains are in $\PTIME$
(as compared to $\NLOGSPACE$ for graphs), and we need to evaluate exponential-size
Markov chains. Closing the complexity gap is an interesting open question.

\section{Acknowledgments}
{
This research was funded in part by the European Research Council (ERC) under grant
agreement 267989 (QUAREM), by the Austrian Science Fund (FWF) projects S11402-N23 (RiSE) and Z211-N23 (Wittgenstein Award),
FWF Grant No P23499- N23, FWF NFN Grant No S11407-N23 (RiSE/SHiNE),
ERC Start grant (279307: Graph Games), Vienna Science and Technology Fund (WWTF) through project ICT15--003 and
by the National Science Centre (NCN), Poland under grant 2014/15/D/ST6/04543.}

\bibliographystyle{alpha}
\bibliography{papers}

\newcommand{\etalchar}[1]{$^{#1}$}
\begin{thebibliography}{BKKW14}

\bibitem[ABK14]{AlmagorBK14}
Shaull Almagor, Udi Boker, and Orna Kupferman.
\newblock Discounting in {LTL}.
\newblock In {\em {TACAS}, 2014}, pages 424--439, 2014.

\bibitem[ADD{\etalchar{+}}13]{DBLP:conf/lics/AlurDDRY13}
Rajeev Alur, Loris D'Antoni, Jyotirmoy~V. Deshmukh, Mukund Raghothaman, and
  Yifei Yuan.
\newblock Regular functions and cost register automata.
\newblock In {\em LICS 2013}, pages 13--22, 2013.

\bibitem[BBC{\etalchar{+}}11]{BBCFK11}
Tom{\'{a}}s Br{\'{a}}zdil, V{\'{a}}clav Brozek, Krishnendu Chatterjee, Vojtech
  Forejt, and Anton{\'{\i}}n Kucera.
\newblock Two views on multiple mean-payoff objectives in {M}arkov decision
  processes.
\newblock In {\em {LICS} 2011}, pages 33--42, 2011.

\bibitem[BCFK15]{Forejt}
Tom{\'{a}}s Br{\'{a}}zdil, Krishnendu Chatterjee, Vojtech Forejt, and
  Anton{\'{\i}}n Kucera.
\newblock Multigain: {A} controller synthesis tool for {MDPs} with multiple
  mean-payoff objectives.
\newblock In {\em {TACAS} 2015}, pages 181--187, 2015.

\bibitem[BCHK14]{BokerCHK14}
Udi Boker, Krishnendu Chatterjee, Thomas~A. Henzinger, and Orna Kupferman.
\newblock Temporal specifications with accumulative values.
\newblock {\em {ACM} {TOCL}}, 15(4):27:1--27:25, 2014.

\bibitem[BDK14]{Baier-CSL-LICS-1}
Christel Baier, Clemens Dubslaff, and Sascha Kl{\"{u}}ppelholz.
\newblock Trade-off analysis meets probabilistic model checking.
\newblock In {\em {CSL}-{LICS} 2014}, pages 1:1--1:10, 2014.

\bibitem[BGMZ10]{bollig2010pebble}
Benedikt Bollig, Paul Gastin, Benjamin Monmege, and Marc Zeitoun.
\newblock Pebble weighted automata and transitive closure logics.
\newblock In {\em {ICALP} 2010, Part {II}}, pages 587--598. Springer, 2010.

\bibitem[BK08]{BaierBook}
Christel Baier and Joost{-}Pieter Katoen.
\newblock {\em Principles of model checking}.
\newblock {MIT} Press, 2008.

\bibitem[BKKW14]{Baier-CSL-LICS-2}
Christel Baier, Joachim Klein, Sascha Kl{\"{u}}ppelholz, and Sascha Wunderlich.
\newblock Weight monitoring with linear temporal logic: complexity and
  decidability.
\newblock In {\em {CSL}-{LICS} 2014}, pages 11:1--11:10, 2014.

\bibitem[BMM14]{BouyerMM14}
Patricia Bouyer, Nicolas Markey, and Raj~Mohan Matteplackel.
\newblock Averaging in {LTL}.
\newblock In {\em {CONCUR} 2014}, pages 266--280, 2014.

\bibitem[BMR{\etalchar{+}}18]{DBLP:journals/acta/BouyerMRLL18}
Patricia Bouyer, Nicolas Markey, Mickael Randour, Kim~G. Larsen, and Simon
  Laursen.
\newblock Average-energy games.
\newblock {\em Acta Inf.}, 55(2):91--127, 2018.

\bibitem[CD11]{CD11}
Krishnendu Chatterjee and Laurent Doyen.
\newblock Energy and mean-payoff parity {M}arkov {D}ecision {P}rocesses.
\newblock In {\em {MFCS} 2011}, pages 206--218, 2011.

\bibitem[CDH09a]{Chatterjee:2009:AWA:1789494.1789497}
Krishnendu Chatterjee, Laurent Doyen, and Thomas~A. Henzinger.
\newblock Alternating weighted automata.
\newblock In {\em FCT'09}, pages 3--13. Springer, 2009.

\bibitem[CDH09b]{ChatterjeeDH09LimInf}
Krishnendu Chatterjee, Laurent Doyen, and Thomas~A. Henzinger.
\newblock A survey of stochastic games with limsup and liminf objectives.
\newblock In {\em {ICALP} 2009, Part {II}}, pages 1--15, 2009.

\bibitem[CDH10a]{DBLP:journals/corr/abs-1007-4018}
Krishnendu Chatterjee, Laurent Doyen, and Thomas~A. Henzinger.
\newblock Expressiveness and closure properties for quantitative languages.
\newblock {\em LMCS}, 6(3), 2010.

\bibitem[CDH10b]{Chatterjee08quantitativelanguages}
Krishnendu Chatterjee, Laurent Doyen, and Thomas~A. Henzinger.
\newblock Quantitative languages.
\newblock {\em ACM TOCL}, 11(4):23, 2010.

\bibitem[CFW13]{CFW13}
Krishnendu Chatterjee, Vojtech Forejt, and Dominik Wojtczak.
\newblock Multi-objective discounted reward verification in graphs and {MDPs}.
\newblock In {\em {LPAR}}, pages 228--242, 2013.

\bibitem[Cha07]{Cha07}
Krishnendu Chatterjee.
\newblock Markov decision processes with multiple long-run average objectives.
\newblock In {\em FSTTCS}, pages 473--484, 2007.

\bibitem[CHJS15]{probabilisticMeasuriung}
Krishnendu Chatterjee, Thomas~A. Henzinger, Barbara Jobstmann, and Rohit Singh.
\newblock Measuring and synthesizing systems in probabilistic environments.
\newblock {\em J. {ACM}}, 62(1):9:1--9:34, 2015.

\bibitem[CHO16a]{nwa-mfcs}
Krishnendu Chatterjee, Thomas~A. Henzinger, and Jan Otop.
\newblock Nested weighted limit-average automata of bounded width.
\newblock In {\em {MFCS} 2016}, pages 24:1--24:14, 2016.

\bibitem[CHO16b]{conferenceVersion}
Krishnendu Chatterjee, Thomas~A. Henzinger, and Jan Otop.
\newblock Quantitative automata under probabilistic semantics.
\newblock In {\em {LICS} 2016}, pages 76--85, 2016.

\bibitem[CHO16c]{DBLP:conf/sas/ChatterjeeHO16}
Krishnendu Chatterjee, Thomas~A. Henzinger, and Jan Otop.
\newblock Quantitative monitor automata.
\newblock In {\em {SAS} 2016}, pages 23--38, 2016.

\bibitem[CHO17]{nested}
Krishnendu Chatterjee, Thomas~A. Henzinger, and Jan Otop.
\newblock Nested weighted automata.
\newblock {\em {ACM} Trans. Comput. Log.}, 18(4):31:1--31:44, 2017.

\bibitem[CKK15]{CKK15}
Krishnendu Chatterjee, Zuzana Kom{\'{a}}rkov{\'{a}}, and Jan
  Kret{\'{\i}}nsk{\'{y}}.
\newblock Unifying two views on multiple mean-payoff objectives in {M}arkov
  {D}ecision {P}rocesses.
\newblock In {\em {LICS} 2015}, pages 244--256, 2015.

\bibitem[CMH06]{CMH06}
Krishnendu Chatterjee, Rupak Majumdar, and Thomas~A. Henzinger.
\newblock {M}arkov {D}ecision {P}rocesses with multiple objectives.
\newblock In {\em {STACS} 2006}, pages 325--336, 2006.

\bibitem[DKV09]{Droste:2009:HWA:1667106}
Manfred Droste, Werner Kuich, and Heiko Vogler.
\newblock {\em Handbook of Weighted Automata}.
\newblock Springer, 1st edition, 2009.

\bibitem[DR06]{DrosteR06}
Manfred Droste and George Rahonis.
\newblock Weighted automata and weighted logics on infinite words.
\newblock In {\em {DLT} 2006}, pages 49--58, 2006.

\bibitem[Fel71]{feller}
W.~Feller.
\newblock {\em An introduction to probability theory and its applications}.
\newblock Wiley, 1971.

\bibitem[FKN{\etalchar{+}}11]{FKN11}
Vojtech Forejt, Marta~Z. Kwiatkowska, Gethin Norman, David Parker, and Hongyang
  Qu.
\newblock Quantitative multi-objective verification for probabilistic systems.
\newblock In {\em TACAS}, pages 112--127, 2011.

\bibitem[FV96]{filar}
Jerzy Filar and Koos Vrieze.
\newblock {\em Competitive Markov decision processes}.
\newblock Springer, 1996.

\bibitem[HKNP06]{PRISM}
Andrew Hinton, Marta~Z. Kwiatkowska, Gethin Norman, and David Parker.
\newblock {PRISM:} {A} tool for automatic verification of probabilistic
  systems.
\newblock In {\em {TACAS} 2006}, pages 441--444, 2006.

\bibitem[Kec12]{kechris}
Alexander Kechris.
\newblock {\em Classical descriptive set theory}, volume 156.
\newblock Springer Science \& Business Media, 2012.

\bibitem[Min61]{minsky1961recursive}
Marvin Minsky.
\newblock Recursive unsolvability of {P}ost's problem of "tag" and other topics
  in theory of {T}uring machines.
\newblock {\em Annals of Mathematics}, pages 437--455, 1961.

\bibitem[MO18]{concur18}
Jakub Michaliszyn and Jan Otop.
\newblock Non-deterministic weighted automata on random words.
\newblock In {\em {CONCUR} 2018}, pages 10:1--10:16, 2018.

\bibitem[Moh02]{DBLP:journals/jalc/Mohri02}
Mehryar Mohri.
\newblock Semiring frameworks and algorithms for shortest-distance problems.
\newblock {\em J. Aut. Lang. \& Comb.}, 7(3):321--350, 2002.

\bibitem[Pap03]{papadimitriou2003computational}
Christos~H Papadimitriou.
\newblock {\em Computational complexity}.
\newblock Wiley, 2003.

\bibitem[Put94]{Puterman}
Martin~L. Puterman.
\newblock {\em {M}arkov {D}ecision {P}rocesses: Discrete Stochastic Dynamic
  Programming}.
\newblock Wiley, 1st edition, 1994.

\bibitem[Val79]{valiant1979complexity}
Leslie~G. Valiant.
\newblock The complexity of computing the permanent.
\newblock {\em Theoretical computer science}, 8(2):189--201, 1979.

\end{thebibliography}

\end{document}